\documentclass[pdflatex,sn-mathphys-num]{sn-jnl}

\usepackage{graphicx}%
\usepackage{multirow}%
\usepackage{amsmath,amssymb,amsfonts}%
\usepackage{amsthm}%
\usepackage{mathrsfs}%
\usepackage[title]{appendix}%
\usepackage{xcolor}%
\usepackage{textcomp}%
\usepackage{manyfoot}%
\usepackage{booktabs}%
\usepackage{algorithm}%
\usepackage{algorithmicx}%
\usepackage{algpseudocode}%
\usepackage{listings}%
\usepackage[capitalize]{cleveref}
\usepackage{subcaption}
\usepackage{longtable}
\usepackage{amsbsy}
\usepackage{dsfont}
\usepackage{bbding}

\theoremstyle{remark}
\newtheorem{remark}{Remark}

\newcommand{\Jcal}[2]{\mathcal{J}^{#1}_{#2}}

\def\~{\tilde}
\def\N{\mathbb N}
\def\R{\mathbb R}
\def\K{{\vec K}}
\newcommand{\lpar}{(}
\newcommand{\rpar}{)}
\newcommand{\lbr}{\left[}
\newcommand{\rbr}{\right]}
\newcommand{\pro}{\mathbb{P}}
\newcommand{\E}{\mathbb{E}}
\newcommand{\norm}[1]{\left\|#1\right\|}
\newcommand{\vi}{{\vec i}}
\newcommand{\vj}{{\vec j}}

\counterwithout{equation}{section} 
\newcommand{\A}{\mathcal A}

\newcommand{\one}{\mathds{1}}

\newcommand{\Zcal}{\mathcal{Z}}

\theoremstyle{thmstyleone}%
\newtheorem{theorem}{Theorem}
\newtheorem{lemma}{Lemma}
\newtheorem{definition}{Definition}%

\usepackage{amsthm}
\makeatletter
\newtheorem*{rep@theorem}{\rep@title}
\newcommand{\newreptheorem}[2]{%
\newenvironment{rep#1}[1]{%
 \def\rep@title{#2 \ref{##1}}%
 \begin{rep@theorem}}%
 {\end{rep@theorem}}}
\makeatother

\newreptheorem{theorem}{Theorem}
\newreptheorem{lemma}{Lemma}

\renewcommand{\orcid}[1]{\href{https://orcid.org/#1}{\orcidlogo}}

\begin{document}

\title[MRJ Scheduling with Continuous Resource Requirement]{Throughput-Optimal Multiresource-Job Scheduling with Continuous Requirement Distribution}


\author*[1]{\fnm{Heyuan} \sur{Yao} \orcid{0009-0007-6587-3485}}\email{heyuanyao2029@u.northwestern.edu}

\author[1]{\fnm{Willow} \sur{Kowalik}}\email{j8k9q7@u.northwestern.edu}

\author[1]{\fnm{Izzy} \sur{Grosof} \orcid{0000-0001-6205-8652}}\email{izzy.grosof@northwestern.edu}


\affil*[1]{\orgdiv{Department of Industrial Engineering and Management Sciences}, \orgname{Northwestern University}, \orgaddress{\street{2145 Sheridan Rd}, \city{Evanston}, \postcode{60208}, \state{IL}, \country{USA}}}



\abstract{
Modern computing systems process jobs with resource requirements such as CPU and memory, which are described by multiresource jobs (MRJ) queueing models. In practice, job resource requirements are spread out over so many values, that it is rare to see the same value twice. This pattern is best modeled by a continuous distribution of requirement values. However, the existing theoretical work on stability or throughput-optimality focuses on queueing models with class-based resource requirements. In class-based models, the number of distinct resource requirements must be small to demonstrate strong empirical performance, making them a poor match for these practical systems.

We introduce the first throughput-optimal family of scheduling policies for the continuous MRJ model, with both preemptive and nonpreemptive variants. We further introduce several efficient policy families, which remain throughput-optimal while considerably improving computational efficiency, under some distributional assumptions. We use a discretization approach, where we choose the discretization granularity based on the system load and the distribution of resource requirements. 
We validate the real-world applicability of our policies by comparing them against existing index-based policies on parametrized distributions and on datacenter trace data from the Google Borg scheduler, demonstrating state-of-the-art performance.
}

\keywords{Multiresource-Job, Scheduling, Queueing, Continuous Resource Requirement, Discretization-Based Policies, Throughput-Optimality}


\pacs[MSC Classification]{60K25, 68M20, 90B36}

\maketitle

\section{Introduction}
\label{sec: Introduction}

Modern computing systems, such as cloud platforms, datacenters, and personal machines, typically possess multiple kinds of computational resources (e.g., CPU, GPU, memory). These systems can be modeled as queues where jobs require certain amounts of these resources, in order to be served. We refer to these jobs as multiresource jobs.\footnote{When the system is equipped with only one resource type, we refer to these jobs as multiserver jobs.} The system is able to serve any set of multiresource jobs simultaneously, provided that the total requirement of each resource does not exceed the corresponding resource capacity.

A scheduling policy specifies which set of multiresource jobs should be served at any time. Since the 1980s, many policies have been extensively studied \cite{green1980queueing,brill1984queues,tassiulas1993dynamic,coffman2001bandwidth,stolyar2004maxweight,maguluri2012stochastic,rumyantsev2017stability,psychas2018randomized,grosof2022wcfs,grosof2022optimal,grosof2023reset,hong2022sharp,CHEN2026102525}, and their analyses include strong results such as system stability and throughput-optimality in their respective settings.

However, existing policies, such as MaxWeight \cite{stolyar2004maxweight,maguluri2012stochastic}, Randomized Multi-Resource Scheduling (RMS) \cite{psychas2017non}, Markovian Service Rate (MSR) \cite{chen2024analyzing}, are restricted to settings with a finite set of possible resource requirements, which are known as ``job classes". For instance, MaxWeight weights jobs by how many jobs of identical resource requirements are present. The larger the number of job classes, the less the empirical performance corresponds to the theoretical results. In particular, the number of jobs present must be of at least the order of the number of job classes for empirical performance to correspond to the theoretical results. These queue lengths are too large for practical purposes. In datacenter scheduling traces, hundreds of thousands of unique resource requirements are commonly seen \cite{tirmazi2020borg}.
Theoretical results, which only kick in above hundreds of thousands jobs, are not useful.

In real-world scheduling environments, for example, the first million jobs of the Borg trace \cite{tirmazi2020borg} have 3,668 unique memory requirements, 536,229 unique CPU requirements, and 949,217 unique (CPU, memory) pairs. As a result, the existing class-based policies used in the literature are not suitable for such practical settings. Seeing this mismatch, we introduce the \emph{continuous MRJ model}, in which the distribution of the resource requirement is continuous. While real systems do not literally have continuous resource requirements, the continuous MRJ model captures the behavior of rarely repeating resource requirements. In this model, we ask a question:
\begin{quote}
\emph{Does there exist a throughput-optimal family of scheduling policies in the continuous MRJ setting?}
\end{quote}

We resolve this question by developing the first throughput-optimal family of scheduling policies, the $\K$-discretized MaxWeight ($\K$-MW) family. $\K$-MW uses discretization intentionally, rather than assuming that the resource requirements are inherently well-discretized. Our proof builds off the existing class-based MaxWeight proof, with additional techniques to handle the overhead of discretization, which must be controlled and mitigated.

Many applications involve nonpreemptible jobs or require substantial preemption overhead. We therefore introduce the first nonpreemptive throughput-optimal family of scheduling policies: $\K$-discretized nonpreemptive Markovian Service Rate ($\K$-nMSR). For $\K$-nMSR, we use the same discretization approach as $\K$-MW and construct the policy using the offline precomputation method for the MSR policies introduced in \citet{chen2024analyzing}.

In addition, computational efficiency is essential for both the online optimization in MaxWeight-type policies and the offline precomputation in nMSR-type policies. For instance, when using standard class-based MaxWeight in systems with numerous classes and small expected service durations, the exponentially large number of possible service options may cause the system to spend more time updating service options than actually serving jobs. To achieve computational efficiency, we must avoid handling an extremely large candidate set. We introduce a new group of policies that restrict attention to smaller candidate sets, where we only consider a small fraction of the possible service options, during MaxWeight and nMSR's optimization steps. We call these smaller candidate sets ``efficient sets".

We start by considering only the service options that form extreme vertices of the convex hull of the set of all service options. By doing so, for example, we reduce the candidate set by around $200\times$ when $K = 60$ in the single-resource setting. Next, in the settings with symmetrical resource requirement distribution, our 2-Job efficient set reduces the number of service options to a quantity proportional to the number of discretization classes. In the single-resource setting with decreasing requirement density, our 2-Bucket efficient set likewise considers a number of service options proportional to the number of discretization classes. Our policies built on these three efficient sets are each throughput-optimal under corresponding distributional assumptions on the resource requirements. 

We use stimulation to extensively verify the performance, both throughput and mean response time, in both parameterized settings and a public Borg trace. Our simulation results verify the throughput optimality of the $K$-MW family and the Efficient $K$-MW families on a variety of requirement distributions. Moreover, both on these distributions and in experiments based on job requirements extracted from the Google Borg trace, our theoretically-motivated policies outperform in mean response time existing index-based policies: FCFS, First-Fit, Best-Fit, and Least-Server-First (LSF). Note that LSF is equivalent to naive MaxWeight in the continuous MSJ setting, because every job in its own class has equal weight. Naive MaxWeight then always serves most jobs, and packs the jobs in increasing order of their resource requirement.

This paper is organized as follows:
\begin{itemize}
    \item \cref{sec: prior work} reviews related prior work on multiserver and multiresource scheduling.
    \item \cref{sec: model} introduces the multiresource model under the continuous resource requirement setting, the $\K$-discretization scheme, and our proposed policies.
    \item \cref{sec: main results} presents our main theorems, including throughput-optimality results for each family of policies.
    \item \cref{sec: proofs} provides detailed proofs of the main theorems.
    \item \cref{sec: empirical results} compares our policies with existing policies via simulations.
\end{itemize}

\section{Prior Work}
\label{sec: prior work}

In this section, we review real-world scheduling practices in datacenters and supercomputers, as well as the theoretical development of scheduling policies for multiresource-job systems.

\subsection{Scheduling in Large-Scale Computing Settings}
\label{ssec: Scheduling in Large-Scale Computing Settings}
Traditional one-server-per-job queueing systems are not suitable for modeling many modern computing environments. A more accurate representation is provided by the multiserver-job (MSJ) system, where a job requires some amount of a single resource, and the multiresource-job (MRJ) system, where a job requires some amount of multiple resources. LLM handling infrastructure, such as within Microsoft Azure \cite{patel2024splitwise}, handles inference tasks, which vary by having a different number of tokens in different requests, corresponding to an MSJ model. The number of tokens implicitly corresponds to GPU and memory requirements.

Similarly, traditional datacenters process jobs that use varying amounts of resources in multiple categories, such as CPU, memory, and more, corresponding to an MRJ model. These multiresource requirements are present at the Google Borg \cite{reiss2012heterogeneity,clusterdata:Verma2015,tirmazi2020borg,clusterdata:Wilkes2020,clusterdata:Wilkes2020a,clusterdata:Wilkes2011,clusterdata:Reiss2011} and Google Omega \cite{schwarzkopf2013omega} systems, as well as Alibaba's Sigma and Fuxi systems \cite{guo2019limits,li2019deepjs,wu2019aladdin}. Finally, high-performance computing jobs submitted to supercomputers require a pre-specified number of nodes, for instance, at Argonne's Aurora \cite{allen2025aurora} and Oak Ridge's Frontier \cite{budiardja2023ready,atchley2023frontier} supercomputers, corresponding to an MSJ model.

Furthermore, the distribution of these resource requirements has extremely large support, subdivided smaller than any useful measure of precision. For instance, the traces available from Google Borg \cite{clusterdata:Wilkes2020,clusterdata:Wilkes2020a,clusterdata:Tirmazi2020} demonstrate that CPU and memory requests are specified as floats with at least 13 digits of precision, and have nearly no repeated pairs: around 95\% unique pairs over the first 1 million requirements. The RAM usage in Google Omega is recorded in bytes and is therefore specified down to the individual byte, e.g., 1074000000, 82587) \cite{schwarzkopf2013omega}. In Microsoft Azure, although token requirements are discretely represented, traces from multiple LLM inference services over a single day include $2339$ distinct context-token counts and $623$ distinct generated-token counts \cite{patel2024splitwise}. In all of these scenarios, class-based scheduling, where the ``class" refers to a set of jobs with identical resource requirements, becomes unsuitable, as two jobs with only minor differences in some resource requirements would be assigned to distinct classes. This observation motivates the need to consider scheduling policies under the settings where resource requirements are continuous or of large support.

\subsection{Multiserver-job and Multiresource-job Scheduling} 
\label{ssec: Multiserver-job Scheduling and Multiresource-job Scheduling}

We now review prior work on scheduling policies for the MSJ and MRJ settings. Some policies are designed primarily for the MSJ system and do not extend to the MRJ system.

The policies proposed in prior work can be broadly divided into two categories: index-based policies and class-based policies. Specifically, a job class refers to a set of jobs with identical resource requirements. Class-based policies schedule based on the number of jobs in each class that are present in the system at a given time. In other words, class-based policies rely on job-type information when making scheduling decisions. This implicitly assumes that job classes are discrete and finite, and that the number of possible classes is smaller than the number of jobs typically present in the system, so that the queue-length information for each class is meaningful. In contrast, index-based policies do not classify jobs into finitely many types. Instead, an index-based policy orders the jobs in the system according to a certain rule, iterates over all jobs or a subset of jobs, and decides whether to serve each job in turn. \cref{table: list of policies} summarizes the properties of some representative index-based and class-based policies in the prior work compared with our policies introduced in \cref{sec: model}. Notably, all prior theoretical works prove analytical results only in settings where resource requirements are finitely supported. To the best of our knowledge, no theoretical results have been established for continuous-resource settings, which are the focus of this paper.

\begin{table}[ht]
\begin{tabular}{|c|c|c|c|c|}
\hline
Policies                                                         & Nonpreemptive & MRJ Setting & Throughput-Optimal                                                      & Continuous Resource                                                         \\ \hline
FCFS                                                             & \Checkmark    & \Checkmark  & \XSolid                                                                 & \begin{tabular}[c]{@{}c@{}}Analysis only \\ in discrete setting\end{tabular} \\ \hline
First-Fit & \Checkmark    & \Checkmark  & \begin{tabular}[c]{@{}c@{}}\Checkmark \\ under assumptions \cite{coffman2001bandwidth} \end{tabular} & \begin{tabular}[c]{@{}c@{}}Analysis only \\ in discrete setting\end{tabular} \\ \hline
Best-Fit  & \Checkmark    & \Checkmark  & \begin{tabular}[c]{@{}c@{}}\Checkmark \\ under assumptions \cite{coffman2001bandwidth} \end{tabular} & \begin{tabular}[c]{@{}c@{}}Analysis only \\ in discrete setting\end{tabular} \\ \hline
LSF  & \Checkmark    & \Checkmark  & \XSolid & \begin{tabular}[c]{@{}c@{}}Analysis only \\ in discrete setting\end{tabular} \\ \hline
ServerFilling & \Checkmark    & \XSolid     & \begin{tabular}[c]{@{}c@{}}\Checkmark \\ under assumptions \cite{grosof2022wcfs} \end{tabular} & \begin{tabular}[c]{@{}c@{}}Analysis only \\ in discrete setting\end{tabular}                                                                    \\ \hline
\begin{tabular}[c]{@{}c@{}}ServerFilling-\\ SRPT\end{tabular} & \XSolid    & \XSolid     & \begin{tabular}[c]{@{}c@{}}\Checkmark \\ under assumptions \cite{grosof2022optimal} \end{tabular} & \begin{tabular}[c]{@{}c@{}}Analysis only \\ in discrete setting\end{tabular}                                                                       \\ \hline
MaxWeight                                                        & \XSolid       & \Checkmark  & \Checkmark                                                              & \XSolid                                                                      \\ \hline
RMS                                                              & \Checkmark       & \Checkmark     & \Checkmark                                                              & \XSolid                                                                      \\ \hline
MSR                                                              & \Checkmark    & \Checkmark  & \Checkmark                                                              & \XSolid                                                                      \\ \hline
MSFQ                                                             & \Checkmark    & \XSolid     & \Checkmark                                                              & \XSolid                                                                      \\ \hline
 \textbf{$\K$-MW    }                                                         & \pmb{\XSolid}    & \pmb{\Checkmark}     & \pmb{\Checkmark}                                                              & \pmb{\Checkmark}                                                                      \\ \hline
  \textbf{$\K$-MSR   }                                                          & \pmb{\Checkmark}    & \pmb{\Checkmark}     & \pmb{\Checkmark}          & \pmb{\Checkmark}      \\ \hline
\end{tabular}
\caption{Summary of the settings and the throughput-optimal condition of existing index-based and class-based policies, and our $\K$-discretized MaxWeight ($\K$-MW) and $\K$-discretized nonpreemptive Markovian Service Rate ($\K$-nMSR) families of policies.}
\label{table: list of policies}
\end{table}

We now introduce some classical index-based policies. As a representative index-based policy, First-Come-First-Served (FCFS) has been studied since the 1980s. Stability and mean response time analyses in the two-server setting were conducted in \cite{green1980queueing,brill1984queues}. In the integer-valued requirement setting, the stability region of FCFS was characterized in \cite{rumyantsev2017stability}, and the first mean response time analysis was conducted in \cite{grosof2023reset}. Other well-known index-based policies include First-Fit, Best-Fit, and Least-Server-First (LSF), each of which uses Backfilling to greedily pack jobs according to some scanning order. We elaborate on the definitions of these three policies in Appendix~\ref{sec: Definitions of Some Index-Based Policies}. First-Fit and Best-Fit are throughput-optimal if the requirement distribution is symmetric \cite{coffman2001bandwidth}. However, they generally fail to be throughput-optimal in more general settings \cite{maguluri2012stochastic}. \citet{hong2022sharp} studied the mean response time of LSF in a low-load scaling regime, showing that LSF achieves the same asymptotic growth rate as the optimal mean response time bound. However, LSF is generally not throughput-optimal because it prioritizes small jobs and fails to fully utilize system capacity. 
ServerFilling \cite{grosof2022wcfs} and ServerFilling-SRPT \cite{grosof2022optimal} are recently introduced near-index policies, with strong mean response bounds in the power-of-two MSJ requirement setting. Although most of these index-based policies can naturally be applied to the continuous MSJ or MRJ framework, their theoretical analysis has focused only on discrete settings, where job requirements have finite support. Moreover, none of them is throughput-optimal in the general resource requirement distribution setting.

We next move to some well-known class-based policies. One of the earliest preemptive class-based policies in the literature is MaxWeight \cite{tassiulas1993dynamic,stolyar2004maxweight}, which is also the earliest policy proven to be throughput-optimal in the MRJ setting \cite{maguluri2012stochastic}. Despite its throughput-optimality and low empirical mean response time, MaxWeight repeatedly solves an online optimization problem and chooses service options whenever the system state changes, leading to frequent preemption and computational inefficiency. The first nonpreemptive policy proven throughput-optimal in the MRJ setting, Randomized Multi-Resource Scheduling (RMS, also known as Randomized-Timer) \cite{psychas2018randomized}, addresses these issues but often results in a higher empirical mean response time than MaxWeight. A family of policies called Markovian Service Rate (MSR) was recently introduced in \cite{chen2024analyzing}. MSR can accommodate both preemptive and nonpreemptive settings and is throughput-optimal with bounded mean response time. \citet{chen2024analyzing} showed that nonpreemptive MSR (nMSR) achieves better empirical mean response times than RMS, especially under high load. However, nMSR has an issue: it ignores the queue-length state when switching service options, resulting in still subpar mean response time. To reduce the mean response time of nMSR in the MSJ setting, \citet{CHEN2026102525} proposed Most Servers First with Quickswap (MSFQ). MSFQ is throughput-optimal in the one-or-all case and empirically achieves better mean response time and weighted mean response time than nMSR in the general integer-valued resource requirement case. Many class-based policies achieve throughput-optimality, in contrast to index-based policies.

Class-based policies are not immediately applicable to the continuous MSJ or MRJ settings, because job requirements are continuously distributed, and the system almost never contains multiple jobs with exactly identical requirements at any given time. We will build on existing class-based policies (MaxWeight and nMSR) by using a discretization scheme, to achieve throughput-optimality in the continuous setting. For a boost to empirical performance, we also augment our discretization-based policies with an index-based Backfilling approach.

\section{Model}
\label{sec: model}

In this section, we define single- and multi-resource queueing models with continuously distributed resources (the continuous MSJ and MRJ settings). We first present the model and notation (\cref{ssec: notations,ssec: Multiresource-Job Queueing System with Continuous Resource Requirement}), then introduce our discretization-based policies in \cref{ssec: New Scheduling Policies: The Discretization-Based Policies} and their counterparts with Backfilling in \cref{ssec: Backfilling}. 

\subsection{Basic Notation}
\label{ssec: notations}
We let $\N$ be the set of all nonnegative integers and $[K]: = \{0,1,...,K\}$ be the set of all nonnegative integers not greater than $K$. In an Euclidean space $\R^d$, we let an arrowed letter denote a deterministic vector (e.g., $\vec a$). We let $\vec {0} = \lpar0,...,0\rpar$ and $\vec {1} = \lpar1,...,1\rpar$. 
For two vectors $\vec a ,\vec b \in \R^d$, we say $\vec a \preceq \vec b$ if $a_i \leq b_i$ for all $1\leq i \leq d$, and let $\vec b- \vec a = \lpar b_1-a_1,...,b_d-a_d\rpar$ denote the difference of two vectors. Given a set $C \subseteq \R^d$, let $ConvH\lpar C\rpar$ denote the convex hull of $C$ and $int\lpar C\rpar$ denote the interior of $C$. A function $f$ with domain $D \subseteq \R^d$ is said to have central symmetry (or point reflection symmetry) with respect to a point $\vec a \in \R^d$ if for any $v \in D$, $f(v) = f(2\vec a-v)$.

For a function $f: \R^d \rightarrow \R$, we let $supp\lpar f\rpar$ denote its support. For a probability distribution $\pi$, we let $\mathcal{L}\lpar \pi| A\rpar$ denote the conditional distribution (law) on some measurable set $A$.

For an $\vec{K} = \lpar K_1,...,K_d\rpar \in \N^d$, we let $\Gamma_{\vec K}: = \{\vi \in \N^d:\vec {1} \preceq \vec {i} \preceq \vec {K}\}$ and let $\prod \K : = |\Gamma_\K| = \prod_{l=1}^d K_l$. 

For a vector $Z \in \R^{\Gamma_\K}$ with multi-dimensional indices, we let $Z^{\lpar \vec {i}\rpar}$ denote the $\vec{i}$-th element of $Z$, for $\vi \in \Gamma_\K$.

In our paper, we always let $M\in \N^{\Gamma_\K}$ denote a service option, where $M^{(\vi)}$ is the number of type-$\vi$ jobs (See  \cref{ssec: Auxiliary Notations}) served by $M$. However, when $\K$ is (entry-wise) large and the vector is sparse, which happens frequently in our setting, it is inconvenient to explicitly write out $M$. Hence, we provide two equivalent notations for $M$.

\begin{definition}[Alternative Notation I for $M\in \N^{\Gamma_\K}$]
    \label{def: alternative notation I for M}
    Any $M\in \N^{\Gamma_\K}$ can be equivalently expressed as an unordered list of non-zero elements with repetitions. To be specific, if $\{\vi_1, \vi_2,..., \vi_n\} := \{\vi \in \Gamma_\K: M^{(\vi)} \neq 0\}$ are the non-zero entries of $M$, then we can equivalently write $M$ such that
    \begin{equation}
        \label{equ: alternative notation I for M}
        M = \left\{ \underbrace{\vi_1,...,\vi_1}_{M^{(\vi_1)} \text{'s}}, \underbrace{\vi_2,...,\vi_2}_{M^{(\vi_2)} \text{'s}},...,\underbrace{\vi_n,...,\vi_n}_{M^{(\vi_n)} \text{'s}}  \right\}.
    \end{equation}
    Note that the order of elements does not matter for the uniqueness of $M$.
\end{definition}

\begin{definition}[Alternative Notation II for $M\in \N^{\Gamma_\K}$]
    \label{def: alternative notation II for M}
    Any $M\in \N^{\Gamma_\K}$ can be equivalently expressed as a list of pairs of unique elements and their multiplicities.
    To be specific, if $\{\vi_1, \vi_2,..., \vi_n\} := \{\vi \in \Gamma_\K: M^{(\vi)} \neq 0\}$ are the non-zero entries of $M$, then we can equivalently write $M$ such that
    \begin{equation}
        \label{equ: alternative notation II for M}
        M = \left\{(\vi_1,M^{(\vi_1)}), (\vi_1,M^{(\vi_1)}),..., (\vi_n,M^{(\vi_n)})  \right\}.
    \end{equation}
    Note that the order of pairs does not matter for the uniqueness of $M$.
\end{definition}

We use $M = (1,0,2,0,1,0,0,0) \in \N^8$ as an example. Based on the alternation notation I given by \cref{def: alternative notation I for M}, an equivalent expression of $M$ is $\{1,3,3,5\}$. Based on the alternation notation II given by \cref{def: alternative notation II for M}, an equivalent expression of $M$ is $\{(1,1),(3,2),(5,1)\}$.

For the list of all notation, see Appendix~\ref{sec: List of Notations}.

\subsection{Multiresource-Job Queueing System with Continuous Resource Requirements}
\label{ssec: Multiresource-Job Queueing System with Continuous Resource Requirement}

We study a model where a server has $d$ distinct resources available, and each job requires some amount of each resource. Without loss of generality, we normalize the server's available capacity for each resource to be $1$. A set of jobs can be served simultaneously whenever, for each resource type, the total requirement of these jobs does not exceed the capacity $1$.

Job arrivals follow a Poisson arrival process, with arrival rate $\lambda$. Resource requirements are sampled i.i.d. from the underlying distribution with random variable $V$, which takes values in $( 0,1]^d$. In addition to a resource requirement, each job has a duration demand, sampled i.i.d. from a distribution with random variable $D$. The states of the system are lists of (requirement, duration) pairs, ordered according to their arrival times. A scheduling policy selects a set of jobs to be served simultaneously, and is assumed to be a Markovian policy, which depends only on the current state. We consider both the preempt-resume and nonpreemptive settings. When we refer to a queueing system as stable, formally we mean that the empty state is positive recurrent.

In this paper, we do not assume the resource is pre-discretized. This is in contrast to prior work, which typically assumes the job resource requirements distribution is discrete with finite support. Our assumption is more useful for real-world cloud and cluster scheduling. Specifically, resource requirement distribution $V$ is characterized by the probability density function (p.d.f.) $f_V: (0,1]^d \rightarrow \R$. For simplicity, we focus on the case where the duration $D$ follows the exponential distribution $Exp(1)$, and the scheduling policy does not observe the remaining duration demand. However, we believe our policy should perform well under the general duration settings.

Under these assumptions, the state space simplifies to the set of current resource requirements, denoted by
\[\mathcal S : = \left\{x = \left (\vec v_n\right )_{1\leq n \leq N}\subseteq ( 0,1]^d: N \in \N \right\}, \]
where the smaller index $n$ means an earlier arrival time. A Markovian policy can then be defined as follows.

\begin{definition}
    \label{def: Markovian policy}
    A Markovian policy $\pi$ maps each state $x$ to a subset $x' \subset x$, such that
    \begin{equation}
    \label{equ: any policy formula}
        \sum_{\ell = 1}^{|x'|} x'_\ell \preceq \vec 1
    \end{equation}
    The subset $x'$ refers to the set of jobs the server serves simultaneously. We define $\mathscr S$ to be the set of all schedulable service options $x'$.
\end{definition}

Thereafter, all policies we will discuss are Markovian policies. To study the performance of a policy, we apply the drift method by specifying a Lyapunov function (test function) and analyzing its drift. Specifically, a Lyapunov function $L: \mathcal{S} \rightarrow \R^+_0$ maps each state of to a nonnegative number. The drift of a Lyapunov function $L$ under some policy $\pi$, $\Delta_\pi \circ L(x)$, is then defined as the instantaneous change rate of the $L$ at the state $x$, 
\begin{equation}
    \label{equ: drift of V}
    \Delta_\pi \circ L(x)= \lim\limits_{t\downarrow 0} \frac{1}{t} \E \lbr L(x( t )) - L(x(0))\,|\, x(0) = x\rbr.
\end{equation}

\subsection{New Discretization-Based Scheduling Policies}
\label{ssec: New Scheduling Policies: The Discretization-Based Policies}

In \cref{ssec: Auxiliary Notations}, we first introduce our method of discretizing the resource requirements, and provide corresponding notation. We then introduce our discretization-based scheduling policies in the $\K$-discretization system: $\K$-discretized MaxWeight (\cref{sssec: K-discretized MaxWeight Scheduling Policy}), the $\K$-discretized efficient MaxWeight family (\cref{sssec: Efficient discretized MaxWeight}), $\K$-discretized nonpreemptive Markovian Service Rate (\cref{sssec: K-discretized Markovian Service Rate}), and $\K$-discretized efficient nonpreemptive Markovian Service Rate (\cref{sssec:KE nMSR}). 

\subsubsection{The $\K$-Discretization System} 
\label{ssec: Auxiliary Notations}

We begin with the single-resource queueing model, where $ V \in ( 0,1]$. After fixing the discretization parameter $K\in \N$, we equally partition the resource range $( 0,1]$ such that 
\begin{equation}
\label{equ: K-partition}
    ( 0,1] = \bigcup_{k=1}^K   I_k,\quad \text{where } I_k:=\left( \frac{k-1}{K}, \frac{k}{K}\right ],  \text{ for } k = 1,..,K,
\end{equation}
where $I_k$ is the interval in the continuous requirement space corresponding to a single discrete requirement. We also call each interval $I_k$ a ``bucket". Given a job with resource requirement $v$ in the set $I_k$, where $k = \left\lceil K v \right\rceil$, we say the job is a type-$k$ job. At time $t$, let $q( t ) = \lpar q_1( t ),...,  q_K( t ) \rpar \in \N^K$ such that $q_k( t )$ denotes the number of type-$k$ jobs in the system. We call a policy ``$K$-discretized" if the policy uses only the information of the job types. Therefore, under a $K$-discretized policy, the process $\{q( t )\}_{t\geq 0}$ forms a continuous-time Markov chain (CTMC). 

Let \[M( t ) = (M^{( 1 )}( t ),...,M^{( K )}( t )) \in [K]^K\] denote the service option the server chooses at time $t$, where $M^{( k )}( t )$ denotes the number of type-$k$ jobs in service, for each $k = 1,..., K$. Restricted by the capacity of the server, our service option should satisfy that $M( t ) \in C_K$, the set of all admissible service options, which is defined as 
\begin{equation}
    \label{equ: C_K a-dim}
    C_K := \{ M \in [K]^K: \langle M, (1,...,K)\rangle = \sum\limits_{k=1}^K k M^{( k )}  \leq K \}. 
\end{equation}

The $d$-resource setting follows similarly. By fixing some $\vec {K}: = (K_1,...,K_d )\in \N^d$, recall that we let $\Gamma_{\vec K}: = \{i \in \N^d:\vec {1} \preceq \vec {i} \preceq \vec {K}\}$ denote the set of all job types. We partition the set $( 0,1]^d$ such that
\begin{equation}
    \label{equ: partition in d-dim}
    ( 0,1]^d = \bigcup_{ \vec  i \in \Gamma_{\vec  K}} I_{\vec i} \quad \text{where }
    I_{\vec i}:= \left (\frac{i_1-1}{K_1}, \frac{i_1}{K_1}\right]\times...\times \left (\frac{i_d-1}{K_d}, \frac{i_d}{K_d} \right] \text{ for all } \vec i \in \Gamma_{\vec K}.
\end{equation}
Each $I_{\vec i}$ is a $d$-dimensional box in the continuous requirement space, corresponding to a single discrete requirement. Given a job with resource requirements $\vec v= (v_1,...,v_d)$ in the set $I_{\vec {i}}$, where $\vec {i} = \left(\left\lceil K_1 v_1 \right\rceil,..., \left\lceil K_d v_d \right\rceil\right )$, we say the job is a type-$\vec i$ job. 
Similarly to the single-resource setting, we let $q( t ) = \lpar q_{\vec i}( t )\rpar_{\vec i \in \Gamma_ {\vec K}}$, where $q_{\vec i}( t )$ denotes the number of type-$\vec i$ jobs in the system at time $t$. We similarly define a $\K$-discretized policy to only use the information of the job types. Specifically, we let $M(t) \in \N^{\Gamma_\K}$ denote the service option at time $t$. Note that $M(t) \in C_\K$, the set of all schedulable options in $\K$-discretized system, such that 
\begin{equation}
    \label{equ: C_K d-dim}
     C_{\vec K} := \{ M \in \left[K_1\right]\times ...\times \left[K_d\right]: \sum_{\vec {i} \in \Gamma_{\vec {K}} } i_l M^{(\vec {i})}  \leq K_l \; \text{for all}\; 1\leq l \leq d\}, 
\end{equation}
where $C_{\vec K}$ is the set of all admissible service options under $\vec{K}$-discretization, and $M^{(\vec i)}( t )$ denotes the number of type-$\vec i$ jobs in that are served at time $t$.

We now provide some definitions on arrival and service measures, before we move to introducing our novel discretization-based policies. We recall that the system has the arrival rate $\lambda$ and the job resource requirement distribution $V\in ( 0,1]^d$ with density $f_V:( 0,1]^d \rightarrow \R$. 

\begin{definition}[($\K$-Discretized) Arrival Measures]
    \label{def: arrival measure}
    Given the arrival rate $\lambda$ and the resource requirement density $f_V$, the \textbf{arrival measure} $\eta^A$ on the space $( 0,1]^d$ is given by 
    \begin{equation}
        \label{equ: def of mu^A}
        \eta^A(dx) : = \lambda f_V(dx).
    \end{equation}
    Moreover, for any fixed $\K\in \N^d$, the \textbf{$\vec {K}$-discretized arrival measure} ($\vec {K}$-arrival measure) $\eta^A_{\vec K}$ is defined over the set $\Gamma_{\vec K}$ of all job types, by integrating the density over the box $I_\vi$, such that 
    \begin{equation}
        \label{equ： discretized arrival measure}
        \eta^A_{\vec K} (\vi) : =  \eta^A(I_{\vi})  = \int_{I_\vi} \lambda f_V(x) dx, \quad \forall \vi \in  \Gamma_{\vec K}.
    \end{equation}
    The \textbf{$\vec {K}$-arrival rate vector} $\Lambda_{\vec K} \in \R^{\Gamma_\K}$ with indices in $\Gamma_\K$, which characterizes $\eta^A_\K$, is then defined such that $\Lambda_{\vec K}(\vi) =\eta^A_{\vec K} (\vi)$.
\end{definition}

\begin{definition}[Service Option Distribution and Discretized Service Measure]
\label{def: service measure}
    For any fixed $\vec {K}$-discretization, we obtain a set of all admissible service options $C_{\vec K}$ \eqref{equ: C_K d-dim}. We let $\beta_\K$ denote a generic service option distribution over $C_\K$. The measure induced by $\beta_\K$ is defined over the set $\Gamma_{\vec K}$, the mass of which is given by 
    \begin{equation}
        \label{equ: K service measure by M_k}
        \eta^S_{\vec K} (\vi)=\E^{\beta_\K} \lbr M^{(\vi)}\rbr= \sum\limits_{M\in C_\K} \beta_\K(M)  M^{(\vi)},\quad \forall i \in  \Gamma_{\vec K}.
    \end{equation}
    We call $\eta_\K^S$ \textbf{the discretized service measure} induced by $\beta_\K$. 
\end{definition}

\subsubsection{The $\K$-discretized MaxWeight Scheduling Policy}
\label{sssec: K-discretized MaxWeight Scheduling Policy}

In the single-resource setting, with a chosen discretization parameter $K\in \N$, the \textbf{$K$-discretized MaxWeight scheduling policy} ($K$-MW) at time $t$ chooses the service option
\begin{equation}
    \label{equ: K-discretized maxweight}
    M^{MW\text{-}C_K}( t ) : =\arg\max\limits_{M\in C_K} \langle M, q( t ) \rangle.
\end{equation}

Similarly in the $d$-resource setting, with some preset $\K\in \N^d$, we define the $\vec {K}$-discretized MaxWeight scheduling policy ($\K$-MW), which at time $t$ chooses the service option
\begin{equation}
    \label{equ: K-discretized maxweight d-dim}
    M^{MW\text{-}C_\K}( t ) :=\arg\max\limits_{M\in C_{\vec K}} \langle M, q( t ) \rangle =\arg\max\limits_{M\in C_{\vec {K}}} \sum\limits_{\vec{i}\in \Gamma_{\vec{K}}} q_{\vec {i}}( t ) M^{(\vec {i})}.
\end{equation}

We need to carefully select $\vec {K}$ by considering the tradeoff between achieving a better mean response time and obtaining a larger stability region. A large $\K$ makes our policy close to naive MaxWeight, which is equivalent to Least-Server-First until very large queue length, and leads to a worse mean response time. However, a small $\K$ causes a coarse discretization that treats every job like the requirement is maximum for the type. The waste of capacity in rounding up results in a poor stability region.


One of our main focuses is to show that the $\K$-MW family of policies is throughput-optimal, which means if there is a policy that can stabilize the system, then there exists some $\K$ such that $\K$-MW can also stabilize the system. \cref{lem: suff and necc cond for stability} will provide a sufficient stability condition for $\K$-MW. \cref{thm: T-O of K-MW given Lipschitz condition} will prove the throughput-optimality of this policy given that the p.d.f. of the resource requirement $V$ is of finite Lipschitz constant.

\subsubsection{$\K$-discretized Efficient MaxWeight}
\label{sssec: Efficient discretized MaxWeight}

One drawback of $\K$-MW is its large candidate set, resulting in poor empirical efficiency when $\K$ is large.
The cardinality of the set $C_\K$ of all admissible service options grows exponentially with the square root of $\max\{K_i\}$. For example, in the single-resource setting, the number of full-utilization service options is $\tilde \theta \lpar \exp(\pi \sqrt \frac{2K}{3})\rpar$ \cite{hardy1918asymptotic}. Therefore, when we choose a large discretization parameter $\K$, a complicated optimization problem must be solved every time the state changes.

Faced with this problem, we propose a family of policies called \textbf{$\K$-discretized Efficient MaxWeight} ($\K$-EMW). Each policy in this family specifies an efficient set of service options $E_{\vec K}\subset C_\K$. The policy only searches over candidates in $E_\K$ and selects the following service option:
\begin{equation}
    \label{equ: K-efficient policy d-dim}
    M^{MW\text{-}E_{\vec K}}( t ) : =\arg\max\limits_{M\in E_{\vec K}} \langle M, q( t ) \rangle.
\end{equation}

We will choose the following four efficient sets that still guarantee throughput-optimality: the 2-Job efficient set $E^{2J}_K$, the 2-Bucket efficient set $E^{2B}_K$, the Extreme-vertices efficient set $E^{X}_K$, and the Pairwise Extreme-vertices (XP) efficient set $E^{\text{XP}}_K$.

\textit{The 2-Job efficient set $ E^{2J}_\K$:}
    We focus on this efficient set in settings with symmetric resource distribution $V$.
    In the single-resource setting, $E^{2J}_K$ consists of all candidates that serve one type-$j$ job and one type-$(K-j)$ job together, and the candidate that serves one type-$K$ job. More specifically, if $K$ is odd, we define $ E^{2J}_K = \{ M_K, M_1,..., M_{\frac{K-1}{2}}\}$, 
    such that $M_K  =\{K\}$, and $M_j =\{j,K-j\}$ for each $j =1,..., \frac{K-1}{2}$. For even $K$, we add the option that serves two type-${\frac{K}{2}}$ jobs.
    
    In the $d$-resource setting and given $\vec K$-discretization, we define $E^{2J}_\K$ for a $\K$ discretization as follows. Recall that $\vec K=(K_1,,,.K_d) \in \N^d$. We consider the candidates that serve one type-$\vi$ job and one type-$(\K-\vi)$ job together, and the candidates that serve only one job that has at least one resource requirement reaching the capacity limit (``boundary" job).
    To be specific, assume $K_1,..., K_d$ to be odd. For any $\vj\in \mathcal B^{\vec K}$, the boundary set, such that
    \begin{equation}
        \label{equ: boundary set}
        \mathcal B^{\vec K} = \{ \vec j : j_\ell = K_\ell \quad \text{for some}\; \ell = 1,..., d\},
    \end{equation}
    we set $M_{\vj}= \{\vj\}$. Then for each $(\frac{K_1+1}{2} ,1,...,1) \preceq \vj \preceq (K_1-1, ..., K_d-1)$, we let $M_{\vj} = \{\vj,\K-\vj\}$. We handle even $\K$ as in the $d=1$ case.

    The MaxWeight policy that uses $E^{2J}_\K$ as the candidate set is then called \textbf{2-Job $\K$-Efficient MaxWeight} (2J $\K$-EMW).

    \textit{The 2-Bucket efficient set $E^{2B}_K$:}
    We focus on $E^{2B}_K$ in $d=1$ single-resource setting, where the requirement density $f_V$ is decreasing.
    We let $K = 2^L$ for some integer $L$. $E^{2B}_K$ contains the following candidates, expressed by the alternative notation II:
    \begin{itemize}
        \item For $k = 2^\ell\in [K]$ that is a power-of-two integer (including $k=1$), we include the service option that serves $2^{L-\ell}$ type-$k$ jobs simultaneously. Specifically, $M_{2^\ell} = \{(2^\ell, 2^{L-\ell})\}$.
        \item For $k \in [K]$ that is not a power-of-two integer, let $\ell:=\ell(k) = \lceil \log_2 j \rceil$. We include the service option that serves $2^{L-\ell}$ type-$k$ jobs and $2^{L-\ell}$ type-$(2^\ell-k)$ jobs simultaneously. Specifically, $M_k = \{(k,2^{L-\ell}),(2^\ell - k, 2^{L-\ell})\}$.
    \end{itemize}
    We call the MaxWeight policy with efficient set $E^{2B}_K$ 2\textbf{-Bucket K-Efficient Discretized MaxWeight} (2B K-EMW).
    
    \textit{The Extreme-vertices efficient set $E^X_\K$:}  $E^X_\K$ is applicable under any requirement distribution $V$.
    For any $\K$ fixed, we let $E^X_\K$ denote the set of all extreme vertices in $C_\K$ (the extreme points in $ConvH(C_\K)$). The corresponding policy that chooses $M^{X} = \arg \max_{M \in E^X_\K} \langle M,q(t)\rangle$ is called \textbf{Extreme-vertices $\K$-Efficient Discretized MaxWeight} (X $\K$-EMW).

    \textit{The Pairwise Extreme-vertices (XP) efficient set $E^{\text{XP}}_K$:}
    $E^{\mathrm{XP}}_K$ is applicable under the MSJ setting when the discretization parameter $K$ is even. It is designed as a tractable surrogate for the X efficient set $E^X_K$, because the exact construction of $E^X_K$ is computationally difficult. Specifically, $E^{\mathrm{XP}}_K$ is a superset of $E^X_K$ and can be constructed by a direct set-subtraction procedure.
    
    Formally speaking, for any $N\leq K$, we define the set $\tilde C_N$, which represents the set of service options with total resource usage $N/K$:
    \[\tilde C_N : = \{M\in \N^{K}: \sum\limits^K_{k=1} k M^{(k)} = N \}.\]
    The XP efficient set is obtained from $\tilde C_K$ by excluding every service option that can be decomposed into two distinct service options, each with total resource usage $1/2$:
    \begin{equation}
        \label{equ: XP efficient set}
        E^{\text{XP}}_K := \{ M \in \tilde C_K: M \neq M_1+M_2, \text{ where } M_1\neq M_2 \text{ and } M_1, M_2 \in \tilde C_{\frac{K}{2}}\}.
    \end{equation}
    We call the MaxWeight policy using $E^{\mathrm{XP}}_K$ \textbf{Pairwise Extreme-vertices $K$-Efficient Discretized MaxWeight} (XP $K$-EMW).
    

\subsubsection{$\K$-Discretized Nonpreemptive Markovian Service Rate}
\label{sssec: K-discretized Markovian Service Rate}

In addition to the above preemptive policies, we provide nonpreemptive policies to schedule jobs with continuous resource requirements. This family of policies is built from the nonpreemptive Markovian Service Rate (nMSR) policy \cite{chen2024analyzing}. MSR  uses an irreducible, finite-state continuous-time Markov chain $\{M( t )\}_{t\geq 0}$ as a modulating process. The state space of $\{M( t )\}_{t\geq 0}$ is a set of service options. As proven in \cite{chen2024analyzing}, the family of the nMSR policies is throughput-optimal. In addition, the only computational expense is the precomputation to find the modulating process -- executing the process is more efficient.

In our setting, where the resource requirement distribution is continuous, we again use $\K$-discretization as introduced in \cref{ssec: Auxiliary Notations}, and consider the set of all schedulable discretized service options $C_\K$. By conducting an offline pre-computation, we choose at most $|\Gamma_\K|$ service options in $C_\K$ to form the state space of the modulating process $\{M( t )\}_{t\geq 0}$. Then nMSR can be generalized to the continuous setting, called \textbf{$\K$-discretized nMSR} ($\K$-nMSR). We will show in \cref{thm: MSR+K-MW} that $\K$-nMSR is throughput-optimal, given that the p.d.f. of the resource requirement is Lipschitz continuous.

\subsubsection{$\K$-Discretized Efficient Nonpreemptive Markovian Service Rate}
\label{sssec:KE nMSR}

To reduce the complexity of the offline computation in selecting the modulating process, the efficient set setting can be extended to the nonpreemptive case. Specifically, after $\K$-discretization, nMSR chooses service options from an efficient set. We recall \cref{sssec: Efficient discretized MaxWeight} and define efficient nMSR policies corresponding to the 2-Job (2J $\K$-EnMSR), 2-Bucket (2B $\K$-EnMSR), Extreme-vertices (X $\K$-EnMSR), and Pairwise Extreme-vertices (XP $\K$-EnMSR) efficient sets.

\subsection{Discretization-Based Policies with Backfilling}
\label{ssec: Backfilling}

To fully utilize the available capacity, Backfilling can be applied for each preemptive discretization-based policy. Backfilling works the same way as the filling rule of First-Fit. Once the underlying policy selects a candidate service option, the remaining unused resources may still allow some jobs in the system to be packed. The server then scans the queue in arrival order and packs jobs by the following greedy rule. The server packs a scanned job and serves it together with the jobs already selected, whenever the resource requirements of the scanned job are at most the remaining resources of the server. For brevity, we refer to $\K$-discretized MaxWeight with Backfilling as $\K$-MW-B. We note that $\K$-MW-B or $\K$-EMW-B are not purely discretization-based policies.

\cref{cor: Backfilling Maintains Stability} states that the same type of proof of stability for $\K$-MW (or $\K$-EMW) also guarantees the stability of $\K$-MW-B (or $\K$-EMW-B). In addition, adopting Backfilling improves performance without significantly worsening computation efficiency, which is verified in \cref{sec: empirical results}.

\section{Main Results}
\label{sec: main results}

Our main goal is to find the first throughput-optimal family of policies in the continuous MRJ setting. We begin by stating our core result: In both single- and multi-resource settings, the family of $\K$-discretized MaxWeight ($\K$-MW) policies is throughput-optimal for all resource distributions $V$ with Lipschitz continuous p.d.f.s $f_V$.
\begin{theorem}[Throughput-optimality of $\K$-MW]
    \label{thm: T-O of K-MW given Lipschitz condition}
    In the continuous MRJ setting with exponential durations, for any resource requirement distribution $V$ with Lipschitz continuous probability density function $f_V$ over $(0,1]^d$, the family of $\K$-MW policies is throughput-optimal. That is, if the continuous MRJ system is stable under some policy $\pi$, then there exists some $\K$ such that $\K$-MW stabilizes the system. 
\end{theorem}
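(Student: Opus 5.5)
Assuming the continuous system is stable under some policy $\pi$, I will show that the discretized arrival vector $\Lambda_\K$ lies strictly inside the capacity region $ConvH(C_\K)$ once $\K$ is entry-wise large enough. The $\K$-discretized system is a CTMC with finitely many classes and exponential durations. There, the standard MaxWeight drift argument (the sufficient stability condition of \cref{lem: suff and necc cond for stability}) applies with the quadratic Lyapunov function $L(q)=\sum_{\vi}q_\vi^2$. The drift is
\[
2\langle q,\Lambda_\K\rangle-2\max_{M\in C_\K}\langle q,M\rangle+O(1)\le -2\epsilon\|q\|_1+O(1)
\]
whenever $(1+\epsilon)\Lambda_\K\in ConvH(C_\K)$. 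Since the state of a $\K$-discretized policy is just $q$, positive recurrence of $q$ gives stability of the continuous system.

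\textbf{Step 1: necessary condition from stability of $\pi$.} If $\pi$ stabilizes the system, time-averaging the configurations in service over a regenerative cycle gives a probability distribution over $\mathscr S$, i.e.\ over finite multisets of points in $(0,1]^d$ whose coordinate sums are $\preceq\vec 1$. By rate balance, the induced service measure (expected number of jobs in service with requirement in a set $A$) equals $\eta^A(A)$. So $\eta^A$ is a mixture of feasible packings. Positive recurrence gives something stronger: the server is idle a positive fraction of the time, or I can scale. My plan is to argue that $(1+\delta)\eta^A$ is also supportable for some $\delta>0$. I would try the following: the empty state has positive probability, so each packing's mixture weight sums to less than $1$, and I can rescale by the reciprocal of the nonidle fraction. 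If that is too weak, I will instead show the necessary condition only in the non-strict form and then take $\lambda$ strictly inside, using the openness of stability regions.

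\textbf{Step 2: rounding loss, the main obstacle.} A feasible continuous packing, rounded up to types $\lceil K_l v_l\rceil$, may violate capacity. Each job gains at most $1/K_l$ in coordinate $l$, and a packing can contain many small jobs. I will handle this by splitting jobs into small ones (some coordinate below a threshold $\tau$) and large ones. Since $f_V$ is bounded (Lipschitz on a compact set), small jobs have arrival mass $O(\tau)$. A packing contains at most $d/\tau$ large jobs, so the rounding inflation among large jobs is at most $d/(\tau K_{\min})$ per coordinate. I shrink the packings by a factor $1-d/(\tau K_{\min})-O(\tau)$ by dropping jobs at random, and I serve the small jobs by separate time-sharing, which costs $O(\tau)$ of the time. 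Lipschitz continuity enters when comparing the continuous service measure restricted to each box $I_\vi$ with $\eta^A_\K(\vi)$. Because rounding maps a job in $I_\vi$ to type $\vi$ exactly, the mass transfers directly, and Lipschitzness mainly controls the density near box boundaries and the small-job mass. The conclusion is that $(1+\delta)(1-\gamma(\K,\tau))\Lambda_\K\in ConvH(C_\K)$. Choosing $\tau$ small and then $\K$ large makes this factor exceed $1$.

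\textbf{Step 3.} Apply the drift bound above with $\epsilon>0$. This gives positive recurrence of $q$ under $\K$-MW, hence of the empty state, which completes the proof.
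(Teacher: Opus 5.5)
Your Step~1 is the paper's argument: rescaling the stationary packing mixture of $\pi$ by $1/(1-\epsilon_0)$ gives continuous measure dominance $\eta^S\ge(1+\epsilon)\eta^A$ with $\epsilon=\epsilon_0/(1-\epsilon_0)$, so the ``openness'' fallback is not needed. Step~3 is \cref{lem: suff and necc cond for stability}. The gap is in Step~2. Membership in $ConvH(C_\K)$ requires every service option to satisfy capacity exactly. Shrinking an over-capacity rounded packing ``by a factor $1-\gamma$'' is an LP-relaxation argument, and it does not transfer to this integer hull. Dropping jobs independently at rate $\gamma$ leaves the packing infeasible with probability at least $(1-\gamma)^n$. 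Restoring feasibility requires deleting a whole job, which loses a $1/n$ fraction of that packing's service, and $n$ can be $2$. For instance, take $d=1$ and $K$ odd. Two jobs with requirements in $\bigl(\frac{K-1}{2K},\frac12\bigr)$ form a feasible continuous packing whose rounding is $2e_{(K+1)/2}$. Yet $(1-\gamma)\,2e_{(K+1)/2}\notin ConvH(C_K)$ for every $\gamma<\frac12$, because at most one type-$\frac{K+1}{2}$ job fits at a time. Your bound $d/(\tau K_{\min})$ on the inflation is correct but beside the point: the obstruction is integrality, not the size of the overshoot. It is also a warning sign that Lipschitz continuity does no real work in your Step~2.

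The paper avoids this by rounding \emph{down}. Each job $r$ is mapped to type $\lfloor Kr\rfloor$, which keeps every packing feasible exactly. Type $\vi$ then inherits the service that $\pi$ gave to bucket $I_{\vi+\vec 1}$, i.e.\ $\eta^S_\K(\vi)=(1-\xi_\K)\eta^S(I_{\vi+\vec1})$. Lipschitz continuity is essential here: it gives $\eta^A(I_{\vi+\vec1})\ge\eta^A(I_\vi)-\lambda C\sqrt d\,K^{-d-1}$. This error is negligible relative to $\eta^A_\K(\vi)$ except for three kinds of types: boundary types, types whose upper neighbor has zero mass, and types with $\eta^A_\K(\vi)\le c_1K^{-(d+c_2)}$. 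These are served by singleton options of total mass $O(K^{-c_2})$, using $\|f_V\|_\infty\le L(C,d)$. If you prefer to keep rounding up, the repair is a mass argument rather than a uniform shrink. A packing with $n$ jobs stays feasible after rounding up unless its slack $G=\min_l(1-\sum_r r_l)$ is below $n/K_{\min}$. Because $V$ has a density, almost surely no finite set of arrivals has a coordinate sum exactly $1$, so $G>0$ for $\zeta$-a.e.\ packing. The job count has $\zeta$-integral $(1+\epsilon)\lambda$, so dominated convergence shows that the job mass in low-slack packings vanishes as $K_{\min}\to\infty$. Those jobs can then be served by singletons at vanishing cost. That route is non-quantitative, since the needed $K$ depends on $\pi$'s slack distribution, whereas the paper's route yields the explicit $K=\Theta(\epsilon^{-2})$.
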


The proof is deferred to \cref{ssec: proof of T-O KMW}. 

\noindent\textit{Proof Sketch:} We show that under the discretized $\K$-MW policy, we can switch our focus to the discretized CTMC $q_t$ introduced in \cref{ssec: Auxiliary Notations}. We build upon the standard stability argument for class-based policies: It suffices to exhibit a distribution $\beta_\K$ over service options that fulfills a discrete measure dominance condition. The condition states that the total service rate for each class under $\beta_\K$ exceeds its arrival rate (\cref{lem: suff and necc cond for stability}). 

The challenge arises because the conversion to the discrete system involves rounding up the resource requirements, increasing the system load. As a result, the system may fail to be stable.
To overcome this, we must both choose $\K$ carefully based on the Lipschitz constant of $f_V$, as well as carefully construct $\beta_\K$.

To construct $\beta_\K$ satisfying discrete measure dominance, we use the fact that there exists a stable continuous policy $\pi$, which induces a continuous measure dominance for the undiscretized system. However, the naive construction of $\beta_\K$ fails. Simply rounding up each continuous service option leads to inadmissible service options. We instead round \emph{down} each continuous service option to generate admissible service options, and invoke Lipschitz continuity of $f_V$, to argue that the density of the rounded-up and rounded-down buckets are similar. Our approach requires special care for certain job types, including
\begin{itemize}
    \item Job types that use the entirety of some resource,
    \item Job types on the boundary of the support of $V$, and 
    \item Low-density job types.
\end{itemize}
By carefully handling these job types, we demonstrate that we can always choose a $\K$ and a $\beta_\K$ such that the discrete measure dominance is satisfied, and conclude the theorem. \hfill $\square$

When jobs are not preemptible, a nonpreemptive family of policies is needed. In \cref{thm: MSR+K-MW}, we show that the family of $\K$-nMSR policies is the first nonpreemptive throughput-optimal family in the continuous MRJ setting, under the same mild assumption of \cref{thm: T-O of K-MW given Lipschitz condition}.

\begin{theorem}[Throughput-optimality of $\K$-nMSR]
    \label{thm: MSR+K-MW}
    In the continuous MRJ setting, the family of $\K$-nMSR policies is throughput-optimal for all $V$ with Lipschitz continuous density $f_V$.
\end{theorem}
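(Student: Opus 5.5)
The plan is to reduce the result to \cref{thm: T-O of K-MW given Lipschitz condition} together with the class-based throughput-optimality of nMSR from \citet{chen2024analyzing}. The key point is that both results rest on the same discrete measure dominance condition. The argument for \cref{thm: T-O of K-MW given Lipschitz condition} in fact proves something stronger than the stability of $\K$-MW. Given a policy $\pi$ that stabilizes the continuous system and a Lipschitz density $f_V$, it produces a discretization $\K$ and a distribution $\beta_\K$ on $C_\K$ such that the induced discretized service measure strictly dominates the discretized arrival measure. That is, $\eta^S_\K(\vi) > \eta^A_\K(\vi) = \Lambda_\K(\vi)$ for every $\vi \in \Gamma_\K$ with positive arrival rate, which places $\Lambda_\K$ in the interior of the capacity region of the discretized class-based system (\cref{lem: suff and necc cond for stability}). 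First, I will isolate this intermediate conclusion and restate it as a standalone fact.

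Next, I fix that $\K$ and view the $\K$-discretized system as an ordinary class-based MRJ system. It has $|\Gamma_\K|$ job classes with Poisson arrival rates $\Lambda_\K(\vi)$, which are Poisson because they are thinnings of the original arrival process. Durations are i.i.d.\ $Exp(1)$, and the set of schedulable configurations is $C_\K$. Every configuration in $C_\K$ is feasible for the real jobs, because rounding up requirements only overestimates resource usage. Under any $\K$-discretized policy, $q(t)$ is a CTMC that depends only on the class counts, exactly as in the model of \citet{chen2024analyzing}. Since $\Lambda_\K$ is strictly dominated by a convex combination of elements of $C_\K$, the nMSR precomputation in \citet{chen2024analyzing} applies. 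By Carath\'eodory's theorem it selects at most $|\Gamma_\K|$ service options whose mixture dominates $\Lambda_\K$, and it builds an irreducible modulating CTMC $M(t)$ over these options. Their theorem then gives positive recurrence of the joint process $(q(t), M(t))$, including the nonpreemptive transition states.

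Finally, I lift stability back to the continuous system. The continuous state is the list of requirement vectors. Under $\K$-nMSR, the discrete process $(q(t), M(t))$ evolves autonomously and does not depend on the exact requirements. The continuous system is empty precisely when $q = \vec 0$. Positive recurrence of the empty discrete state (with $M$ in some recurrent state) therefore gives positive recurrence of the empty continuous state. This is the same step used in the proof of \cref{thm: T-O of K-MW given Lipschitz condition}.

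I expect the main obstacle to be the bookkeeping in the first step. I need to check that the Lipschitz-based construction of $\beta_\K$ gives \emph{strict} dominance on every class with positive arrival rate. This matters because nMSR's drift argument needs slack, not just the inequality needed for MaxWeight. The delicate cases are classes with zero or very low arrival rate. I would handle these by dropping zero-rate classes from $\Gamma_\K$, since those classes see no arrivals almost surely. For low-rate classes, I would reuse the extra service mass that the $\K$-MW proof already allocates to them. Compatibility with the nonpreemptive model also needs checking: jobs in service hold their slots in $C_\K$. This is exactly what nMSR's nonpreemptive design handles, so nothing beyond the discrete theorem should be needed.
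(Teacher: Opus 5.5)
Your proposal is correct and follows the paper's proof. The paper likewise takes $\K$ and $\beta_\K$ from \cref{lem: Lipschitz condition induces discretized dominance}, converts discrete measure dominance to the convex-hull form via \cref{lem: equivalent suff and neec conds}, and invokes \cref{lem: Chen 2024 analyzing T-O nMSR} with $\Lambda_J=\Lambda_\K$ and $C^{(J)}=C_\K$. Your concern about slack is already settled: that lemma gives a uniform factor $1+\delta$ on every positive-rate type, and because $C_\K$ is downward closed this yields $(1+\delta)\Lambda_\K\in ConvH(C_\K)$ directly. The MaxWeight condition in \cref{lem: suff and necc cond for stability} needs the same strict slack, so nothing extra is required for nMSR.
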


The proof is deferred to \cref{ssec: Proof of T-O KnMSR}. The proof follows similar steps to \cref{thm: T-O of K-MW given Lipschitz condition}, because the stability condition of $\K$-nMSR (\cref{lem: Chen 2024 analyzing T-O nMSR}) is equivalent to that of $\K$-MW, as we described in \cref{lem: equivalent suff and neec conds}.

The candidate set $C_\K$ of all schedulable $\K$-discretized service options is extremely large for large $\K$. To execute $\K$-MW, the server has to solve a complicated optimization problem every time the state changes. Similarly, to execute $\K$-nMSR, the server requires to perform an expensive offline pre-computation to select the modulating process.

To avoid the computation expense of searching over all of $C_\K$, we provide more efficient families of policies by considering smaller subsets of candidate service options. These subsets of $C_\K$ are called efficient sets. Recall first the extreme-vertices set $E^X_\K \subset C_\K$ from \cref{sssec: Efficient discretized MaxWeight}, which is the set of all extreme points of $ConvH(C_\K)$. The following theorem states that the corresponding discretized MaxWeight and nMSR policies are throughput-optimal under the same assumption as \cref{thm: T-O of K-MW given Lipschitz condition}. 

\begin{theorem}[Throughput-optimality of X $\K$-EMW and X $\K$-EnMSR]
    \label{thm: vertices KEMW and KEnMSR}
    In the continuous MRJ setting, given the same conditions as in \cref{thm: T-O of K-MW given Lipschitz condition}, X $\K$-EMW and X $\K$-EnMSR are both throughput-optimal. Moreover, for any $\K \in \N^d$ fixed, $\K$-MW, X $\K$-EMW, $\K$-nMSR, and X $\K$-EnMSR all are stable for the same set of arrival rates.
\end{theorem}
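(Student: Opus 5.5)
The plan is to reduce everything to the stability conditions already available for class-based policies over a fixed discretization, and then to show that restricting to extreme vertices does not shrink the set of achievable service-rate vectors. By \cref{lem: suff and necc cond for stability}, $\K$-MW with candidate set $C_\K$ stabilizes the system whenever some distribution $\beta_\K$ over $C_\K$ induces a discretized service measure $\eta^S_\K$ that strictly dominates $\Lambda_\K$ coordinatewise. The same condition, with $C_\K$ replaced by the candidate set, governs stability of $\K$-nMSR (\cref{lem: Chen 2024 analyzing T-O nMSR}, via \cref{lem: equivalent suff and neec conds}). I would first check that these lemmas hold verbatim for an arbitrary candidate set $E\subseteq C_\K$ used in place of $C_\K$. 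For MaxWeight this is the standard drift argument with $L(q)=\sum_\vi q_\vi^2$: the drift term $-2\max_{M\in E}\langle M,q\rangle$ is compared against $-2\langle \eta^S_\K, q\rangle$, and this only needs $\beta_\K$ to be supported on $E$. For nMSR, the precomputation only needs a distribution on the candidate set whose mean dominates the arrivals. Consequently, the stability region of each policy with candidate set $E$ is exactly the set of $\Lambda_\K$ strictly dominated by some point of $ConvH(E)$, intersected with the positive orthant.

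The key step is the identity $ConvH(E^X_\K)=ConvH(C_\K)$. Since $C_\K$ is finite, $ConvH(C_\K)$ is a polytope. By Minkowski's theorem (Krein--Milman in finite dimension), it is the convex hull of its extreme points, and every extreme point of a polytope spanned by a finite set belongs to that set, so $E^X_\K\subseteq C_\K$. Hence any $\beta_\K$ on $C_\K$ can be replaced by a $\beta'_\K$ on $E^X_\K$ with the same mean $\E^{\beta_\K}[M]$, obtained by writing each $M\in C_\K$ as a convex combination of extreme vertices and mixing. The induced service measures then coincide, so the dominance condition holds for $E^X_\K$ if and only if it holds for $C_\K$. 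Combined with the equivalence of the MW and nMSR conditions, this shows that $\K$-MW, X $\K$-EMW, $\K$-nMSR and X $\K$-EnMSR are stable for exactly the same set of arrival rates, for every fixed $\K$. That is the second claim of the theorem.

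Throughput-optimality then follows immediately. Given a policy $\pi$ that stabilizes the continuous system, the proof of \cref{thm: T-O of K-MW given Lipschitz condition} produces a $\K$ and a $\beta_\K$ on $C_\K$ satisfying discrete measure dominance. By the previous paragraph the same $\K$ also works for X $\K$-EMW and X $\K$-EnMSR.

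The main obstacle is the first step: confirming that the sufficiency direction of the cited stability lemmas really depends only on the candidate set through its convex hull. For nMSR there is an extra requirement that the modulating chain on the chosen options be irreducible with the right stationary mean. I expect this to be handled by choosing at most $|\Gamma_\K|+1$ extreme vertices via Carathéodory's theorem and then applying the construction of \citet{chen2024analyzing} unchanged. One further point needs care: strict dominance must be preserved. This is automatic, because the mean vector is unchanged by the replacement of $\beta_\K$ with $\beta'_\K$.
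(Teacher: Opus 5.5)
Your proposal follows the paper's proof. Both rest on the identity $ConvH(E^X_\K)=ConvH(C_\K)$, use it to transfer the dominance (convex-hull) condition between $C_\K$ and $E^X_\K$ for the MaxWeight and nMSR variants, and then obtain throughput-optimality from \cref{thm: T-O of K-MW given Lipschitz condition}. One thing to tighten: your drift argument only gives sufficiency, so the claim that the stability regions coincide \emph{exactly} needs \cref{lem: nec cond of stability}. That lemma says any stable $\K$-discretized policy, including X $\K$-EMW, yields dominance over $C_\K$, and hence over $E^X_\K$; the paper uses it via \cref{rmk: suff and necc cond}.
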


The proof of \cref{thm: vertices KEMW and KEnMSR} is deferred to \cref{ssec: Proof of T-O Vertices Version}. The efficiency of X $\K$-EMW and X $\K$-EnMSR lies in the fact that the ratio $|E^X_\K|/|C_\K|$ vanishes when $\K$ grows. For example, in the single-resource setting, when $K=60$, both X efficient policies consider $0.53\%$ as many service options as $60$-MW, and when $K=100$, the ratio drops to $0.031\%$ (\citet{shlyk2023number}). However, while $E^X_\K$ significantly reduces the number of service option candidates, its cardinality still goes superpolynomially in $\max (\K)$, so there is still the room for efficiency.

In the search for yet more efficient policies, we turn to a simpler resource distribution in the MSJ (single-dimensional) setting. We consider resource distributions $V$ such that $f_V$ is weakly decreasing. This allows us to achieve throughput-optimality with an efficient set, whose size is $K$, in contrast to the super-polynomial candidate set sizes of $C_k$ and $E^X_K$. Recall the 2-bucket efficient sets from \cref{sssec: Efficient discretized MaxWeight}, with corresponding policies 2B $K$-EMW and 2B $K$-EnMSR. We show both are throughput-optimal in this decreasing density setting. 

\begin{theorem}[2B $K$-EMW and 2B $K$-EnMSR]
    \label{thm: 2DK-EMW decreasing}
    In the continuous MSJ setting with the weakly decreasing density function of $V$, the 2B $K$-EMW and 2B $K$-EnMSR families are throughput-optimal. In particular, for any $\lambda <\frac{1}{\E V}$, there exists some $K$ such that the queuing system is stable under 2B $K$-EMW and under 2B $K$-EnMSR. In addition, it is enough to let $K=2^L$, where $L = \lfloor -\log_2\lpar \frac{1}{\lambda}-\E V \rpar \rfloor+1$.
\end{theorem}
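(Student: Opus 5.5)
The plan is to reduce everything to the discrete measure-dominance condition of \cref{lem: suff and necc cond for stability}. That lemma says $K$-MW over a candidate set is stable whenever some distribution $\beta_K$ on that set has a service measure $\eta^S_K$ strictly dominating $\Lambda_K$ componentwise. Via \cref{lem: equivalent suff and neec conds} and \cref{lem: Chen 2024 analyzing T-O nMSR}, the same condition also gives stability of the corresponding nMSR policy. So for the given $\lambda<1/\E V$ and $K=2^L$, it suffices to build $\beta_K$ supported on $E^{2B}_K$ with $\eta^S_K(k) > \Lambda_K(k)$ for every $k\in\{1,\dots,K\}$.

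\textbf{Capacity budget.} Every option in $E^{2B}_K$ is fully packed. The option $M_k$ at level $\ell=\lceil\log_2 k\rceil$ uses $2^{L-\ell}\bigl(k+(2^\ell-k)\bigr)=K$ units, and $M_{2^\ell}$ uses $2^{L-\ell}2^\ell=K$ units. Hence, if no job type is over-served, the total mass $\sum_M\beta_K(M)$ equals the rounded-up work rate $\sum_k\Lambda_K(k)\,k/K$. This is at most $\lambda(\E V+1/K)$, because rounding up adds less than $1/K$ per job. The choice $L=\lfloor-\log_2(1/\lambda-\E V)\rfloor+1$ gives $2^{-L}<1/\lambda-\E V$, so $\lambda(\E V+2^{-L})<1$. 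This leaves a slack $\delta>0$, and I will spend it by inflating every service rate by a factor $(1+\varepsilon)$ with $\varepsilon$ small. The whole proof therefore reduces to an exact-fit decomposition: find $\beta_K$ whose service rate for each type $k$ is exactly $\Lambda_K(k)$.

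\textbf{Exact-fit decomposition.} First, decreasing $f_V$ implies that $\Lambda_K(1)\ge\Lambda_K(2)\ge\dots\ge\Lambda_K(K)$. I then allocate level by level, from $\ell=L$ down to $\ell=1$, maintaining residual demands $r(j)$, which start at $r=\Lambda_K$.
\begin{itemize}
\item For each non-power-of-two $k\in(2^{\ell-1},2^\ell)$, put mass $\beta(M_k)=2^{\ell-L}r(k)$. This serves type $k$ exactly and gives the partner $2^\ell-k\in(0,2^{\ell-1})$ the same amount $r(k)$; subtract that from $r(2^\ell-k)$.
\item For the power of two $2^\ell$, use the homogeneous option: $\beta(M_{2^\ell})=2^{\ell-L}r(2^\ell)$.
\end{itemize}
The induction hypothesis is that, when level $\ell$ is processed, the residual $r$ is nonnegative and nonincreasing on the non-power-of-two indices $\{1,\dots,2^\ell-1\}\setminus\{2^{\ell-1}\}$. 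In fact I will keep the whole interval $\{1,\dots,2^{\ell}-1\}$ ordered, apart from the power of two $2^{\ell-1}$, which is served by its own option. Given this, for $j<2^{\ell-1}$ we have $j<2^\ell-j$, so $r(2^\ell-j)\le r(j)$ and the partner is never over-served. Moreover, the new residual $j\mapsto r(j)-r(2^\ell-j)$ is again nonincreasing on $\{1,\dots,2^{\ell-1}-1\}$: as $j$ grows the first term decreases, and $2^\ell-j$ decreases so the subtracted term increases. This closes the induction. At the end every type has been served exactly its arrival rate and no capacity is wasted, which gives the budget identity above.

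\textbf{Conclusion and main obstacle.} Scaling this exact-fit $\beta$ by $(1+\varepsilon)$ and putting the remaining mass $1-(1+\varepsilon)\sum\beta$ on any option (or idling) yields strict dominance. \cref{lem: suff and necc cond for stability} then gives stability of 2B $K$-EMW, and the nMSR equivalence gives stability of 2B $K$-EnMSR. I expect the main obstacle to be the monotonicity bookkeeping in the induction. In particular the midpoint $2^{\ell-1}$ breaks monotonicity, since $r(2^{\ell-1}-1)$ after subtraction need not dominate $\Lambda_K(2^{\ell-1})$. This is why I isolate powers of two and serve them only through their homogeneous options $M_{2^\ell}$. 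I also need to check the edge cases $\ell=1$ (types $1$ and $2$) and verify that every residual stays nonnegative.
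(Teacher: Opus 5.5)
Your proposal is correct and is essentially the paper's proof. Your top-down residual update $r(j)\mapsto r(j)-r(2^\ell-j)$, with its monotonicity invariant, is the paper's recursion for $p^{(r)}_j$; you orient the subtraction correctly, whereas the paper's displayed recursion has the two terms swapped. Your full-packing identity $\sum_M\beta_K(M)=\sum_k\Lambda_K(k)\,k/K\le\lambda(\E V+2^{-L})$, followed by the $(1+\varepsilon)$ scaling, is exactly the paper's budget bound.

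A few small additions are needed. First, include the level-$0$ option $M_1$ for type $1$. Second, a power of two $2^m$ is also the partner in $M_{2^{\ell'}-2^m}$ for every $\ell'\ge m+2$. Your subtraction rule already handles this, so only your closing remark that powers of two are served \emph{only} by $M_{2^m}$ is inaccurate. Third, cite \cref{lem: MRJ stability region subset}, so that stability for every $\lambda<1/\E V$ actually yields throughput-optimality.
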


The proof of Theorem is deferred to \cref{ssec: Proof of Decreasing 2D Version}.

\noindent\textit{Proof Sketch:} We first derive the upper bound of the stability region in \cref{lem: MRJ stability region subset}, by showing no policy is stable when $\lambda\geq \frac{1}{\E V}$. Next, we use \cref{lem: suff and necc cond for stability} to prove that when $\lambda <\frac{1}{\E V}$, there exists some $K$ such that both 2B $K$-EMW and 2B $K$-nMSR are stable. The key idea in our construction of $\beta_\K$ over $E^{2B}_K$ to achieve discrete measure dominance is to greedily assign mass to each service option in a clever order. We assign mass starting with service options serving jobs with higher resource requirements down to lower requirements. The weakly decreasing $f_V$ allows us to construct a $\beta_\K$ distribution over service options, guaranteeing discrete measure dominance while only using service options that fully utilize all resources.
\hfill $\square$ 

\cref{thm: 2DK-EMW decreasing} does not need to assume any continuity condition for $f_V$ Lipschitz or otherwise. Our choice of $K$ is fully explicit, depending only on the arrival rate $\lambda$ and the expectation $\E V$. 

We then focus on another setting, where the resource requirement distribution $V$ is centrally symmetric and is of arbitrary dimension. We have the 2-Job efficient set $E^{2J}_\K$, introduced in \cref{sssec: Efficient discretized MaxWeight}. $E^{2J}_\K$ contains $\prod \K/2$ service options, half as many as the number of job types. Recall from \cref{sssec: Efficient discretized MaxWeight,sssec:KE nMSR}, we call the corresponding efficient MaxWeight policy and nMSR policy 2J $\K$-EMW and 2J $\K$-EnMSR, respectively. \cref{thm: 2JK-EMW Unif} shows the throughput-optimality result for 2J $\K$-EMW and 2J $\K$-EnMSR, for resource requirement $V$ such that its p.d.f. $f_V$ is Lipschitz continuous and centrally symmetric with respect to $\frac{1}{2} \vec 1$.

\begin{theorem}[2J $\K$-EMW and 2J $\K$-EnMSR]
    \label{thm: 2JK-EMW Unif}
    In the continuous MRJ setting, assume that the p.d.f. $f_V$ is Lipschitz continuous and centrally symmetric with respect to $\frac{1}{2} \vec 1$. In this case, both the 2J $\K$-EMW and 2J $\K$-EnMSR families are throughput-optimal. To be specific, for any arrival rate $\lambda <2$, there exists some $\K$ such that both 2J $\K$-EMW and 2J $\K$-EnMSR stabilize the system. 
     
    Moreover, if $V \sim Unif(( 0,1]^d)$, any $\K \succeq K \vec 1$ stabilizes the system, where $K$ is the smallest odd integer such that $K \geq \lfloor \frac{2\lambda d}{2-\lambda} \rfloor+1$. Additionally, in the single-resource setting, it suffices to choose an odd $K$ such that $K\geq \lfloor \frac{\lambda}{2-\lambda}\rfloor +1$.
\end{theorem}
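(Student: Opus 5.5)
We reduce the statement to the discrete measure dominance criterion of \cref{lem: suff and necc cond for stability}, applied with candidate set $E^{2J}_\K$. By \cref{lem: equivalent suff and neec conds}, the same criterion also covers the nMSR variant. So it suffices to exhibit, for every $\lambda<2$, a discretization $\K$ and a distribution $\beta_\K$ on $E^{2J}_\K$ such that $\eta^S_\K(\vi) > \Lambda_\K(\vi)$ for every $\vi\in\Gamma_\K$.

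First, $\lambda<2$ is exactly the full capacity region. Central symmetry of $f_V$ about $\frac12\vec1$ gives $\E V=\frac12\vec1$. Each resource has capacity $1$ and each job needs mean $\frac12$ of it, so no policy can be stable for $\lambda\ge 2$; this is the analogue of \cref{lem: MRJ stability region subset}.

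Next we construct $\beta_\K$. Take all $K_l$ odd, so that no bucket is self-paired. The reflection $\vec v\mapsto \vec1-\vec v$ maps bucket $I_\vi$ onto the bucket $I_{\K+\vec 1-\vi}$, up to boundary conventions. Hence symmetry gives $\Lambda_\K(\vi)=\Lambda_\K(\K+\vec1-\vi)$ exactly, with no Lipschitz error. Note, however, that the paper's option $M_\vj=\{\vj,\K-\vj\}$ pairs $\vj$ with $\K-\vj$, not with its mirror $\K+\vec1-\vj$. This is the central difficulty: pairs must round down in type index so that the pair is admissible ($\sum$ of type indices $\le K_l$), while the arrival masses are matched by the shifted reflection.

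So we would argue as in the proof of \cref{thm: T-O of K-MW given Lipschitz condition}. Assign to option $M_\vj$ a weight $w_\vj$. Type $\vi$ is then served at rate $w_\vi$ (from $M_\vi$) or $w_{\K-\vi}$ (from $M_{\K-\vi}$), plus the singleton boundary options for $\vi\in\mathcal B^\K$. We need $w_\vj\ge \max(\Lambda_\K(\vj),\Lambda_\K(\K-\vj))$ on paired types, and a weight on each boundary singleton exceeding its arrival mass. The total weight is then
\[
\sum_{\vj \text{ paired}} \max\bigl(\Lambda_\K(\vj),\Lambda_\K(\K-\vj)\bigr)+\sum_{\vj\in\mathcal B^\K}\Lambda_\K(\vj),
\]
and this must be strictly below $1$. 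Lipschitz continuity gives $|\Lambda_\K(\vj)-\Lambda_\K(\K+\vec1-\vj)|$ and neighbouring-bucket differences of order $\lambda L\,\mathrm{vol}(I_\vj)\max_l K_l^{-1}$. Summing, the paired part is at most $\frac{\lambda}{2}+O(\lambda L/\min K_l)$. The factor $\frac12$ arises because each pair absorbs two mirrored masses, which sum to at most the total $\lambda$. The boundary part is $\lambda\,\pro(V\in \text{boundary slab})=O(\lambda\sum_l 1/K_l)$. Since $\lambda/2<1$, choosing $\K$ large makes the total less than $1$, with slack to make all inequalities strict. For types with zero arrival mass the inequality is strict automatically after a small uniform perturbation of the weights.

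For uniform $V$ the computation is explicit, and we would carry it out directly. Every bucket has mass $\lambda/\prod\K$. The boundary set has $\prod\K-\prod(\K-\vec1)\le \sum_l \prod\K/K_l$ elements. The pairs cover the remaining $\prod(\K-\vec1)$ types, so they contribute $\frac{\lambda}{2}\prod(\K-\vec1)/\prod\K$. The condition becomes
\[
\frac{\lambda}{2}+\frac{\lambda}{2}\,\frac{\prod\K-\prod(\K-\vec1)}{\prod\K}<1 .
\]
Bounding the boundary fraction by $d/K$ for $\K\succeq K\vec1$ yields $K>\frac{\lambda d}{2-\lambda}$. This is implied by the stated $K\ge\lfloor\frac{2\lambda d}{2-\lambda}\rfloor+1$, and in $d=1$ it gives $K\ge\lfloor\frac{\lambda}{2-\lambda}\rfloor+1$. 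The one-dimensional case is exact, since the boundary is the single type $K$ with fraction $1/K$.

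The main obstacle is the index mismatch between $\K-\vj$ and the mirror $\K+\vec1-\vj$ in the general Lipschitz case. We would handle it with the round-down argument of \cref{thm: T-O of K-MW given Lipschitz condition}, bounding adjacent-bucket mass differences. Low-density buckets need care, since relative errors there are not small. We would first try to absorb them by using additive rather than multiplicative Lipschitz errors, since all we need is a bound on the total weight.
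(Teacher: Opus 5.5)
Your proposal is correct and is essentially the paper's proof: odd $K$, pairs $\{\vj,\K-\vj\}$ weighted by $(1+\epsilon)\max\bigl(\Lambda_\K(\vj),\Lambda_\K(\K-\vj)\bigr)$, boundary singletons weighted by their own arrival mass, and symmetry plus the $K^{-1}\vec 1$ shift giving $|\Lambda_\K(\vj)-\Lambda_\K(\K-\vj)|\le \lambda C\sqrt{d}K^{-d-1}$, hence total mass $\lambda/2+O(1/K)$. The ``obstacle'' in your last paragraph is already handled by that additive bound, since the max assignment gives dominance directly and low-density buckets need no round-down treatment; your uniform count, which credits the pairs only with the non-boundary mass, is slightly sharper than the paper's and gives $K>\frac{\lambda d}{2-\lambda}$ instead of $\frac{2\lambda d}{2-\lambda}$.
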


The proof of \cref{thm: 2JK-EMW Unif} is deferred to \cref{ssec: Proof of Unif 2J Version}. 

\noindent\textit{Proof Sketch:} On the one hand, we use \cref{lem: MRJ stability region subset} to show that no policy is stable when $\lambda \geq 2$. On the other hand, for any $\lambda<2$, we construct discrete measure dominance and apply \cref{lem: suff and necc cond for stability}. The key idea is to assign mass to the option serving the type-$\vi$ jobs and the type-$(\K-\vi)$ equal to the larger arrival rate of the two job types. The two arrival rates differ negligibly due to the symmetry Lipschitz condition of $f_V$. We also separately handle the boundary set, involving jobs that fully utilize a resource. By choosing an appropriately large $\K$, which depends both on the Lipschitz constant and the size of the boundary set, we construct a distribution $\beta_\K$ satisfying the discrete measure dominance condition, similar to the proof of \cref{thm: T-O of K-MW given Lipschitz condition}. \hfill $\square$

In the single-dimensional (MSJ) setting, the efficient set has cardinality $|E^{2J}_K| = \frac{K+1}{2}$, which grows linearly in $K$, compared with $|C_K|$'s exponential growth in $K$. In the multi-dimensional (MRJ) setting, when we let $\K = K\vec 1$, the cardinality of the 2-Job $K$-efficient set $|E^{2J}_\K|$ grows exponentially in $d$ and polynomially in $K$, whereas $|C_\K|$ grows exponentially both in $K$ and $d$.

We note in \cref{cor: Backfilling Maintains Stability}, that our stability condition for a MaxWeight-type policy also suffices to demonstrate stability when Backfilling is incorporated, as described in \cref{ssec: Backfilling}. Consequently, under the same conditions as \cref{thm: T-O of K-MW given Lipschitz condition,thm: vertices KEMW and KEnMSR,thm: 2DK-EMW decreasing,thm: 2JK-EMW Unif}, the corresponding families of $\K$-MW-B and $\K$-EMW-B are also throughput optimal. In \cref{sec: empirical results}, we further examine the empirical performance. In addition, we empirically demonstrate that our policies achieve the state-of-the-art performance, both on parametric distribution and datacenter traces. 

\section{Proofs}
\label{sec: proofs}

In this section, we provide the full proofs of our main results in \cref{sec: main results}. 

In \cref{ssec: Stability Condition for MaxWeight Family}, we start by proving a general sufficient condition for stability (\cref{lem: suff and necc cond for stability}) for $\K$-discretized MaxWeight ($\K$-MW) and Efficient $\K$-MW ($\K$-EMW), with given $\K$ and efficient set $E_\K$. We also show in \cref{cor: Backfilling Maintains Stability} that this condition suffices for stability for $\K$-MW with Backfilling ($\K$-MW-B) and $\K$-EMW with Backfilling ($\K$-EMW-B).

In \cref{ssec: proof of T-O KMW}, we prove our first main result, \cref{thm: T-O of K-MW given Lipschitz condition}, demonstrating throughput-optimality for the $\K$-MW family. To do so, we state and prove \cref{lem: Lipschitz condition induces discretized dominance}, which guarantees the existence of a discretization parameter $\K$ based on the arrival measure satisfying the stability condition in \cref{lem: suff and necc cond for stability}. We then use \cref{lem: nec cond of stability,lem: Lipschitz condition induces discretized dominance} to show the throughput-optimality of the $\K$-MW family (\cref{thm: T-O of K-MW given Lipschitz condition}).

In \cref{ssec: Proof of T-O KnMSR}, we connect the sufficient stability condition in \cref{lem: suff and necc cond for stability} with the existing stability condition of $\K$-nonpreemptive Markovian Service Rate ($\K$-nMSR) \cite{chen2024analyzing}. We then combine the stability condition with \cref{lem: Lipschitz condition induces discretized dominance} and show the throughput-optimality of the $\K$-nMSR family (\cref{thm: MSR+K-MW}).

In \cref{ssec: Proof of T-O Vertices Version}, we prove \cref{thm: vertices KEMW and KEnMSR}, which states the throughput-optimality of the Extreme-vertices $\K$-EMW (X $\K$-EMW) family and X $\K$-EnMSR family. We also show the throughput-optimality of the Pairwise Extreme-vertices $\K$-EMW (XP $\K$-EMW) family and XP $\K$-EnMSR family, which is a consequence of our proof. 

In \cref{ssec: Necessary Stability Condition for an MRJ System: superset}, we prove \cref{lem: MRJ stability region subset}, which gives an upper bound of the stability region for any MRJ system.
In \cref{ssec: Proof of Decreasing 2D Version}, we prove the 2-Bucket $K$-EMW (2B $K$-EMW) family and the 2B $K$-EnMSR family, when $V$ has single dimensional decreasing density (\cref{thm: 2DK-EMW decreasing}).
Finally, in \cref{ssec: Proof of Unif 2J Version}, we focus on symmetrical distribution of $V$ and show the throughput-optimality of the 2-Job $\K$-EMW (2J $\K$-EMW) family and the 2J $\K$-EnMSR family (\cref{thm: 2JK-EMW Unif}).
The core idea is that, for any arrival rate below the upper bound given in \cref{lem: MRJ stability region subset}, we explicitly construct some $\K$ and an appropriate distribution $\beta_\K$ over each efficient set, such that the stability conditions are satisfied.

\subsection{Sufficient Stability Condition for $\K$-MW and $\K$-EMW Families}
\label{ssec: Stability Condition for MaxWeight Family}

We now provide our sufficient stability condition for ${\vec K}$-MW and ${\vec K}$-EMW.

\begin{lemma}
\label{lem: suff and necc cond for stability}
    In the continuous MRJ setting, for a fixed $\K \in \N^d$, the $\K$-MW policy is stable if there exists a service option distribution $\beta_\K$ such that its corresponding service measure $\eta_{\vec K}^S$ (see \cref{def: service measure}), satisfies the following discrete measure dominance condition
    \begin{equation}
        \label{equ: condition discrete dominance mu and nu}
        \eta_{\vec K}^S (\vec {i}) > \eta^A_{\vec K}(\vec {i}), \quad \forall \vec {i} \in  \Gamma_{\vec K}.
    \end{equation}
    Moreover, the $\K$-EMW policy with efficient set $E_{\vec K}: = \{M_j\}_{j\in \mathcal J}$ is stable if the condition \eqref{equ: condition discrete dominance mu and nu} holds, and the support of $\beta_\K$ is a subset of $E_{\vec K}$.
\end{lemma}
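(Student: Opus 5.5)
We prove this with the standard quadratic Lyapunov drift argument for MaxWeight, applied to the discretized CTMC $\{q(t)\}_{t\ge 0}$ of \cref{ssec: Auxiliary Notations}. Under $\K$-MW (or $\K$-EMW) the scheduling decision depends only on the type counts $q(t)$. Durations are $Exp(1)$ and arrivals are Poisson, so type-$\vi$ jobs arrive at rate $\eta^A_\K(\vi)$, and each job in service completes at rate $1$. Hence $q(t)$ is an irreducible CTMC on $\N^{\Gamma_\K}$ (irreducibility via the empty state), with transitions $q\to q+e_\vi$ at rate $\eta^A_\K(\vi)$ and $q\to q-e_\vi$ at rate $\min(M^{(\vi)}(t),q_\vi)$. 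We must also check that positive recurrence of $q$ implies positive recurrence of the empty state in $\mathcal S$. This holds because the continuous state is empty exactly when $q=\vec 0$, and the return times coincide. It therefore suffices to show that $q$ is positive recurrent, via Foster--Lyapunov with $L(q)=\sum_{\vi} q_\vi^2$.

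We compute the drift \eqref{equ: drift of V}. For the policy's choice $M=M(q)$,
\[
\Delta L(q) = \sum_{\vi}\Big(2q_\vi\big(\eta^A_\K(\vi) - \min(M^{(\vi)},q_\vi)\big) + \eta^A_\K(\vi)+\min(M^{(\vi)},q_\vi)\Big).
\]
The second-order terms are bounded by a constant $B=\lambda+\max_{M\in C_\K}\sum_\vi M^{(\vi)}$, since $C_\K$ is finite. When $q_\vi< M^{(\vi)}$ we have $q_\vi\le K_{\max}$, so replacing $\min(M^{(\vi)},q_\vi)$ by $M^{(\vi)}$ in the linear term costs at most a constant $B'$. (If the argmax selects more jobs than are present, the idle slots contribute only bounded error.) This gives
\[
\Delta L(q)\le 2\langle q,\Lambda_\K\rangle - 2\langle q, M(q)\rangle + B+B'.
\]

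Now we use the maximizing property. Since $\operatorname{supp}\beta_\K\subseteq C_\K$ (respectively $\subseteq E_\K$), and $M(q)$ maximizes $\langle M,q\rangle$ over that set, we have $\langle q,M(q)\rangle\ge \sum_M\beta_\K(M)\langle q,M\rangle=\langle q,\eta^S_\K\rangle$. This is the only place the EMW variant differs: it needs exactly the assumption that the support of $\beta_\K$ lies in $E_\K$. Let $\epsilon=\min_{\vi}(\eta^S_\K(\vi)-\eta^A_\K(\vi))$, which is positive because $\Gamma_\K$ is finite and \eqref{equ: condition discrete dominance mu and nu} is strict. Then
\[
\Delta L(q)\le -2\epsilon\sum_\vi q_\vi + B+B'.
\]
This is at most $-1$ outside the finite set $\{q:\sum_\vi q_\vi\le (B+B'+1)/(2\epsilon)\}$. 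Transition rates out of each state are bounded, so the Foster--Lyapunov criterion for CTMCs gives positive recurrence.

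The main obstacle is bookkeeping rather than depth. We need to handle the case where the maximizer wants more type-$\vi$ jobs than are available, and justify the reduction from the continuous state to $q$ (including a tie-breaking rule for the argmax, which we fix deterministically). We will also remark that Backfilling only adds extra service, which can only help, for \cref{cor: Backfilling Maintains Stability}. Here, though, we only need the pure $\K$-MW/$\K$-EMW case.
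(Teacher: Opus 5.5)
Your proposal is correct and follows essentially the paper's proof: a quadratic Lyapunov function on the discretized CTMC $q(t)$, the MaxWeight comparison $\langle q, M(q)\rangle \ge \langle q, \eta^S_\K\rangle$ (valid because $\mathrm{supp}\,\beta_\K \subseteq E_\K$), and constant bounds on the second-order and idle-service terms, corresponding to the paper's terms (I), (II), and (III). The differences are cosmetic: you compute the drift directly from the generator, and you use the additive slack $\min_{\vi}(\eta^S_\K(\vi)-\eta^A_\K(\vi))$ instead of the paper's multiplicative $(1+\delta)$ factor with $\lambda_{\min}$, which avoids the paper's separate handling of zero-arrival buckets.
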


\begin{proof}
    Given that the measure dominance condition \eqref{equ: condition discrete dominance mu and nu} holds, we will use the Foster-Lyapunov Theorem \cite{meyn1993stability} to prove the stability of $\{q( t )\}_{t\geq 0}$, the CTMC defining the $\K$-discretized system in \cref{ssec: Auxiliary Notations}. In this context, we define the Lyapunov function $L: \N^{\Gamma_\K} \rightarrow \R$ such that
    \begin{equation}
        \label{equ: Lyapunov function discretized case}
        L(q) = \norm{q}_2^2 = \sum_{\vec i\in \Gamma_{\vec K}} q_{\vec i}^2.
    \end{equation}
    We let $\Delta_M \circ L(q)$ denote the drift \eqref{equ: drift of V} of $L(q)$ under our $K$-EMW policy for some efficient set, which also applies to the full $K$-MW. The Foster-Lyapunov Theorem states that the CTMC $\{q( t )\}_{t\geq 0}$ is positive recurrent if there exist constants $c>0$, $b>0 $ and a finite set $C \subset \N^{\Gamma_\K}$ such that
    \begin{equation}
        \label{equ: FL in the CTMC}
        \Delta_M\circ L(q) \leq -c + b \one{\{q \in C\}}, \quad \text{for all }q \in \N^{\Gamma_\K}.
    \end{equation}

    We define the random vector $A(t) := (A_{\vec i}( t )) \in \N^{\Gamma_{\vec K}}$, where for each job type $\vi \in \Gamma_\K$, the entry $A_\vi(t)$ denotes the numbers of type-$\vec i$ jobs arriving in the time period $(0,t]$. We define the potential completion $S(t)\in \N^{\Gamma_{\vec K}}$, and the realized completion $S(t)\in \N^{\Gamma_{\vec K}}$ similarly, such that $\tilde{S}_{\vec i}(t)$ and ${S}_{\vec i}(t)$ denote the number of type-$\vi$ jobs our policy attempts to complete and actually completes, respectively. 
    
    We now clarify the distinction between $\tilde S_\vi(t)$ and $S_\vi(t)$. During a time period with service option $M$ that attempts to serve $M^{(\vi)}$ type-$\vi$ jobs, the potential completion process of job type $\vi$ is a Poisson process $PP(M^{(\vi)})$. However, the system may have less than $M^{(\vi)}$ type-$\vi$ jobs during this period, resulting in the realized completions $\tilde{S}_{\vec i}(t)$ being less than the potential completions $S_\vi(t)$. Hence, we note that $\tilde{S}_{\vec i}(t) \leq S_{\vec i}(t)$ because some potential service may go unused. 
    
    We now upper bound the drift $\Delta_M \circ L$ of our Lyapunov function. Given any $q \in \N^{\Gamma_{\vec K}}$,
    \begin{align}
        & \Delta_M \circ L(q)  = \lim\limits_{t\downarrow 0} \frac{1}{t} \E \lbr V(q( t )) - V(q(0))\,|\, q(0) = q\rbr \nonumber \\
        & = \lim\limits_{t\downarrow 0} \frac{1}{t} \E \Big[ \norm{q(0)+A( t ) - \tilde{S}( t ) }_2^2 - \norm{q(0)}_2^2 \,\bigg |\, q(0) = q \Big] \nonumber\\
        & =\lim\limits_{t\downarrow 0} \frac{1}{t}  {\E \Big[ \norm{A( t ) - \tilde{S}( t )}_2^2  \,\Big |\, q(0) = q\Big]}  + 2\lim\limits_{t\downarrow 0} \frac{1}{t}{ \E  \Big[\langle q(0), A( t ) - \tilde S( t ) \rangle \,\Big |\, q(0) = q \Big]} \nonumber \\
        & =\lim\limits_{t\downarrow 0} \frac{1}{t} \bigg\{ \underbrace{\E \Big[ \norm{A( t ) - \tilde{S}( t )}_2^2  \,\Big |\, q(0) = q \Big]}_{( I )}  + 2\underbrace{ \E  \Big[ \langle q(0), A( t ) -  S( t ) \rangle \,\Big |\, q(0) = q\Big]}_{(II)} \nonumber  \\
        & \quad \quad\quad \quad\quad + 2\underbrace{ \E  \Big[\langle q(0), S( t ) - \tilde S( t ) \rangle \,\Big |\, q(0) = q\Big]}_{(III)} \bigg \} \label{equ: sum of three items}
    \end{align}

We now upper bound $(I)$, $(II)$, and $(III)$. For $( I )$, by letting $K_m := \min\{K_1,...,K_d\}$, we have that
\begin{allowdisplaybreaks}[0]
\begin{align}
    ( I ) & = \E \Big[ \sum\limits_{\vec i \in \Gamma_{\vec K}} \Big( A_{\vec i}( t ) - \tilde S_{\vec i}( t ) \Big)^2 \, \Big| \, q(0) = q\Big] \nonumber\\
    & = \sum_{\vec i \in \Gamma_{\vec K}} \E A_{\vec i}^2( t ) + \E \Big[  \sum_{\vec i \in \Gamma_{\vec K}}  \tilde S_{\vec i}^2( t ) \Big| \, q(0) = q\Big] -2 \sum_{\vec i \in \Gamma_{\vec K}} \E A_{\vec i}( t )\, \E \Big[  \tilde S_{\vec i}( t ) \Big| \, q(0) = q\Big] \nonumber \\
    & \leq \E  \Big[  \Big( \sum_{\vec i \in \Gamma_{\vec K}} A_{\vec i}( t ) \Big)^2 \Big] + \E \Big[ \Big(\sum_{\vec i \in \Gamma_{\vec K}} \tilde S_{\vec i}( t ) \Big)^2 \Big| \, q(0) = q \Big] \label{equ: nonneg A and S tilde} \\
    & \leq \E  \Big[ \Big( \sum_{\vec i \in \Gamma_{\vec K}} A_{\vec i}( t ) \Big)^2 \Big] + \E \Big[\Big( \sum_{\vec i \in \Gamma_{\vec K}}  S_{\vec i}( t ) \Big)^2 \Big| \, q(0) = q \Big] \label{equ: tilde S_k leq S_k} \\
    & \leq \lambda t + \lambda^2 t^2 + K_mt + K_m^2t^2.\label{equ: two PP}
\end{align}
\end{allowdisplaybreaks}
where 
\begin{itemize}
    \item \eqref{equ: nonneg A and S tilde} is due to $A_{\vec i}( t )$ and $\tilde S_{\vec i}( t )$ are nonnegative for all $\vec i$,
    \item \eqref{equ: tilde S_k leq S_k} is due to $\tilde S( t ) \preceq S( t )$, and 
    \item \eqref{equ: two PP} is due to $\sum_{\vec i \in \Gamma_{\vec K}} A_{\vec i}( t )\sim \text{Poisson}(\lambda t)$ and $\sum_{\vec i \in \Gamma_{\vec K}}  S_{\vec i}( t )$ is stochastically dominated by $\text{Poisson}(K_m t)$, because the server can serve at most $K_m$ jobs simultaneously, which in turn holds because the minimum resource $m$ requirement is $1/K_m$.
\end{itemize}
Hence, $ \lim\limits_{t\downarrow 0} \frac{( I )}{t} \leq \lambda +K_m$. Note that this bound holds for any $\K$-discretized policy.

For $(II)$, we primarily focus on $\vec K$-EMW policy \eqref{equ: K-discretized maxweight d-dim} with some efficient set $E_\K$. $\K$-MW can be treated as a special case where the efficient set $E_\K = C_\K$.

The corresponding service process is approximated by an entry-wise independent Poisson distribution over short time intervals, such that $S^{EMW}(t) \sim \text{Poisson}(M^{MW\text{-}E_\K}(0) t)+\epsilon_t$, where $\epsilon_t$ satisfies $\lim\limits_{t\downarrow 0} \pro(\norm{\frac{\epsilon_t}{t}}_2 > \delta) = 0$ for any $\delta>0$. In addition, the discrete measure dominance condition \eqref{equ: condition discrete dominance mu and nu} means that there exist some positive $\delta$ such that for all $\vi \in \Gamma_\K$,
\begin{equation}
    \label{equ: (1+delta) upper bound}
    \eta^S_\K(\vi) = \sum\limits_{j\in \mathcal J} \beta_j  M_j^{(\vec i)} \geq (1+\delta) \eta^A_\K = (1+\delta) \Lambda_\K (\vec i).
\end{equation} 
Therefore, term (II) has the following upper bound:
\begin{allowdisplaybreaks}[0]
\begin{align}
    (II) & = \langle q \,, \,\Lambda_{\vec K} t - M^{MW\text{-}E_\K}(0)t + o( t )\rangle \nonumber \\
    & = \langle q \,, \, \Lambda_{\vec K} t+ o( t ) \rangle -\langle q\,,\, t \sum\limits_{j=1}^J \beta_j  M^{MW\text{-}E_\K}(0) \rangle \nonumber\\
    & = \langle q \,, \, \Lambda_{\vec K} t+ o( t ) \rangle - t \sum\limits_{j=1}^J \beta_j \langle q\,,\,  M^{MW\text{-}E_\K}(0) \rangle \nonumber\\
    &\leq \langle q \,, \, \Lambda_{\vec K} t+ o( t ) \rangle -t \sum\limits_{j=1}^J \beta_j  \langle q\,,\,  M_j \rangle \label{equ: due to def of MW}\\
    & \leq \langle q \,, \, \Lambda_{\vec K} t+ o( t ) \rangle -\langle q\,,\, (1+\delta)\Lambda_{\vec K} t\rangle  \label{equ: use (1+delta) upper bound}\\
    & =  \langle q \,, \, -\delta \Lambda_{\vec K} t+ o( t ) \rangle,\nonumber
\end{align}
\end{allowdisplaybreaks}
where inequality \eqref{equ: due to def of MW} is due to the definition of $\vec K$-EMW, and inequality \eqref{equ: use (1+delta) upper bound} is due to \eqref{equ: (1+delta) upper bound}.
As a result, \[\lim\limits_{t\downarrow 0} (II)/t \leq -\delta \langle q\,,\, \Lambda_{\vec K} \rangle \leq \delta \lambda_{min} \norm{q}_1,\] where $\lambda_{min} = \min\{\pro(V\in I_{\vec i}): \pro(V\in I_{\vec i}) >0, {\vec i} \in \Gamma_{\vec K} \}$ is the smallest positive arrival rate for any job type $\vi$. Here we disregard the buckets $I_\vi$ with $\pro(V\in I_\vi) = 0$ because it is almost never that $q(\vec i) > 0$ occurs for such buckets.

Finally, for item $(III)$ we again consider $\K$-MW as a special case of $\K$-EMW. We have $(III) = \sum_{{\vec i \in \Gamma_{\vec K}}} \E[  q_{\vec i}(S_{\vec i}( t )-\tilde S_{\vec i}( t ) | q(0) = q]$. For each job type $\vi$, notice that $\lim\limits_{t\downarrow 0} \frac{\pro(S_{\vec i}( t )-\tilde S_{\vec i}( t ) > 0)}{t}>0$ only if the queue length for type-$\vi$ jobs is smaller than the number of type-$\vi$ jobs that $M^{MW\text{-}E_\K}$ attempts to serve, i.e., $q_{\vec i}< M^{MW\text{-}E_\K}_{\vec i}(0)$. Therefore, the upper bound for $\lim\limits_{t\downarrow0}(III)/t$ is as follows:
\begin{align*}
    \lim\limits_{t\downarrow0}\frac{1}{t} (III) & \leq \lim\limits_{t\downarrow0} \frac{1}{t}\E  \lbr \langle q(0), S( t ) \rangle \,\Big |\, q(0) = q,\, q \preceq M^{MW\text{-}E_\K} (0)\rbr\\
    & \leq \lim\limits_{t\downarrow0} \frac{1}{t} \sum\limits_{{\vec i \in \Gamma_{\vec K}}} \E  \lbr M^{MW\text{-}E_\K}_{\vec i}(0) \cdot S_{\vec i}( t ) \,\Big |\, q(0) = q, \, q \preceq M^{MW\text{-}E_\K} (0) \rbr\\
    & \leq \lim\limits_{t\downarrow0} \frac{1}{t} \sum\limits_{{\vec i \in \Gamma_{\vec K}}}  (M^{MW\text{-}E_\K}_{\vec i}(0))^2 t   \\
    & = \norm{M^{MW\text{-}E_\K}(0)}_2^2\\
    & \leq K_m^2,
\end{align*}
where the last inequality is due to $\norm{M^{MW\text{-}E_\K}(0) }_2^2 \leq \norm{M^{MW\text{-}E_\K}(0) }_1^2\leq K_m^2$. 

By adding the above three limits for $(I)$, $(II)$, and $(III)$ in \eqref{equ: sum of three items}, we have shown that $\Delta_M \circ L(q) \leq K_m+2 K_m^2+\lambda - 2\delta \lambda_{min} \norm{q}_1$. Hence, for any $c>0$, when 
\begin{equation}
    \norm{q}_1 \geq \frac{K_m+2K_m^2+\lambda+c}{2\lambda_{min} \delta}, \label{equ: outside the L1 norm bound}
\end{equation} 
we can guarantee the drift $\Delta L(q) < -c$. Therefore, we can define the finite set $C$ to contain all $q$ for which \eqref{equ: outside the L1 norm bound} does not hold such that the Foster-Lyapunov condition is satisfied, and the CMTC $\{q(t)\}$ is positive recurrent. With the positive recurrence of the CTMC, we have proven that the empty state of the discretized system is positive recurrent. As a result, the empty state of the original continuous MRJ system is also positive recurrent. This concludes our stability proof.
\end{proof}

We will use \cref{lem: suff and necc cond for stability} in our proofs of \cref{thm: T-O of K-MW given Lipschitz condition,thm: vertices KEMW and KEnMSR,thm: 2DK-EMW decreasing,thm: 2JK-EMW Unif} for the $\K$-MW, X $K$-EMW, 2B $K$-EMW, and 2J $\K$-EMW families, respectively. With the stability condition in \cref{lem: suff and necc cond for stability} proven, we now give some extensions and related comments. First, we provide an equivalent sufficient stability condition for the discrete measure dominance condition \eqref{equ: condition discrete dominance mu and nu}.

\begin{lemma}[Convex Hull Condition]
\label{lem: equivalent suff and neec conds}
    In the continuous MRJ setting, for a fixed $\K \in \N^d$, $\K$-MW  is stable if the $\K$-arrival rate vector $\Lambda_\K$ is in the interior of the convex hull of the schedulable set $C_\K$. That is, the following convex hull condition holds
    \begin{equation}
        \label{equ: condition Lambda in convex hull of C_K KMW}
        \Lambda_{\vec K} \subseteq int\big( ConvH(C_{\vec K})\big).
    \end{equation}
    Likewise, $\K$-EMW is stable if the following convex hull condition holds for the efficient set $E_{\vec K}$
    \begin{equation}
        \label{equ: condition Lambda in convex hull of C_K}
        \Lambda_{\vec K} \subseteq int\big( ConvH(C_{\vec K})\big).
    \end{equation}
    Specifically, the discrete measure dominance condition holds \eqref{equ: condition discrete dominance mu and nu} for some service option distribution $\beta_\K$ over an efficient set $E_\K$ if and only if the convex hull condition \eqref{equ: condition Lambda in convex hull of C_K} holds. Similarly, \eqref{equ: condition discrete dominance mu and nu} holds for some $\beta_\K$ over $C_\K$ if and only if \eqref{equ: condition Lambda in convex hull of C_K KMW} holds.
\end{lemma}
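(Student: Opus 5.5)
The plan is to prove the two equivalences directly, using two structural facts about $C_\K$. The stability claims then follow immediately from \cref{lem: suff and necc cond for stability}.

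The two facts are:
\begin{itemize}
\item $C_\K$ contains $\vec 0$ and every unit vector $e_\vi$, because a single job of any type fits alone ($\vi \preceq \K$). Hence $ConvH(C_\K)$ is full-dimensional in $\R^{\Gamma_\K}$ and contains a neighbourhood of $\vec 0$ within the nonnegative orthant.
\item $C_\K$ is downward closed: if $M\in C_\K$ and $\vec 0\preceq M'\preceq M$ with $M'$ integral, then $M'\in C_\K$. I will upgrade this to show that $P:=ConvH(C_\K)$ is downward closed within $\R_{\ge 0}^{\Gamma_\K}$. It suffices to lower one coordinate $\vi$ of $x=\sum_M\beta(M)M$. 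Replace each $M$ in the support by a mixture of $M$ and $M-e_\vi$ (when $M^{(\vi)}\ge1$); this decreases the $\vi$-th coordinate continuously from $x_\vi$ to $0$ and leaves the others fixed.
\end{itemize}

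($\Leftarrow$) Suppose $\Lambda_\K\in int(P)$. Then there is $\epsilon>0$ with $\Lambda_\K+\epsilon\vec 1\in P$. Write $\Lambda_\K+\epsilon\vec 1=\sum_{M}\beta_\K(M)M$, a convex combination over $C_\K$, which exists by Carath\'eodory. The induced service measure is $\eta^S_\K=\Lambda_\K+\epsilon\vec 1$, which strictly dominates $\eta^A_\K=\Lambda_\K$ coordinatewise. That is exactly \eqref{equ: condition discrete dominance mu and nu}, so \cref{lem: suff and necc cond for stability} gives stability of $\K$-MW.

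($\Rightarrow$) Suppose $\beta_\K$ gives $\eta^S_\K\succ\Lambda_\K$ strictly. Let $\delta:=\min_\vi(\eta^S_\K(\vi)-\Lambda_\K(\vi))>0$. Every $y$ with $\norm{y-\Lambda_\K}_\infty<\delta$ and $y\succeq\vec 0$ satisfies $y\preceq\eta^S_\K\in P$, so $y\in P$ by downward closedness. This yields a neighbourhood of $\Lambda_\K$ inside $P$, provided every coordinate of $\Lambda_\K$ is positive.

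The main obstacle is the boundary case where some bucket has $\Lambda_\K(\vi)=0$, together with the efficient-set version.
\begin{itemize}
\item \textbf{Zero buckets.} If some $\Lambda_\K(\vi)=0$, a full ball cannot lie in $P\subseteq\R^{\Gamma_\K}_{\ge0}$. I will handle this as the proof of \cref{lem: suff and necc cond for stability} does: discard the buckets with $\pro(V\in I_\vi)=0$, or equivalently read ``interior'' relative to the nonnegative orthant. I will state this convention explicitly.
\item \textbf{Efficient sets.} For $E_\K$ (reading the display \eqref{equ: condition Lambda in convex hull of C_K} with $E_\K$ in place of $C_\K$), downward closedness of $E_\K$ itself fails. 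The right object is the comprehensive hull $P_E:=\{y\succeq\vec 0: y\preceq x \text{ for some } x\in ConvH(E_\K)\}$. Both directions then go through verbatim:
\begin{itemize}
\item ($\Leftarrow$) If $\Lambda_\K+\epsilon\vec 1\in P_E$, then some $x=\sum_{M\in E_\K}\beta_\K(M)M\succeq\Lambda_\K+\epsilon\vec 1$. So $\beta_\K$ is supported on $E_\K$ and achieves strict dominance, and \cref{lem: suff and necc cond for stability} applies.
\item ($\Rightarrow$) $P_E$ is downward closed by construction, so strict dominance puts a neighbourhood of $\Lambda_\K$ inside $P_E$ exactly as above.
\end{itemize}
\end{itemize}
The only real work is this interpretive step and the downward-closure argument; everything else is bookkeeping.
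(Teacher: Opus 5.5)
Your proposal is correct. Its core is the same as the paper's proof, which consists of the single sentence that the lemma is a direct consequence of the definition of the convex hull. The service measure of a distribution over $C_\K$ is exactly a point of $ConvH(C_\K)$, and interiority leaves room to push that point strictly above $\Lambda_\K$.

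What you add is what that sentence hides, and the additions are needed.
\begin{itemize}
\item The direction from dominance to interiority does not follow from the definition of a convex hull alone. It needs downward closedness of $ConvH(C_\K)$ within the nonnegative orthant, which you supply.
\item Your first caveat is a genuine defect of the statement as written. If some $\Lambda_\K(\vi)=0$, dominance can hold while $\Lambda_\K$ lies on the boundary of the orthant, so it is not in the interior.
\item Your second caveat is also genuine: for efficient sets the literal equivalence is false. Every option in $E^{2B}_K$ uses the full capacity, so $ConvH(E^{2B}_K)$ lies in the hyperplane $\sum_k k\,x_k = K$ and has empty interior. Yet the proof of \cref{thm: 2DK-EMW decreasing} constructs a dominating $\beta_K$ supported on it. The same holds for $E^{2J}_K$ when $d=1$.
\end{itemize}
Your comprehensive hull $P_E$ is the right object, and with it both directions go through. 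The stability conclusions survive, because they only use the interior-to-dominance direction followed by \cref{lem: suff and necc cond for stability}; for $E^{2B}_K$ and $E^{2J}_K$ the literal interior hypothesis is simply vacuous.

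One small repair is needed in your downward-closure step. Mixing $M$ with $M-e_\vi$ lowers the $\vi$-th coordinate of $M$ by at most one. So the $\vi$-th coordinate of $x$ only falls to $x_\vi-\beta_\K(\{M: M^{(\vi)}\ge 1\})$, not to $0$. Mix instead with $M-M^{(\vi)}e_\vi$, which lies in $C_\K$ by integer downward closedness. Using the same weight $t$ for every $M$ in the support gives $x-t\,x_\vi e_\vi$ for $t\in[0,1]$, which is the continuous decrease from $x_\vi$ to $0$ you want. Carath\'eodory is also unnecessary here, since $C_\K$ is finite.
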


\begin{proof}
    This lemma is a direct consequence of the definition of the convex hull.    
\end{proof}

\begin{remark}[A Larger Efficient Set Maintains Stability]
    \label{rmk: E1 in E2 means E2MW also stable}
    When $\vec K$ is fixed, for any two efficient sets $E_1\subseteq E_2 \subseteq C_\K$, if $\Lambda_\K \subseteq int(ConvH(E_1))$, then clearly $\Lambda_\K \subseteq int(ConvH(E_2))$. Therefore, when the condition \eqref{equ: condition discrete dominance mu and nu} holds for $E_1$, both $\K$-EMW with efficient set $E_1$ and with efficient set $E_2$ stabilize the system.
\end{remark}

In addition to the $\K$-MW and $\K$-EMW policies, we also study $\K$-MW-B and $\K$-EMW-B, which add backfilling to use more of the capacity than the baseline policies. Empirically, $\K$-MW-B and $\K$-EMW-B achieve lower mean response time and a larger empirical stability region. We now prove that $\K$-MW-B and $\K$-EMW-B are stable under the same stability condition in \cref{lem: suff and necc cond for stability}, demonstrating that this empirical behavior is not a happenstance.

\begin{lemma}[Backfilling Maintains Stability]
\label{cor: Backfilling Maintains Stability}
    Under the same assumption of \cref{lem: suff and necc cond for stability}, if the discrete measure dominance \eqref{equ: condition discrete dominance mu and nu} is satisfied by some service option distribution $\beta_\K$ over an efficient set $E_\K$, then the corresponding $\K$-EMW-B is also stable. Likewise, when \eqref{equ: condition discrete dominance mu and nu} is satisfied by $\beta_\K$ over $C_\K$, then $\K$-MW-B is also stable. 
\end{lemma}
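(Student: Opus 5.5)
The plan is to reuse the Foster--Lyapunov argument from the proof of \cref{lem: suff and necc cond for stability}, with one change: the drift is now taken over the state of the continuous system, since under Backfilling the evolution depends on the actual requirements. I will keep the discretized Lyapunov function $L(q)=\norm{q}_2^2$, where $q$ is the vector of type counts obtained by bucketing the jobs present. This $q$ is a deterministic function of the continuous state $x\in\mathcal S$. The drift $\Delta\circ L(x)$ is still computed from the same three terms $(I)$, $(II)$, $(III)$ in \eqref{equ: sum of three items}. Foster--Lyapunov will be applied to the continuous-state Markov process. The exceptional set is $\{x:\norm{q(x)}_1< c_0\}$, i.e.\ states with boundedly many jobs, and it is used as a petite-set argument from which the empty state is reached with positive probability. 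Alternatively, I will argue directly that the embedded jump chain returns to the empty state.

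The key observation is that under $\K$-(E)MW-B the set of jobs served at time $t$ contains the service option $M^{MW\text{-}E_\K}(t)$ chosen by the base policy. Backfilling only adds further jobs. So, in short time, for each type $\vi$ the completion rate of type-$\vi$ jobs is at least $\min(q_\vi, M^{MW\text{-}E_\K\,(\vi)})$ and at most the total number of jobs in service. I will bound the terms as follows.

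\begin{itemize}
\item For $(I)$: the total number of jobs served simultaneously, including backfilled ones, is still finite. Each job of type $\vi$ uses at least $(i_l-1)/K_l$ of resource $l$. A type with $\vi=\vec 1$ may have arbitrarily small true requirements, so the count is not bounded by $K_m$ in general. I will instead bound $(I)$ by $\lambda+\E[\#\text{in service}]$ plus higher-order terms, and bound $(III)$ by handling added completions with a negative sign, as described next.
\item For $(II)$ and $(III)$: write the realized completions as $\tilde S=\tilde S^{\rm base}+\tilde S^{\rm fill}$. The base part gives exactly the bounds from the lemma, namely $\le -\delta\lambda_{min}\norm{q}_1$ and $\le K_m^2$. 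The backfill part enters the drift as $-2\langle q,\tilde S^{\rm fill}\rangle+\norm{\cdot}^2$-type terms.
\end{itemize}

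The main obstacle is controlling the backfill contribution to the quadratic term. Each extra completion of a type-$\vi$ job changes $L$ by $-2q_\vi+1\le -1<0$ whenever that job is present, since $q_\vi\ge1$. So backfilled completions can only decrease $L$ in expectation, and their contribution to the drift is $\sum_\vi r_\vi(1-2q_\vi)\le 0$, where $r_\vi$ is the backfill service rate. I will write the exact generator: the drift is $\sum_\vi \Lambda_\vi(2q_\vi+1)+\sum_\vi \mu_\vi(1-2q_\vi)$, where $\mu_\vi$ is the total rate of type-$\vi$ completions. This sidesteps any bound on the number of simultaneously served jobs. The base part satisfies $\mu^{\rm base}_\vi=\min(q_\vi,M^{(\vi)})$, and the backfill rates are nonnegative with $q_\vi\ge1$ wherever they are positive. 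Hence the drift is at most the base-policy drift, which is at most $\lambda+K_m+2K_m^2-2\delta\lambda_{min}\norm{q}_1$.

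The remaining technical point is that the process is now the continuous-state process rather than a CTMC on $\N^{\Gamma_\K}$. I will apply the Foster--Lyapunov criterion to the embedded chain on $\mathcal S$ with the set $\{\norm{q}_1$ bounded$\}$. That set is a set of states with finitely many jobs from which, with positive probability bounded below, all jobs complete before any arrival, reaching the empty state. This gives positive recurrence of the empty state.
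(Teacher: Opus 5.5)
Your proposal is correct and follows essentially the paper's route. You use the same Lyapunov function $\norm{q(x)}_2^2$ on the continuous state, show that backfilled completions can only lower the drift relative to base $\K$-EMW, then apply Foster--Lyapunov on the bounded-job-count set plus a uniformly positive probability of emptying from it. You make the comparison infinitesimally via the generator, where each extra type-$\vi$ completion contributes $1-2q_\vi\le -1$. The paper makes it pathwise, getting $\norm{q+A-\hat S(t)}_2^2\le\norm{q+A-\tilde S(t)}_2^2$ from the nonnegativity of $q+A-\hat S(t)$ and $B(t)$.
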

\begin{proof}
    In this proof, we focus on $\K$-EMW-B, and the stability for $\K$-MW-B follows from the case when the efficient set $E_\K = C_\K$.
    
    When $\K$ is fixed, we define $\hat S(t)$ such that $\hat S_\vi (t)$ denotes the number of type-$\vi$ jobs actually served in the time period $(0,t]$. We decompose $\hat S_\vi (t)$ such that $\hat S_\vi (t) = \tilde S_\vi (t) + B_\vi(t)$, where $B_\vi (t)$ denotes the additional number of type-$\vi$ jobs served due to Backfilling, and $\tilde S_\vi$ is as defined in \cref{lem: suff and necc cond for stability} for $\K$-EMW. Note that $\K$-EMW-B is not a purely discretization-based policy, and we need to consider the original $\mathcal{S}$-valued Markov process $\{x(t)\}_{t\geq 0}$ to determine which jobs are served by Backfilling. By explicitly letting $q: \mathcal{S} \rightarrow \N^{\Gamma_\K}$ such that $q_{\vi}(x) = \sum_{\ell=1}^{\norm{x}_0} \one{\{x_\ell \in I_\vi\}}$, the Lyapunov function can still be chosen as before, but as a function of $x\in \mathcal S$. To be specific, we recall the Lyapunov function $L$ of the countable state space $\N^{\Gamma_\K}$ \eqref{equ: Lyapunov function discretized case} in the proof of \cref{lem: suff and necc cond for stability}, and likewise denote $\tilde L$ to be the Lyapunov function for the continuous state space $\mathcal S$, such that
    \[ \tilde L(x):= L(q(x)) = \norm{q(x)}_2^2. \]
    
    Then, when the current state is $x$ and thus $q(x) \in \N^{\Gamma_\K}$, the drift $\Delta_B \circ \tilde L(x)$ under $\K$-EMW-B is given by  
    \begin{allowdisplaybreaks}[0]
    \begin{align}
        \Delta_B \circ \tilde L(x) & : = \lim\limits_{t\downarrow 0} \frac{1}{t} \E \Big[ \tilde L\Big(q\big( x(t) \big) \Big) - \tilde L\Big(q\big( x(0)\big) \Big)\,\Big |\, x(0) = x\Big] \nonumber \\
        & = \lim\limits_{t\downarrow 0} \frac{1}{t} \E \lbr \norm{q\big( x(0) \big)+A( t ) - \hat{S}( t ) }_2^2 - \norm{q\big( x(0) \big)}_2^2 \,\Big |\, x(0) = x\rbr \nonumber \\
        & = \lim\limits_{t\downarrow 0} \frac{1}{t} \E \lbr \norm{q\big( x(0)\big)+A( t ) - \tilde{S}( t ) -B(t) }_2^2 - \norm{q\big( x(0)\big)}_2^2 \,\Big |\, x(0) = x\rbr. \nonumber
    \end{align}
    \end{allowdisplaybreaks}
    
    Because $q\big( x(0)\big)+A(t) - \hat{S}(t) \succeq \vec 0$ and $B(t) \succeq \vec 0$ almost surely, we have that the following conditional expectation bound
    \[\E \lbr \norm{q\big( x(0)\big)+A( t ) - \hat{S}( t) }_2^2  \,\Big |\, x(0) = x \rbr \leq \E \lbr \norm{q(0)+A( t ) - \tilde{S}( t ) }_2^2  \,\Big |\, q(0) = q(x)=:q \rbr. \]
    Therefore, at any state $x\in \mathcal S$, the drift $\Delta_B \circ \tilde L(x)$ of $\tilde L$ under $\K$-EMW-B is always upper bounded by the drift $\Delta_M \circ L(q(x))$ of $L(q(x))$ under $\K$-EMW.
    
    With the Lyapunov argument in the proof of \cref{lem: suff and necc cond for stability}, we conclude that the discrete measure dominance condition \eqref{equ: condition discrete dominance mu and nu} is sufficient to apply the Foster-Lyapunov theorem \cite{meyn1993survey}, which guarantees Harris recurrence in this setting. To be specific, the Foster-Lyapunov condition is satisfied for any $c>0$ and a corresponding compact set
    \[C:= \left\{s \in \mathcal S: \|x\|_0 := \|q(x)\|_1 < \frac{K_m+2K_m^2+\lambda+c}{2\lambda_{min} \delta} \right\}.\]
    This argument shows the Harris recurrence of the set $C$. Note that all states in $C$ have a bounded number of jobs, and the completion rate is always at least $1$. As a result, all states in $C$ can move to the empty state in bounded time with uniformly lower-bounded probability, demonstrating positive recurrence to the empty state.
\end{proof}

In addition, by combining \cref{cor: Backfilling Maintains Stability} with the proofs of \cref{thm: T-O of K-MW given Lipschitz condition,thm: vertices KEMW and KEnMSR,thm: 2DK-EMW decreasing,thm: 2JK-EMW Unif} respectively, we can demonstrate the throughput-optimality of the $\K$-MW-B, X $\K$-EMW-B, 2B $K$-EMW-B, and 2J $\K$-EMW-B families, under their respective assumptions. 

\begin{remark}
    \label{rmk: Any backfilling works}
    In this paper, we primarily focus on the Backfilling technique that searches according to the arrival order, which is the Backfilling rule First-Fit uses. However, \cref{cor: Backfilling Maintains Stability} holds for any Backfilling technique. For instance, in the MSJ setting ($d=1$), the server can pack jobs by scanning the queue according to increasing or decreasing order of resource requirement, which are the Backfilling rule for Least-Server-First (LSF) or Best-Fit, respectively.
\end{remark}

\subsection{Proof of \cref{thm: T-O of K-MW given Lipschitz condition}: Throughput-Optimality of the $\K$-MW Family}
\label{ssec: proof of T-O KMW}

The following lemma states that for any resource requirements random variable $V$ with density $f_V$ Lipschitz continuous, the discrete measure dominance condition \eqref{equ: condition discrete dominance mu and nu} can be established with some appropriate discretization parameter $\K$ and a distribution $\beta_\K$ over $C_\K$. Our main result, \cref{thm: T-O of K-MW given Lipschitz condition}, will then follow directly from combining \cref{lem: suff and necc cond for stability} and \cref{lem: Lipschitz condition induces discretized dominance}.

\begin{lemma}
    \label{lem: Lipschitz condition induces discretized dominance}
    In the continuous MRJ setting, assume that the job resource requirements random variable $V$ has a Lipschitz continuous density $f_V$, and assume that given some $\lambda >0$, there exists a policy $\pi$ that stabilizes the queueing system. Then there exists some $\K\in \N^d$ and a distribution $\beta_\K$ over $C_\K$, such that the discretized service measure $\eta^S_\K$ satisfies the measure dominance \eqref{equ: condition discrete dominance mu and nu}.
\end{lemma}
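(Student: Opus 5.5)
We first extract a continuous measure dominance from the existence of a stabilizing policy $\pi$, and then discretize it by rounding \emph{down}.

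\textbf{Step 1: continuous dominance.} Stability of $\pi$ gives a time-average fraction of time spent in each schedulable continuous service option. This defines a probability measure $\beta$ on $\mathscr S$. Each configuration $x'\in\mathscr S$ induces a counting measure $\sum_\ell \delta_{x'_\ell}$ on $(0,1]^d$, so $\beta$ induces a service measure $\eta^S(A)=\E^\beta[\#\{\ell: x'_\ell\in A\}]$. Rate balance under stability with $Exp(1)$ durations gives $\eta^S\ge\eta^A$ on every measurable set. Since stability is an open condition, stability at $\lambda$ should also give stability at $(1+\epsilon)\lambda$, hence $\eta^S\ge(1+\epsilon)\eta^A$. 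If stability is not open in $\lambda$, we instead work directly in the interior. This is the step I would argue most carefully: I would take a slightly larger arrival rate, or use that the stability region is the interior of the set of such $\eta^A$, in the style of \cref{lem: MRJ stability region subset}.

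\textbf{Step 2: rounding down.} For a configuration $x'$ in the support of $\beta$, map each job $v\in I_{\vi}$ to type $\vi-\vec 1$ (componentwise). Rounding down preserves admissibility: $\sum_\ell(\lceil K_l v_{\ell,l}\rceil-1)\le K_l\sum_\ell v_{\ell,l}\le K_l$. Jobs with some coordinate in the lowest bucket are mapped to coordinate $0$; these I would treat as free ``zero" entries and discard. The image of $\beta$ is a distribution $\beta_\K$ on $C_\K$, and $\eta^S_\K(\vi)\ge \eta^S(I_{\vi+\vec1})\ge(1+\epsilon)\eta^A(I_{\vi+\vec 1})$. By Lipschitz continuity with constant $\mathrm{Lip}$, $|\eta^A(I_{\vi+\vec1})-\eta^A(I_\vi)|\le \lambda\,\mathrm{Lip}\cdot\|\vec 1/\K\|\,|I_\vi|$. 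Whenever $f_V$ on $I_\vi$ is at least some threshold $\tau$, the factor $(1+\epsilon)$ absorbs this error once $\K$ is fine enough.

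\textbf{Step 3: exceptional types (main obstacle).} Three classes of types are not covered by Step 2. First, types $\vi$ with $i_l=K_l$ for some $l$ (the boundary set $\mathcal B^{\K}$), which have no type above them. Second, the lowest types, which receive no rounded mass. Third, types where the density is small, or zero near the boundary of the support, so the relative Lipschitz error is not controlled. The plan is to reserve a small fraction $\gamma$ of the mass of $\beta_\K$, scale the main part by $(1-\gamma)$ (still keeping a margin once $\gamma<\epsilon/2$), and spend the reserved mass on single-job options $\{\vi\}$ for every exceptional type. Each such option serves one job at rate $1$. The total arrival rate into exceptional types is $O(\lambda\,\mathrm{Lip}/K_m + \lambda\tau + \lambda\sum_l 1/K_l\cdot\sup f_V)$, which is finite and small: boundary slabs have volume $O(1/K_m)$, and the low-density region contributes at most $\lambda\tau$. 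Spreading $\gamma$ over these types in proportion to (their rate plus a tiny $\gamma'$) gives strict dominance for each of them, provided $\gamma$ exceeds that total rate times $(1+\epsilon)$. So I would choose $\tau$ small, then $\K$ large, so that this total is below $\epsilon/4$, and set $\gamma=\epsilon/3$. For types with zero arrival rate, any positive service suffices, which the $\gamma'$ part supplies. Strict inequality for all $\vi\in\Gamma_\K$ then gives \eqref{equ: condition discrete dominance mu and nu}.
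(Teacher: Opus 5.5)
Your Steps 2 and 3 follow essentially the paper's construction: round each continuous service option down by one bucket, which keeps it admissible, and cover a small exceptional set of types by singleton options $\{\vi\}$ paid for from a reserved fraction of the mass. There is, however, a genuine gap in Step 1, and that is the step that supplies all the slack. The hypothesis gives stability only at the given $\lambda$. Your claim that stability is an open condition, so that stability at $\lambda$ yields stability at $(1+\epsilon)\lambda$, is not available to you. Openness of the stabilizable region is normally \emph{derived} from exactly the slack you are trying to produce (slack implies that a MaxWeight-type policy is stable at slightly larger rates), so invoking it here is circular. Likewise, ``working directly in the interior'' presupposes that $\lambda$ is an interior point. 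Moreover, if $\beta$ is the full time-average distribution of service options, rate balance with $\mathrm{Exp}(1)$ durations gives $\eta^S=\eta^A$ \emph{exactly}. Without a factor $(1+\epsilon)$, neither the Lipschitz rounding error of Step 2 nor the reserved mass $\gamma$ of Step 3 can be paid for.

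The missing idea is the idle time at the empty state. Since $\pi$ is stable, the empty state is positive recurrent and has stationary probability $\epsilon_0>0$. In that state the service option is $\emptyset$, which contributes nothing to the service measure. Remove this option and renormalize, i.e.\ condition $\beta$ on a nonempty service option. The result is a probability distribution whose service measure is $\eta^A/(1-\epsilon_0)$, so $\eta^S\ge(1+\epsilon)\eta^A$ on every measurable set, with $\epsilon=\epsilon_0/(1-\epsilon_0)$. This is exactly how the paper obtains its continuous measure dominance, and with it the rest of your argument goes through.

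Two smaller remarks. First, the types that receive no rounded-down mass are those with $i_l=K_l$ for some $l$ (the boundary set), not the lowest types. A type with some $i_l=1$ still receives mass from $I_{\vi+\vec 1}$; it only loses its \emph{own} service when rounded to a zero coordinate. Including the lowest slabs in the exceptional set is nevertheless harmless, since their total rate is $O(\lambda d\sup f_V/\min_l K_l)$. Second, your fixed density threshold $\tau$ (chosen before $\K$) is a legitimate substitute for the paper's $K$-dependent mass threshold $c_1K^{-(d+c_2)}$ and its empty-round-up set. Your extra $\gamma'$ even secures strict inequality for zero-rate types, which the paper sidesteps by only requiring dominance where $\eta^A_\K(\vi)>0$.
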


\begin{proof}
    We give the proof in the multi-resource (MRJ) setting, and the single-resource setting (MSJ) can be shown by setting $d=1$.
    
    Suppose that for a given $\lambda >0$ and $V$, there exists some policy $\pi$ (see \cref{def: Markovian policy}) that leads to a stable system. At the steady-state of this Markov process under $\pi$, we obtain a stationary measure $\beta^\pi$ over $\pi(\mathcal S)$, the set of service options used by $\pi$. The stability of $\pi$ implies that $\eta^A = \eta^\pi$, where 
    \[ \eta^\pi(dx) = 
    \int_{\mathscr S} \sum_{r\in\pi(s)} \one{\{r \in dx\}} \; d\beta^\pi(s),
    \] so that the rate of job arrivals with resource requirements in $I$ equals the completion rate of jobs served with resource requirements in $I$, where $I$ is any subset of $( 0,1]^d$. 

    In the steady state, because the empty state is positively recurrent, the Markov process assigns positive probability mass $\epsilon_0 >0$ to the empty job state $\emptyset \in \mathcal S$. At the empty job state, the only service option is $\emptyset \in \mathscr S := \pi(\mathcal S)$. We consider the conditional measure over $(0,1]^d$
    \[\eta^S = \mathcal L (\eta^\pi| \pi \neq \emptyset) = \frac{1}{1-\epsilon_0} \eta^\pi,\]
    and call $\eta^S$ the \emph{service measure}. The service measure is induced by $\{M_\alpha^\pi\}_{\alpha \in \A} = \pi(\mathcal{S})\setminus \emptyset$, namely the family of continuous service options $\pi$ uses. Note that the index set of these service options $\A$ is uncountable. With a probability measure $\mathcal L (\pi| \eta \neq \emptyset)$ over $\pi (\mathcal S)$, we can obtain a probability measure $\zeta$ over $\A$, which we call the reference distribution, and express the service measure as 
    \begin{equation}
        \label{equ: service measure mu S}
        \eta^S(dx) = \int_\A \sum_{r \in M_\alpha^\pi} \one{\{r \in dx\}}\; d\zeta(\alpha).
    \end{equation}
    The above construction guarantees that $\eta^A$ is absolutely continuous with respect to $\eta^S$, and that
    \begin{equation}
        \label{equ: d mu S/ d mu A}
        \frac{d\eta^S}{d\eta^A}(x) \geq 1+\epsilon, \quad \forall x \in int\lpar supp(\eta^A)\rpar,
    \end{equation}
    where $\epsilon = \frac{\epsilon_0}{1-\epsilon_0}$. We call this property ``continuous measure dominance". 

    Now our goal is to find some $\K$ such that $\K$-MW stabilizes the system as well, by applying \eqref{equ: condition discrete dominance mu and nu}, with a service measure $\beta_\K$ over $C_\K$ chosen based on $\beta^\pi$. To be specific, we let $\K = (K,...,K)$ for some $K$ to be determined later, and aim to build a $\K$-discretized service measure $\eta^S_{\K}$ based on some distribution $\beta_\K$ over $C_\K$, such that there exists some $\delta>0$ satisfying
    \begin{equation}
        \label{equ: nu^S_K / nu^A_K in D dim}
        \frac{\eta^S_{\K}(\vi) }{\eta^A_{\K}(\vi)} \geq 1+\delta, \quad \text{ for all } \vi \in \Gamma_{\K} \text{, for which } \eta^A_{\K}(\vi) > 0,
    \end{equation}
    where recall that for each job type $\vi$, $\eta^A_{\K}(\vi) : = \eta^A(I_{\vi})$.

    In our construction of $\beta_\K$, we start by classifying each discretized job type $\vi\in \Gamma_\K$ in either the default set $\mathcal D^\K$ or the exceptional set $\mathcal E^\K$. For any default index $\vi \in \mathcal D^\K$, we rely on the continuous service options $\{M_\alpha^\pi\}_{\alpha\in \A}$ to propose corresponding $\K$-discretized service options to serve type-$\vi$ jobs, which we then modify. We will discuss the modification after we have defined the exceptional set. For any exceptional index $\vi\in \mathcal E^\K$, we include the service option that serves a single type-$\vi$ job. Specifically, we assign a total probability mass $\xi_\K$ of $\beta^\K$ to each service option $\{\vi\}$ for $\vi \in \mathcal E^\K$, such that
    \begin{equation}
    \label{equ: mass on SO serving a job with type in EK}
        \beta_{\K}(\{\vi\}) = (1+\delta) \eta^A_\K(\{\vi\}),\quad \text{for all } \vi \in \mathcal E^\K.
    \end{equation}
    We will then assign the remaining mass to the discrete service option family $\{M\}$ induced by the continuous service option family $\{M^\pi_\alpha\}$, where the probability will be given by \eqref{equ: beta_K on efficient service options} to control the default set $\mathcal D^\K$. We will show that as $K$ becomes large, the arrival measure on $\mathcal E^\K$ vanishes, and for any job type in the set $\mathcal D^\K$, the gap between our service measure $\eta^S_\K(\vi)$ and $\eta^\pi(I_\vi)$ vanishes. This then allows us to prove discrete measure dominance for large enough $\K$. 
    
    We now introduce the exception set $\mathcal E^\K$, which is the union of three different subsets, namely the boundary set $\mathcal B^\K$, the empty-round-up set $\mathcal I^\K$, and the low-probability set $\Jcal{\K}{}(c_1,c_2)$, with parameters $c_1>0$ and $0<c_2<1$. Then we guarantee discrete measure dominance for all $\vi\in \mathcal E^\K$, by including in $\beta_\K$ the corresponding service option $\{\vi\}$ and with mass $\beta_\K(\{\vi\})$ given in \eqref{equ: mass on SO serving a job with type in EK}. To bound the probability assigned to the service options $\{\vi\}$, $\vi \in \mathcal E^\K$, we need to upper bound the total arrival rate of each subset of $\mathcal E^\K$, and further show that the upper bound converges to zero. Specifically, we will show that $\sum_{\vi \in \mathcal E^\K} \beta_\K(\{\vi\}) = (1+\delta)\eta^A_\K(\mathcal E^\K) = O(K^{-c_2})$. We note that $\delta$ can be arbitrarily chosen in $(0,\epsilon)$, which does not affect the existence of $\K$.

    The first subset of $\mathcal E^\K$ is the boundary set $\mathcal B^\K$, which includes all boundary job types defined in \eqref{equ: boundary set}, namely any job type that uses the entirety of any of the resources. Using the Lipschitz condition, we provide in \cref{lem: Upper bound for the Supremum of Lipschitz Density Function} an upper bound on the maximum density $\norm{f}_\infty$, such that $\norm{f}_\infty \leq L:= L(C,d) < \infty$. The value of $L(C,d)$ is given by \eqref{equ: choice of M}, depending only on the Lipschitz constant $C$ and the dimension $d$. We defer the proof to Appendix~\ref{appsec: Upper bound for the Supremum of Lipschitz Density Function}. Therefore, we have that $\eta^A_\K(\mathcal B^\K)\leq \frac{L}{K^d}dK^{d-1} \leq Ld K^{-1}$, because there are at most $dK^{d-1}$ job types in $\mathcal B^\K$, and each type has arrival rate upper bounded by $\frac{L}{K^d}.$ This completes our bound for $\mathcal B^\K$.

    The next subset of $\mathcal E^\K$ is defined as the empty-round-up set $\mathcal I^\K$, which contains all non-boundary job types $\vi$ with $\eta^A_\K(I_\vi)>0$ and $\eta^A_\K(I_{\vi+\vec 1})=0$. To bound the arrival measure on $\mathcal I^\K$, we consider the zero set $\mathcal Z^\K := \{ \vi \in \Gamma_{\K}: \eta^A_\K(\vi) = 0\}$ of all job types, the corresponding buckets of which have zero arrival rate. Then the following inclusion holds:
    \[
    \mathcal I^\K\subseteq \bigcup\limits_{\vi \in \mathcal Z^\K} \{\vi -\vec 1\}.
    \]
    For any $\vi \in \mathcal I^\K$, $f_V(\frac{\vi+\vec 1}{K}) = 0$. Because $f_V$ is Lipschitz with constant $C$, $\sup_{v \in I_\vi} f(v) \leq C\sqrt{d}/K$ for all $\vi \in \mathcal Z^\K $. Therefore, $\eta^A_{\K} (\{\vi - \vec 1 \}) \leq C\sqrt{d} K^{-D-1}$ for any $\vi \in \mathcal Z^\K$, and consequently, \[\eta^A_{\K} (\mathcal I^\K) \leq C\sqrt{d} K^{-D-1} K^D = C\sqrt{d} K^{-1},\]
    where we use the total number of job types to bound the number of job types in $\mathcal I^\K$, i.e., $|\mathcal I^\K| \leq |\Zcal^\K| \leq |\Gamma_\K| = K^D$. This completes our bound for $\mathcal I^\K$.

    The final subset of $\mathcal E^\K$ is the low-probability set $\mathcal{J}^\K(c_1, {c_2})$ for some constants $c_1 >0$, $0<{c_2}<1$, such that 
    \[\mathcal{J}^\K (c_1, {c_2}) = \{ \vi \in \Gamma_\K: \eta^A_{\K}(\vi)\leq c_1 K^{-(D+{c_2})} \}. \]
    The choice of the pair $(c_1,c_2)$ does not affect the existence of $\K$, but affects the final value of $\K$. In this paper, we do not attempt to optimize $\K$. If concrete values of $c_1, c_2$ are needed, $c_1=1$ and $c_2 = 1/2$ suffice. The arrival measure on the set $\mathcal{J}^{\K}$ is then upper bounded by
    \[ c_1 K^{-{c_2-D}} \,|\Jcal{\K}{} (c_1,c_2)|  \leq  c_1 K^{-{c_2}-D} |\Gamma_\K| =c_1 K^{-{c_2}}. \]
    This completes our bound for $\mathcal{J}^\K(c_1, {c_2})$.
    
    As mentioned before, for each exceptional job type $\vi \in \mathcal{E}^\K := \mathcal I^\K \cup \mathcal B^\K\cup \mathcal{J}^\K(c_1, {c_2})$, we include in $\beta_\K$ the service option $\{\vi\}$, which only serves one type-$\vi$ job, and let $\beta_\K(\{\vi\}) = (1+\delta)\eta^A_\K(\vi)$. This guarantees the discrete measure dominance holds for all $\vi \in\mathcal{E}^\K$, i.e., $\eta^S_\K(\vi)\geq (1+\delta)\eta^A_\K(\vi)$. We also let $\xi_\K := \sum_{\vi \in \mathcal E^\K} \beta_\K(\{\vi\})= (1+\delta) \eta^S_\K( \mathcal{E}^\K)$, which denotes the the summed probability mass $\beta_\K$ assigns to these options. The above discussion implies that 
    \begin{align*}
        \xi_\K & = (1+\delta) \eta^A_\K\lpar \mathcal I^\K \cup \mathcal B^\K\cup \mathcal{J}^\K(c_1, {c_2})\rpar \\
        & \leq (1+\delta)  \lbr Ld K^{-1} + C\sqrt{d} K^{-1} + c_1 K^{-c_2} \rbr \\
        & \leq (1+\delta) (Ld+C\sqrt{d}+c_1) K^{-{c_2}},
    \end{align*}
    and thus $\lim\limits_{K\uparrow \infty} \xi_\K = 0$.

    We now switch our focus to constructing $\beta_\K$ to achieve the discrete measure dominance for the default job types $\vi \in \mathcal{D}^\K := \Gamma_\K \setminus \mathcal{E}^\K$. We consider the family of service options $\{M_\alpha^\pi\}_{\alpha\in \A}$ used by the continuous policy $\pi$. We will take advantage of $\{M_\alpha^\pi\}_{\alpha\in \A}$ to construct the discretized service options that we will use for $\beta_\K$. 

    We create a mapping from $M_\alpha^\pi$ to the discretized service options $M_\alpha^\K$. This mapping entry-wise rounds down each resource requirement $r$ to the nearest smaller discretized bucket (job type). Formally speaking, given any $\alpha \in \A$, $M_\alpha^\pi$ contributes a density $\zeta^{\K}(\alpha) = (1-\xi_{\K})\zeta(\alpha)$ to $M^{\K}_\alpha : = \left\{ \lfloor K r\rfloor\right\}_{r \in M_\alpha^\pi}$. By using smaller job types, any valid $M_\alpha^\pi$ is mapped to a valid  $M_\alpha^\K$ that does not exceed any resource capacity constraint. Then for any service option $M\in \{M_\alpha^\K\}_{\alpha\in \mathcal A}$, 
    \begin{equation}
    \label{equ: beta_K on efficient service options}
        \beta_\K(M) = \int_\mathcal A \one{\left\{ M = \left\{ \lfloor K r\rfloor \right\}_{r \in M_\alpha^\pi}\right \}} d \zeta^\K(\alpha),
    \end{equation}
    and we obtain a valid probability distribution $\beta_\K$, where the remaining $\mathcal E^\K$ indices were defined in \eqref{equ: mass on SO serving a job with type in EK}.
    
    The above construction of $\beta_\K$ means that, for each $\vi \in \mathcal D^\K$, $\eta^S_\K(\vi)$ is given by
    \[\eta^S_\K(\vi)  = \sum\limits_{M\in C_\K} M^{(\vi)} \beta_\K(M) = \int_\A \sum\limits_{l=1}^{\norm{M_\alpha^\pi}_0} \one{ \{ M_\alpha^\pi (l) \in I_{\vi+\vec 1}\} } d\zeta^\K(\alpha),
    \]
    which can be lower bounded, due to \eqref{equ: d mu S/ d mu A} and the Lipschitz assumption $\norm{f}_{Lip} = C$, as follows:
    \begin{allowdisplaybreaks}[0]
     \begin{align}
        \eta^S_\K(\vi) &  = \int_\A \sum\limits_{l=1}^{\norm{M_\alpha^\pi}_0} \one{ \{ M_\alpha^\pi (l) \in I_{\vi+\vec 1}\} } (1-\xi_\K)d \zeta(\alpha)\nonumber\\
        & = (1-\xi_\K) \eta^S(I_{\vi+\vec 1})\nonumber\\
        &\geq (1+\epsilon)(1-\xi_\K) \eta^A(I_{\vi+\vec 1})  \nonumber\\
        & =  (1+\epsilon)(1-\xi_\K) \int_{I_{\vi+\vec 1}} \lambda f(x) dx\nonumber \\    
        & \geq  (1+\epsilon)(1-\xi_\K) \int_{I_{\vi+\vec 1}} \lambda \lpar f\lpar x-\frac{\vec 1}{K}\rpar -\frac{C\sqrt{d}}{K} \rpar dx\nonumber\\
        & = (1+\epsilon)(1-\xi_\K) \lbr \eta^A(\vi) - \frac{\lambda C \sqrt{d} }{K^{d+1}}\rbr\nonumber.
    \end{align} 
    \end{allowdisplaybreaks}
    To conclude that $\eta^S_\K(\vi) \geq (1+\delta) \eta^A_\K(\vi)$, we need to show for any $\vi \in \mathcal D^\K$,
    \begin{allowdisplaybreaks}[0]
    \begin{align*}
        & (1+\epsilon)(1-\xi_\K) \lbr \eta^A_\K(\vi) - \frac{\lambda C\sqrt{d}}{K^{d+1}}\rbr \geq (1+\delta) \eta^A_\K(\vi), \\
        \Leftrightarrow & \lbr (1+\epsilon)(1-\xi_\K) - (1+\delta) \rbr  \eta^A_\K(\vi) \geq (1+\epsilon)(1-\xi_\K)   \frac{\lambda C\sqrt{d}}{{K}^{d+1}}.
    \end{align*}
    \end{allowdisplaybreaks}
    Because $\vi \notin \mathcal{J}^\K(c_1,{c_2})$, $\eta^A_\K(\vi)\geq c_1 K^{-(D+{c_2})}$. It is enough to let $K$ satisfy that 
    \begin{align}
        & \lbr (1+\epsilon)(1-\xi_\K) - (1+\delta) \rbr   c_1 {K}^{-(d+{c_2})} \geq (1+\epsilon)(1-\xi_\K)   \frac{\lambda C\sqrt{d}}{K^{d+1}}, \nonumber\\
        \Leftrightarrow & \lbr (1+\epsilon)(1-\xi_\K) - (1+\delta) \rbr   c_1  \geq (1+\epsilon)(1-\xi_\K)   \lambda C \sqrt{d} K^{{c_2}-1}. \label{equ: two sides hold then DMD}
    \end{align}
    When $K$ goes to infinity, the limit of the L.H.S. of \eqref{equ: two sides hold then DMD} is $c_1(\epsilon - \delta)$, and the limit of the R.H.S. of \eqref{equ: two sides hold then DMD} is $0$, because $c_2$ is chosen in $(0,1)$. Therefore, the discrete measure dominance holds also for $\vi \in \mathcal{D}^\K$ when $K$ is large enough. In particular, if we choose $c_1 =1$ and $c_2 = \frac{1}{2}$, then it suffices to let 
    \[ K \geq \left( \frac{(1+\delta)(1+\epsilon)\lpar Ld+ \frac{2+\delta}{1+\delta}C\sqrt{d} + 1\rpar }{\epsilon-\delta}\right)^2.\]
    Because $\delta$ can be arbitrarily small, we can let 
    \[K = \epsilon^{-2} [(1+\epsilon)(Ld+2C\sqrt{d}+1)]^2 = \Theta(\epsilon^{-2}).\]
\end{proof}

From the above two lemmas, \cref{thm: T-O of K-MW given Lipschitz condition} follows immediately that the $\K$-MW family is throughput-optimal for any resource requirement $V$ with Lipschitz continuous p.d.f. $f_V$.
\begin{reptheorem}{thm: T-O of K-MW given Lipschitz condition}[Throughput-optimality of $\K$-MW]
    In the continuous MRJ setting with exponential durations, for any resource requirement distribution $V$ with Lipschitz continuous probability density function $f_V$ over $(0,1]^d$, the family of $\K$-MW policies is throughput-optimal. That is, if the continuous MRJ system is stable under some policy $\pi$, then there exists some $\K$ such that $\K$-MW stabilizes the system. 
\end{reptheorem}

\begin{proof}[Proof of \cref{thm: T-O of K-MW given Lipschitz condition}]
    For any system that can be stabilized, with Lipschitz continuous $f_V$, \cref{lem: Lipschitz condition induces discretized dominance} shows the existence of $\K = K\vec 1$ and some service option distribution $\beta_\K$, such that the corresponding discretized service measure satisfies discrete measure dominance \eqref{equ: condition discrete dominance mu and nu}. This is a sufficient condition for the stability of $\K$-MW, as proven in \cref{lem: suff and necc cond for stability}. 
\end{proof}

\subsection{Proof of \cref{thm: MSR+K-MW}: Throughput-Optimality of the $\K$-nMSR Family}
\label{ssec: Proof of T-O KnMSR}

We next show \cref{thm: MSR+K-MW}, the throughput-optimality of the $\K$-discretized nMSR ($\K$-nMSR) family under the condition of Lipschitz continuous $f_V$. We first refer to one of the main results in \citet{chen2024analyzing}, which is a stability condition for nonpreemptive Markovian Service Rate (nMSR).

\begin{lemma}[Throughput-optimality of nMSR. Theorem 4 in \cite{chen2024analyzing}, with notation modified.]
\label{lem: Chen 2024 analyzing T-O nMSR}
    The class of nMSR policies is throughput-optimal. That is, if there exists a policy that stabilizes a multiresource-job system with $J$ job types and arrival vector $\Lambda_J\in \R^J$, then there exists an nMSR policy, $\pi$, with $J^\pi \leq J$ working states that stabilizes the system. Specifically, an nMSR policy can stabilize a system with nonpreemptible jobs if and only if there are some $\epsilon>0$ and  
    \begin{equation}
        \label{equ: T-O nMSR Chen}
        (1+\epsilon)\Lambda_J \in ConvH(C^{(J)}),
    \end{equation}
    where $C^{(J)}$ denotes the set of all possible schedules.
\end{lemma}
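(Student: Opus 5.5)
Since \cref{lem: Chen 2024 analyzing T-O nMSR} is imported from \citet{chen2024analyzing}, my plan is to reconstruct its two directions rather than re-derive every detail. The statement has two halves. The \emph{necessity} half says that no policy, preemptive or not, can stabilize the system unless $(1+\epsilon)\Lambda_J \in ConvH(C^{(J)})$ for some $\epsilon>0$. The \emph{sufficiency} half says that if this holds, some nMSR policy with at most $J$ working states stabilizes the system with nonpreemptible jobs.

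For necessity I will mirror the argument opening the proof of \cref{lem: Lipschitz condition induces discretized dominance}.
\begin{itemize}
\item Take any stabilizing policy and its stationary distribution over schedules. Let $\epsilon_0>0$ be the stationary probability of the empty state.
\item Rate balance for each class gives $\Lambda_J = \sum_{M} \beta(M)\, M$, where $\beta$ is the stationary law of the schedule in use. Only finitely many schedules exist in the class-based system, so this is a finite convex combination with total mass at most $1-\epsilon_0$ on nonempty schedules.
\item Conditioning on the nonempty schedules, as the paper did when defining $\eta^S$, gives $(1+\epsilon)\Lambda_J \in ConvH(C^{(J)})$ with $\epsilon=\epsilon_0/(1-\epsilon_0)$.
\end{itemize}
Since $\vec 0\in C^{(J)}$, the hull is downward closed, so the condition is equivalent to $\Lambda_J \in int(ConvH(C^{(J)}))$ relative to $\R^J_{\ge 0}$. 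This matches \cref{lem: equivalent suff and neec conds}.

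For sufficiency I will argue as follows.
\begin{itemize}
\item By Carathéodory's theorem, write $(1+\epsilon)\Lambda_J=\sum_{j=1}^{J'}\beta_j M_j$ with $J'\le J$ schedules; when $\Lambda_J$ has positive entries, a boundary point of the hull needs only $J$ extreme points.
\item Build an irreducible finite CTMC on these working states $M_1,\dots,M_{J'}$ whose stationary distribution is $\beta$. For example, cycle through the states with holding rates proportional to $1/\beta_j$.
\item The nMSR policy follows this chain, but nonpreemptively. When the chain wants to switch from $M_j$ to $M_{j'}$, the policy stops admitting jobs not in $M_{j'}$ and waits until enough of the old jobs complete to fit $M_{j'}$. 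Because durations are $Exp(1)$, this transition phase has bounded expected length independent of queue lengths.
\item To prove stability, use a Lyapunov function $L(q,j)=\|q\|_2^2 + 2\langle q, h(j)\rangle$. Here $h$ solves the Poisson equation $Qh = \bar\mu-\mu(j)$ of the modulating chain, with $\mu(j)=M_j$ and $\bar\mu=\sum_j\beta_j M_j$.
\item The correction term cancels the state-dependent part of the service drift. What remains is roughly $2\langle q, \Lambda_J-\bar\mu\rangle\le -2\epsilon\langle q,\Lambda_J\rangle$, plus bounded terms, exactly as in the treatment of term $(II)$ in \cref{lem: suff and necc cond for stability}. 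Term $(III)$, the unused service, is bounded as before. Foster--Lyapunov then gives positive recurrence.
\end{itemize}

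The main obstacle is the nonpreemptive transition periods. During them the realized service falls short of $M_j$, and the shortfall is not confined to states with small $q$, so it cannot be absorbed into term $(III)$. My first attempt is to run the drift argument on an embedded chain sampled at the ends of working-state visits, or equivalently to lengthen each visit by a factor $\ell$. With visits of order $\ell$, the loss per transition is $O(1)$ expected time while useful service is of order $\ell$. Taking $\ell$ large enough then makes the effective service vector exceed $(1+\epsilon/2)\Lambda_J$, and the $\langle q,\cdot\rangle$ drift stays negative. The remaining care is in showing that the transition time has uniformly bounded moments and that queue growth during a visit contributes only $O(\ell\|q\|_1)$ terms that the $-\epsilon\ell\langle q,\Lambda_J\rangle$ term dominates.
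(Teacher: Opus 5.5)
The paper does not prove this statement. It imports it as Theorem~4 of \cite{chen2024analyzing}, and the paper's only work is the translation $\Lambda_J=\Lambda_\K$, $C^{(J)}=C_\K$ inside the proof of \cref{thm: MSR+K-MW}. Your proposal instead reconstructs a self-contained argument, and its outline is sound:
\begin{itemize}
\item necessity comes from stationary rate balance with positive mass on the empty schedule, the same computation as in \cref{lem: nec cond of stability};
\item sufficiency combines Carath\'eodory on a facet of $ConvH(C^{(J)})$, which gives at most $J$ working states, with a cyclic modulating chain whose stationary law is $\beta$;
\item switching is slowed by a factor $\ell$, so the nonpreemptive drain costs only an $O(1/\ell)$ fraction of service;
\item a Poisson-equation correction is added to the quadratic Lyapunov function.
\end{itemize}
Citing buys brevity and ties the lemma to the specific nMSR construction (and response-time bounds) that the paper's $\K$-nMSR precomputation relies on. Your route makes explicit what is actually used: finiteness and down-closedness of the schedule set, $Exp(1)$ durations, and a queue-independent bound on the drain time. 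That makes it easy to see the result transfers to the type-count chain $q(t)$ of the continuous system.

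Several details need repair.
\begin{itemize}
\item \emph{Sign of the correction.} With $L=\norm{q}_2^2+2\langle q,h(j)\rangle$, modulating jumps contribute $2\langle q,(Qh)(j)\rangle$, so cancellation requires $(Qh)(j)=\mu(j)-\bar\mu$. With your sign the $\mu(j)$ term doubles instead of cancelling.
\item \emph{Down-closedness.} The hull is downward closed because $C^{(J)}$ is (sub-multisets of feasible schedules are feasible). The fact that $\vec 0\in C^{(J)}$ is not enough on its own.
\item \emph{Error bookkeeping.} An $O(\ell\norm{q}_1)$ error with an unspecified constant is not dominated by $-\epsilon\ell\langle q,\Lambda_J\rangle$. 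What actually arises is $O(\ell^2)$ from queue fluctuations within a visit, plus a loss of at most $2\bar W\langle q,\Lambda_J\rangle+O(1)$ per switch, where $\bar W$ bounds the expected drain time.
\item \emph{Embedded chain.} Sampling after a single visit gives negative drift only because of the $h$-correction, since a visit to $M_j$ serves no class outside $M_j$.
\end{itemize}

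A cleaner one-step fix avoids the embedded chain. Pause the modulating chain while draining, and add $2(1+\delta)\langle q,\Lambda_J\rangle W$ to $L$, where $W$ is the expected remaining drain time (zero in working states).
\begin{itemize}
\item While draining, the generator gives $-1$ on $W$, so the drift there is at most $-2\delta\langle q,\Lambda_J\rangle+O(1)$.
\item In working state $j$, $W$ jumps up at rate $r_j=O(1/\ell)$, costing at most $2(1+\delta)\bar W r_j\langle q,\Lambda_J\rangle$. The term $-2\epsilon\langle q,\Lambda_J\rangle$ absorbs this once $\ell$ is large.
\end{itemize}
Foster--Lyapunov then applies directly.
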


Now we prove \cref{thm: MSR+K-MW}, by combining \cref{lem: Lipschitz condition induces discretized dominance} and \cref{lem: Chen 2024 analyzing T-O nMSR}.

\begin{reptheorem}{thm: MSR+K-MW}[Throughput-optimality of $\K$-nMSR]
    In the continuous MRJ setting, the family of $\K$-nMSR policies is throughput-optimal for all $V$ with Lipschitz continuous density $f_V$.
\end{reptheorem}

\begin{proof}[Proof of \cref{thm: MSR+K-MW}]
    In the context of this work, we have $\Lambda_J = \Lambda_\K$, $J = \prod_{l=1}^d K_l$, and $C^{(J)} = C_\K$. In addition, \cref{lem: Lipschitz condition induces discretized dominance} implies that when $f_V$ is Lipschitz continuous, we can always find some $\K$ such that the discrete measure dominance condition \eqref{equ: condition discrete dominance mu and nu} holds. The discrete measure dominance condition is equivalent to the convex hull condition \eqref{equ: condition Lambda in convex hull of C_K}, as shown in \cref{lem: equivalent suff and neec conds}. This means the stability condition \eqref{equ: T-O nMSR Chen} of $\K$-nMSR in \cref{lem: Chen 2024 analyzing T-O nMSR} is satisfied. Therefore we conclude the stability of the CTMC $\{q_t\}_{t\geq 0}$ and the positive recurrence of the empty state. 
\end{proof}

\subsection{Proof of \cref{thm: vertices KEMW and KEnMSR}: Throughput-Optimality of the Extreme-vertices $\K$-EMW and $\K$-EnMSR Families}
\label{ssec: Proof of T-O Vertices Version}

Before we prove \cref{thm: vertices KEMW and KEnMSR}, we provide the following auxiliary lemma, which gives a necessary condition for a stable $\K$-discretized policy.

\begin{lemma}[Necessary Condition for Stable $\K$-discretized Policy]
    \label{lem: nec cond of stability}
    In the continuous MRJ setting, for a fixed $\K \in \N^d$, if there exists any $\K$-discretized policy that stabilizes the system, then there exists a service option distribution $\beta_\K$ over $C_\K$ such that its corresponding service measure satisfies the discrete measure dominance \eqref{equ: condition discrete dominance mu and nu}.
\end{lemma}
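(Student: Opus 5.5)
The plan is to mimic the first step of the proof of \cref{lem: Lipschitz condition induces discretized dominance}, but now entirely inside the discretized CTMC, where everything is countable and the argument is cleaner. Fix $\K$ and suppose a $\K$-discretized policy $\pi_\K$ stabilizes the system. Since $\pi_\K$ uses only job types, $\{q(t)\}_{t\ge 0}$ is a CTMC on $\N^{\Gamma_\K}$, and we restrict to the communicating class containing the empty state. Positive recurrence of the empty state yields a stationary distribution $p$ on this class. Each state $q$ is mapped by $\pi_\K$ to a service option $M(q)\in C_\K$ with $M(q)\preceq q$, because a discretized policy serving actual jobs can only serve present jobs. Moreover, the option served by any Markovian policy on the continuous state, once rounded up to types, is admissible in $C_\K$ precisely when it fits under the type-level capacity. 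Here I would verify the definition: a $\K$-discretized policy chooses options in $C_\K$, as stated in \cref{ssec: Auxiliary Notations}.

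The key step is the rate-balance identity. Set $\bar\beta := \sum_q p(q)\,\delta_{M(q)}$, a distribution over $C_\K$ whose induced measure is $\bar\eta(\vi) = \E_p[M^{(\vi)}(q)]$. In stationarity the departure rate of type-$\vi$ jobs equals the arrival rate: $\E_p[M^{(\vi)}(q)] = \Lambda_\K(\vi)$. I would justify this by applying the drift identity to $q_\vi$ itself, using stationarity together with $\E_p[M^{(\vi)}]\le K_m<\infty$. Alternatively, the renewal-reward argument over regeneration cycles at the empty state gives the same conclusion and avoids integrability concerns about $\E_p[q_\vi]$.

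Next we obtain strict dominance from the empty state. Let $\epsilon_0 := p(\vec 0)>0$; at the empty state $M(\vec 0)=0$. Define $\beta_\K := \mathcal L(\bar\beta \mid q\neq \vec 0)$, that is, put mass $p(q)/(1-\epsilon_0)$ on $M(q)$ for each $q\neq\vec 0$. Then $\eta^S_\K(\vi) = \Lambda_\K(\vi)/(1-\epsilon_0) = (1+\epsilon)\Lambda_\K(\vi)$ with $\epsilon = \epsilon_0/(1-\epsilon_0)$. This is strict for every $\vi$ with $\Lambda_\K(\vi)>0$. Types with zero arrival rate need separate handling: they require $\eta^S_\K(\vi)>0$ for strict inequality \eqref{equ: condition discrete dominance mu and nu}. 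I would mix in a small mass $\theta$ on the singleton options $\{\vi\}$ for such $\vi$, which lie in $C_\K$, and rescale. Choosing $\theta<\epsilon/(1+\epsilon)$, so that $(1-\theta)(1+\epsilon)>1$, preserves strictness elsewhere. The alternative, consistent with the proof of \cref{lem: suff and necc cond for stability}, is to note that these types are ignored anyway.

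I expect the main obstacle to be the rate-balance identity: making rigorous that stationary throughput equals arrival rate for each type. The subtlety is that $\E_p[q_\vi]$ need not be finite, so a naive drift argument on $q_\vi^2$ is unavailable. I would use the regenerative (cycle) argument with finite expected cycle length from positive recurrence. Over a cycle, arrivals of type $\vi$ equal completions, and the expected number of type-$\vi$ arrivals per cycle is $\Lambda_\K(\vi)$ times the expected cycle length. Completions equal the integral of $M^{(\vi)}(q(t))$ over the cycle, since durations are $\mathrm{Exp}(1)$ and memoryless. Dividing by the expected cycle length gives the identity.
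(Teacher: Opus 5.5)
Your proposal is correct and follows the paper's proof. Both arguments push the stationary distribution of the discretized CTMC forward to a distribution over $C_\K$, use the stationary rate balance $\E[M] = \Lambda_\K$, and condition away the positive-probability empty state to gain the factor $1/(1-\epsilon_0)$. Your regenerative-cycle justification of rate balance, your restriction to the communicating class of $\vec 0$, and your mixing-in of singleton options $\{\vi\}$ for zero-arrival-rate types supply details the paper only asserts or skips; in particular, the paper's concluding $\succ$ is not strict on coordinates with $\Lambda_\K(\vi)=0$.
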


\begin{proof}
    Assume that the queueing system is stable under some $\K$-discretized policy $\pi$, i.e., the empty state of the CTMP $\emptyset \in \mathcal S$ is positive recurrent. Then considering the CTMC $\{q( t )\}_{t\geq 0}$, we have that the empty state $ \vec {0}:=(0)_{\Gamma_{\K}}$ is also positively recurrent. Because the CTMC is irreducible, it is positive recurrent and thus ergodic. Hence, an ergodic distribution over the state space $\N^{\Gamma_\K}$ induces a stationary distribution $\beta'_\K$ over the set of all schedulable service options $\pi(\N^{\Gamma_\K}) \subset C_{\vec K}$, such that 
    \begin{enumerate}
        \item The probability $\epsilon$ that the server serves no job is identical to the ergodic measure of the empty state $\vec {0}$, which is positive: $\epsilon = \beta'_\K({\vec  0}) >0.$  
        \item The service rate vector under $\beta'_\K$ matches the arrival rate vector such that the equilibrium is achieved: $\E^{\beta'_\K} M = \Lambda_{\vec K}$.
    \end{enumerate}
    Therefore, when we let $\beta_\K$ be the distribution of service options over $C_\K$ such that $\beta(M) = \frac{ \beta'_\K (M)}{1-\epsilon}$ for all $M\neq \vec 0 $, the measure dominance \eqref{equ: condition discrete dominance mu and nu} holds via the following inequality:
    \begin{align*}
        &\sum\limits_{M\in C_{\vec K}}  \beta'_\K (M) M  = \sum\limits_{M\in C_{\vec K}\setminus \vec  0}  \beta'_\K (M) M = (1-\epsilon)\sum\limits_{M\in C_{\vec K}\setminus \vec {0}}\frac{ \beta'_\K (M)}{1-\epsilon}  M = \Lambda_{\vec K}\\
        \Rightarrow & \sum\limits_{M\in C_{\vec K}\setminus \vec {0}} \beta_\K(M) M =  \sum\limits_{M\in C_{\vec K}\setminus \vec {0}}\frac{\nu(M)}{1-\epsilon}  M = \frac{1}{1-\epsilon} \Lambda_{\vec K} \succ  \Lambda_{\vec K}. \quad\quad\quad \quad\quad\quad \quad\quad\quad \qedhere
    \end{align*}
\end{proof}

\begin{remark}[Sufficient and Necessary Condition for $\K$-MW]
    \label{rmk: suff and necc cond}
    \cref{lem: suff and necc cond for stability} shows that the discrete measure dominance condition \eqref{equ: condition discrete dominance mu and nu} is sufficient for the stability of $\K$-MW. \cref{lem: equivalent suff and neec conds} shows that the convex hull condition \eqref{equ: condition Lambda in convex hull of C_K} is an equivalent condition of \eqref{equ: condition discrete dominance mu and nu}, for the stability of $\K$-MW. \cref{lem: nec cond of stability} shows either conditions \eqref{equ: condition discrete dominance mu and nu} and \eqref{equ: condition Lambda in convex hull of C_K} are also necessary.
\end{remark}

\cref{thm: vertices KEMW and KEnMSR}, the throughput-optimality of the X $\K$-EMW and X $\K$-EnMSR families given Lipschitz continuous p.d.f. $f_V$ of the resource requirement $V$, can be proven directly using \cref{lem: suff and necc cond for stability,lem: Lipschitz condition induces discretized dominance,lem: Chen 2024 analyzing T-O nMSR,lem: nec cond of stability}, and \cref{rmk: suff and necc cond}.

\begin{reptheorem}{thm: vertices KEMW and KEnMSR}[Throughput-optimality of X $\K$-EMW and X $\K$-EnMSR]
    In the continuous MRJ setting, given the same conditions as in \cref{thm: T-O of K-MW given Lipschitz condition}, X $\K$-EMW and X $\K$-EnMSR are both throughput-optimal. Moreover, for any $\K \in \N^d$ fixed, $\K$-MW, X $\K$-EMW, $\K$-nMSR, and X $\K$-EnMSR are all  stable for the same set of arrival rates.
\end{reptheorem}

\begin{proof}[Proof of \cref{thm: vertices KEMW and KEnMSR}]
    Recall that $E^X_\K$ is the set of extreme vertices of $C_\K$. Therefore, $E^X_\K$is the set of the extreme points of the polyhedron $ConvH(C_\K)$, and $ConvH(C_\K) = ConvH(E^X_\K)$. 
    
    By the convex hull condition \eqref{equ: condition Lambda in convex hull of C_K} provided in \cref{lem: equivalent suff and neec conds} for the stability of the system, we conclude the sufficient and necessary condition for the stability of X $\K$-EMW coincides with \eqref{equ: condition discrete dominance mu and nu}. Hence, X $\K$-EMW and $\K$-MW have identical stability regions. The throughput-optimality of X $\K$-EMW is then implied directly from the throughput-optimality of $\K$-MW by \cref{thm: T-O of K-MW given Lipschitz condition}.

    We note that the stability condition of $\K$-nMSR \eqref{equ: T-O nMSR Chen} in \cref{lem: Chen 2024 analyzing T-O nMSR} is equivalent to the stability condition \eqref{equ: condition Lambda in convex hull of C_K}. Also because $ConvH(C_\K) = ConvH(E^X_\K)$, we conclude that X $\K$-EnMSR has the same stability region as $\K$-nMSR. The throughput-optimality of X $\K$-EnMSR likewise holds.
\end{proof}

\begin{remark}[XP $K$-EMW and XP $K$-EMW-B]
    \label{rmk: XP K-EMW and XP K-EMW are TO}
    Recall that in \cref{sssec: Efficient discretized MaxWeight}, we introduced the Pairwise Extreme-vertices (XP) efficient set $E^{\mathrm{XP}}_K$ defined by \eqref{equ: XP efficient set}. 
    For any distinct $M_1,M_2\in \tilde C_{K/2}$, the service option $M_1+M_2$ can be written as a nontrivial convex combination of two distinct service options $2M_1,2M_2\in C_K$:
    $M_1+M_2=\frac{1}{2}(2M_1)+\frac{1}{2}(2M_2).$ Therefore, we guarantee that $E^{\mathrm{XP}}_K$ removes only service options that are not extreme points of $C_K$. Hence, $E^X_K \subset E^{\text{XP}}_K \subseteq  \tilde C_K \subseteq C_K$, and $ConvH(E^{\text{XP}}_K) = ConvH(C_K)$.
    By \cref{lem: equivalent suff and neec conds}, the XP $K$-EMW family is throughput-optimal in the MSJ setting when the resource requirement distribution has a Lipschitz continuous p.d.f. In addition, by \cref{cor: Backfilling Maintains Stability}, we also ensure the throughput-optimality of the XP $K$-EMW-B family under the above condition.
    Because no efficient algorithm is known for constructing $E^X_K$ \cite{shlyk2023number}, $E^{\mathrm{XP}}_K$ provides a tractable proxy for $E^X_K$ in practice. Empirically, the ratio $|E^{\mathrm{XP}}_K|/|C_K|$ also converges to zero as $K$ increases.
\end{remark}

\subsection{Necessary Stability Condition for an MRJ System}
\label{ssec: Necessary Stability Condition for an MRJ System: superset}
The following lemma provides an upper bound for the stability region of any policy. We will use this lemma to prove \cref{thm: 2DK-EMW decreasing,thm: 2JK-EMW Unif} in the \cref{ssec: Proof of Decreasing 2D Version,ssec: Proof of Unif 2J Version}.

\begin{lemma}
    \label{lem: MRJ stability region subset}
    Consider an MRJ system with resource requirement distribution $V = (V_1,...,V_d)$. We let $m$ be the resource type that has the largest mean requirement, i.e., $m: = \arg\max\limits_{1\leq j \leq d} \E V_j$.
    Then the stability region for any policy is a subset of 
    \begin{equation}
        \label{equ: stability region super set}
        \{ \lambda>0: \lambda \E V_m <1\}.
    \end{equation}
\end{lemma}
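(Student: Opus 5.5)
We will argue by contradiction via a work-conservation (resource-usage) argument on the single resource $m$. Fix any Markovian policy $\pi$ under which the empty state is positive recurrent, and suppose $\lambda \E V_m \geq 1$. The idea is that resource $m$ is a bottleneck: each job of requirement $\vec v$ occupies $v_m$ units of resource $m$ for an $Exp(1)$ duration, so on average it brings $\E V_m \cdot \E D = \E V_m$ units of ``resource-$m$ work''. Work arrives at rate $\lambda \E V_m$. By \cref{def: Markovian policy}, any feasible service option $x'$ satisfies $\sum_{\ell} (x'_\ell)_m \leq 1$, so resource-$m$ work is removed at rate at most $1$.

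To make this precise, we introduce the Lyapunov-type functional $W_m(x) := \sum_{n=1}^{N} (\vec v_n)_m$, the total resource-$m$ requirement present in state $x$. Since durations are exponential and the policy does not observe remaining durations, this is a valid function of the state. Its drift under $\pi$, computed as in \eqref{equ: drift of V}, is
\[
\Delta_\pi \circ W_m(x) = \lambda \E V_m - \sum_{r \in \pi(x)} r_m \;\geq\; \lambda \E V_m - \one\{x \neq \emptyset\}.
\]
The inequality holds because a served job with requirement $r$ completes at rate $1$ and removes $r_m$, the capacity constraint bounds $\sum_{r\in\pi(x)} r_m \le 1$, and nothing is served in the empty state.

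Next we take stationary expectations. If $\pi$ is stable, there is a stationary distribution $\beta^\pi$ with $\pro(\emptyset) = \epsilon_0 > 0$, exactly as used in the proof of \cref{lem: Lipschitz condition induces discretized dominance}. Rate conservation (equivalently, the flow balance $\eta^A = \eta^\pi$ used there, integrated against $x \mapsto x_m$) gives
\[
\lambda \E V_m = \E^{\beta^\pi}\Big[\sum_{r\in\pi(x)} r_m\Big] \leq 1-\epsilon_0 < 1,
\]
which contradicts $\lambda \E V_m \ge 1$. Hence every stable $\lambda$ lies in the set \eqref{equ: stability region super set}. The choice of $m$ as the argmax only gives the sharpest of the $d$ resulting constraints; the same argument applies to every coordinate.

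The main obstacle is justifying the balance identity, namely that in a stable system the long-run rate of resource-$m$ work served equals the arrival rate of such work. $W_m$ is unbounded, so we cannot simply set the expected drift to zero. We would handle this through regenerative cycles at the positive-recurrent empty state. Over a cycle, the total resource-$m$ work arriving equals the total served, because the cycle starts and ends empty. Each arriving job's contribution is $(\vec v_n)_m D_n$, with expected value $\E V_m$ by independence of $V$ and $D$. Wald's identity, using that the expected cycle length and hence the expected number of arrivals per cycle are finite, then gives equality of rates. The empty-state idle fraction $\epsilon_0>0$ then supplies the strict inequality. As an alternative route, if $\lambda \E V_m > 1$ the drift is uniformly positive, so $W_m\to\infty$ and the empty state is transient. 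The boundary case $\lambda \E V_m=1$ is exactly where the regenerative/idle-fraction argument is needed.
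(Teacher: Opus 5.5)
Your proposal is correct and is essentially the paper's argument. It uses the same resource-$m$ workload test function, the same drift $\lambda\E V_m-\sum_{r\in\pi(x)} r_m$, and the same use of $\pro(X=\emptyset)>0$ to make the final inequality strict. The only difference is the balance step: the paper obtains zero stationary drift by citing Glynn and Zeevi, whereas your regenerative-cycle and Wald argument proves the rate-in-equals-rate-out identity directly, which is more self-contained because it needs no integrability of $W_m$ under the stationary law.
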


\begin{proof}
    We choose the test function $w_m: \mathcal S \rightarrow \R_{+,0}$ to be the total requirement of the type-$m$ resource in the system, i.e., workload of type-$m$ resource. Formally,  when the state is given by $x = (\vec v_n)_{1\leq n\leq N}$, $\phi(x) = \sum_{n=1}^N \vec v_n(m)$. Assume, for the sake of contradiction, that when $\lambda \geq \frac{1}{\E V_m }$, there exists a policy $\pi$ stabilizing the system and resulting in the steady state distribution $X$. Then the ergodic probability at the empty state $\pro(X = \emptyset) >0$.
    
    According to \citet{glynn2008bounding}, at the steady state $X$, we have zero expected drift of $w_m(X)$ under the policy $\pi$:
    \begin{equation}
        \label{equ: steady state drift is zero}
        \E [\Delta_\pi \circ w_m(X)] = 0.
    \end{equation}
    Let $H$ denote the stationary total resource requirement in service. As a result, the rate of decrease of $w_m(x)$ is $H$, where we recall that jobs have duration distribution $\text{Exp}(1)$. On the other hand, the rate of increase of $\phi$ due to arrivals is $\lambda \E V_m$. Therefore, we can express the expected drift explicitly, such that
    \[
    \E [\Delta_\pi \circ w_m(X)] = \lambda \E V_m - \E[H|X\neq \emptyset]\pro(X = \emptyset).
    \]
    Because $H\leq 1$, 
    \[\E [\Delta_\pi \circ w_m(X)] > \lambda \E V_m - 1 \geq 0,\]
    which contradicts the fact of zero expected drift \eqref{equ: steady state drift is zero}. Therefore, we conclude that the stability region of any policy is a subset of $\{\lambda:  \lambda \E V_m  <1\}$.
\end{proof}

\subsection{Proof of \cref{thm: 2DK-EMW decreasing}: Throughput-Optimality of the 2-Bucket $K$-EMW and $\K$-EnMSR Families}
\label{ssec: Proof of Decreasing 2D Version}

We now prove \cref{thm: 2DK-EMW decreasing}, the throughput-optimality of the $K$-EMW and $K$-EnMSR families given $f_V$ is weakly decreasing in $(0,1]$. 

\begin{reptheorem}{thm: 2DK-EMW decreasing}[2B $K$-EMW and 2B $K$-EnMSR]
    In the continuous MSJ setting with the weakly decreasing density function of $V$, the 2B $K$-EMW and 2B $K$-EnMSR families are throughput-optimal. In particular, for any $\lambda <\frac{1}{\E V}$, there exists some $K$ such that the queuing system is stable under 2B $K$-EMW and under 2B $K$-EnMSR. In particular, it is enough to let $K=2^L$, where $L = \lfloor -\log_2\lpar \frac{1}{\lambda}-\E V \rpar \rfloor+1$.
\end{reptheorem}

\begin{proof}[Proof of \cref{thm: 2DK-EMW decreasing}]
    By \cref{lem: MRJ stability region subset}, note that only if $\lambda < \frac{1}{\E V}$ can a system be stabilized by some policy.
    We then move to show that for any $\lambda < \frac{1}{\E V}$, we can always find some $K = 2^L$ such that 2B $K$-EMW is stable.

    After we fix some integer $L$ such that $K = 2^L$, we conduct a $K$-discretization as derived in \eqref{equ: K-partition}. We recall that in \cref{sssec: Efficient discretized MaxWeight}, the service option is indexed by the type of the larger job it attempts to serve:
    \begin{itemize}
        \item For $k = 2^\ell\in [K]$ that is a power-of-two integer (including $k=1$), we include the service option that serves $2^{L-\ell}$ type-$k$ jobs simultaneously. Specifically, $M_{2^\ell} = \{(2^\ell, 2^{L-\ell})\}$.
        \item For $k \in [K]$ that is not a power-of-two integer, let $\ell:=\ell(k) = \lceil \log_2 j \rceil$. We include the service option that serves $2^{L-\ell}$ type-$k$ jobs and $2^{L-\ell}$ type-$(2^\ell-k)$ jobs simultaneously. Specifically, $M_k = \{(k,2^{L-\ell}),(2^\ell - k, 2^{L-\ell})\}$.
    \end{itemize}
        
    Our remaining task is to assign appropriate probability mass $\beta_{k}:=\beta(M_{k})$ to each $M_{k}$, such that the corresponding discrete service measure $\eta_K^S$ satisfies $\eta_K^S( k ) >\eta^A_K( k )$ for all $1\leq k \leq K=2^L$. As the stability conditions for 2B $K$-EMW and 2B $K$-EnMSR are the same, the discrete measure dominance condition \eqref{equ: condition discrete dominance mu and nu} suffices to show both policies are stable. 

    To establish a service option distribution $\beta_K$, we first construct $L+1$ sequences of nonnegative real numbers $\{p_j^{(\ell)}\}_{1\leq j \leq 2^{L-\ell}}$, where sequences $\ell = 0,\cdots,L$ are constructed iteratively. In round $0$, $\{p_j^{(0)}\}_{1\leq j \leq 2^{L}}$ is defined based on the requirement distribution vector $(p_k)_{1\leq k\leq K}$, where $p_k : = \pro(V\in I_k)$. Then, in the $r$-th round, we construct $\{p_j^{(r)}\}_{1\leq j \leq 2^{L-r}}$ recursively based on $\{p_j^{(r -1)}\}_{1\leq j \leq 2^{L-r +1}}$ when $r \geq 1$. Intuitively, for each round $r$, the arrival probability type-$j$ jobs with $j > 2^{L-r}$ are fully covered by service option $M_k$, and the sequence $p_j^{(r)}$ ($1\leq j \leq 2^{L-r}$) represents the fraction of type-$j$ jobs not covered by service option $M_k$, with $k > 2^{L-r}$. Additionally, in the $r$-th round, for each uncovered job type $j > 2^{L-r-1}$, we note that only $M_j$ is able to serve some type-$j$ jobs, among the remaining service options $\{M_k\}_{k\leq 2^{L-r}}$. Therefore, for each round $0\leq r\leq L$ and each corresponding job type $2^{L-r-1}< j \leq 2^{L-r}$, we assign $M_j$ some mass $\beta_j$ proportional to $p_j^{(r)}$ to cover the remaining fraction of type-$j$ jobs. These proxies of masses $p_j^{(r)}$ then form a sequence of nonnegative numbers $\{p^{(L-\ell(k))}_k\}_{1\leq k \leq K}$. After the construction of this a mass-proxy sequence $\{p^{(L-\ell(k))}_k\}_{1\leq k \leq K}$, we will set the probability mass $\beta_k$ assigned to $M_k$ to be $\frac{(1+\epsilon)\lambda}{2^{L-\ell(k)}} p^{(L-\ell(k))}_k$, for some $\epsilon$ specified later. We will show that the discrete service measure condition induced by our service option distribution $\{\beta_k\}_{1\leq k\leq K}$ is satisfied. Moreover, we will demonstrate that the masses $\beta_{k}$ sum to at most $1$ and hence the $\beta_K$ is a valid probability distribution.

    Now, we begin our $0$-th round by simply letting $p^{(0)}_j = p_j$, for all $1\leq j \leq L$. Because all masses $p_j$ are nonnegative, $p^{(0)}_j \geq 0$ directly holds for each $2^{L-1}<j<2^L$. We recall that we only take $\{p^{(0)}_j\}_{j>2^{L-1} }$ into our mass-proxy sequence $\{p^{(L-\ell(k))}_k\}_{1\leq k \leq K}$, but we will use the whole set $\{p^{(0)}_j\}_{1\leq j\leq 2^L }$ in our following analysis.

    In the $1$-st round, we consider the uncovered job types $1\leq j \leq 2^{L-1}$. We define $p^{(1)}_j := p^{(0)}_{2^L-j} - p^{(0)}_{j}$ be the difference of the complementary pair, where $1\leq j < 2^{L-1}$, and define $p^{(1)}_{2^{L-1}} := p^{(0)}_{2^L-1}$ that copies its value in the previous round. Because $ p^{(0)}_{j} = p_{j}$ is weakly decreasing with respect to its index $j$, we have the following inequality hold for each $1\leq j\leq j' \leq 2^{L-1}$,
    \[ p^{(0)}_{2^L-j} \geq p^{(0)}_{2^L-j'}  \geq  p^{(0)}_{j'} \geq  p^{(0)}_{j},\]
    which implies $p^{(1)}_{j} \geq  p^{(1)}_{j'} \geq 0$. Therefore, we have that $p^{(1)}_{j}$ is also weakly decreasing. Thereafter, in round $r$, the weakly decreasing property does not include the boundary index $2^{L-r}$, because we do not need it to subtract an element in the next round but just copy it, so that its non-negativity always holds. 

    We continue the above iterations, and in the $r$-th round, we have a nonnegative sequence $\{p^{(r-1)}_j\}_{j\leq 2^{L-r+1}}$ obtained in the previous round, of which $\{p^{(r-1)}_j\}_{j< 2^{L-r+1}}$ is weakly decreasing. 
    By defining $p^{(r)}_j := p^{(r-1)}_{2^{L-r+1}-j} - p^{(r-1)}_{j}$ for $1\leq j < 2^{L-r}$, and copying $p^{(r)}_{2^{L-r}} := p^{(r-1)}_{2^{L-r}}$, we obtain a new sequence $\{p^{(r)}_j\}_{1\leq j \leq 2^{L-r}}$. This sequence is weakly decreasing (excluding the boundary element $p^{(r)}_{2^{L-r}}$) and nonnegative, due to the weakly decreasing property of $\{p^{(r-1)}_j\}_{j< 2^{L-r+1}}$. Finally we reach the $(L-1)$-th round and have only $p^{(L-2)}_{2}$ and $p^{(L-2)}_{1}$ from the $(L-2)$-th round. We notice that there is no complementary index pair, and therefore define $p^{(L-1)}_{2} := p^{(L-1)}_{2}$ and directly move to the $L$-th round, where it remains only to define $p^{(L)}_{1} := p^{(L-2)}_{1}$.

    Note that our sequence $\{p^{(L-\ell(j))}_j\}_{1\leq j \leq K}$ use only $K$ elements in $\{p^{(r)}_j\}_{1\leq j \leq 2^{L-r},0\leq r\leq L}$. We let $\epsilon>0$ be a real number whose value will be specified later, and assign the probability mass $\beta_k := \frac{(1+\epsilon)\lambda}{2^{L-\ell(k)}} p^{(L-\ell(k))}_k$ to the service option $M_k$.
    
    Before we show that $\{\beta_k\}_{k=1}^K$ is a valid probability measure, we prove that the discrete measure dominance \eqref{equ: condition discrete dominance mu and nu} holds. For each non-power-of-two integer $k$, the service option that attempt to serve some type-$k$ job(s) are $M_{2^\ell - k}$, $\ell> \ell(k)$, and $M_k$. Service option $M_{2^\ell - k}$ attempts to serve $2^{L-\ell}$ type-$k$ jobs and service option $M_k$ attempts to serve $2^{L-\ell(k)}$ jobs. Therefore, the discrete service measure $\eta^S_K(k)$ satisfies that
    \begin{align}
        \eta^S_K(k) & = 2^{L-\ell(k)}\beta_k  + \sum\limits_{\ell=\ell(k)}^{L} 2^{L-\ell} \beta_{2^\ell - k} \nonumber\\
        & = 2^{L-\ell(k)}\frac{(1+\epsilon)\lambda}{2^{L-\ell(k)}} p^{(L-\ell(k))}_k + \sum\limits_{\ell=\ell(k)}^{L} 2^{L-\ell}\frac{(1+\epsilon)\lambda}{2^{L-\ell}} p^{(L-\ell)}_{2^\ell-k}\nonumber\\
        & = (1+\epsilon)\lambda \Big[ p^{(L-\ell(k))}_k + \sum\limits_{\ell=\ell(k)}^{L} p^{(L-\ell)}_{2^\ell-k}  \Big] \nonumber\\
        & = (1+\epsilon)\lambda p_k \label{equ:iterative relationship} \\
        & > \eta^A_K(k)\nonumber,
    \end{align}
    where \eqref{equ:iterative relationship} is due to the iterative decomposition $p^{(r)}_j = p^{(r-1)}_{2^{L-r+1}-j} - p^{(r-1)}_{j}$. For each power-of-two index $k $, the service option that attempt to serve some type-$k$ job(s) are $M_{2^\ell - k}$, $\ell \geq \ell(k)+1$, and $M_k$. Likewise, using the fact that $p^{(L-\ell(k))}_{k} := p^{(L-\ell(k)+1)}_{k}$ in addition to the above argument, we obtain that $\eta^S_K(k) = (1+\epsilon)\lambda p_k > \eta^A_K(k)$, and conclude discrete measure dominance.

    We next show that there exists $L$, $\epsilon$, such that $\beta_K$ is a valid probability distribution, by arguing that $\sum_{1\leq j\leq 2^L}\beta_j \leq (1+\epsilon) \lambda \E V +o(2^{-L})$. To achieve this, we first claim that the expectation of $V$ can be expressed as
    \begin{equation}
        \label{equ: where does EV come from?}
        \E V = \sum\limits_{j=1}^{2^L} \Big[\frac{\beta_{j}}{(1+\epsilon )\lambda } \sum\limits_{k=1}^{2^L} M^{( k )}_{j} \E[V|V\in I_k] \Big].
    \end{equation}
    Equation \eqref{equ: where does EV come from?} holds because 
    \[ \frac{\beta_{j}}{(1+\epsilon )\lambda } = 2^{-L+\ell(j)} p^{(L-\ell(j))}_{j},  \]
    and $M^{(k)}_{j} = 2^{L-\ell(j)}$ if and only if $M_{j}$ serves at least one jobs in the bucket $k$. Therefore, the R.H.S. of \eqref{equ: where does EV come from?} can be written as 
    \begin{allowdisplaybreaks}[0]
     \begin{align}
        & \sum\limits_{j=1}^{2^L} \Big[p^{(L-\ell(j))}_{j} \sum\limits_{k=1}^{2^L} \one \{M^{( k )}_{j} >0\} \, \E[V|V\in I_k] \Big]\nonumber \\
        &\quad \quad = \sum\limits_{k=1}^{2^L} \bigg[ \Big( \sum\limits_{j=1}^{2^L} p^{(L-\ell(j))}_{j} \one \{M^{( k )}_{j} >0\}  \Big)\, \E[V|V\in I_k]  \bigg] \nonumber\\
        &\quad \quad  = \sum\limits_{k=1}^{2^L} \big(p_{k} \, \E[V|V\in I_k] \big) \label{equ: severing rates in one interval sum up to p}\\
        &\quad \quad = \E V, \nonumber
    \end{align}   
    \end{allowdisplaybreaks}
    where step \eqref{equ: severing rates in one interval sum up to p} is due to the iterative decomposition $p^{(r)}_j = p^{(r-1)}_{2^{L-r+1}-j} - p^{(r-1)}_{j}$ that we also applied to justify \eqref{equ:iterative relationship}, or alternatively, because the summed serving rates at interval $I_k$ is $(1+\epsilon)\lambda p_k$:
    \begin{equation}
        \label{equ: the summed serving rates at interval I_k is p_k}
        \sum\limits_{j=1}^{2^L} p^{(L-\ell(j))}_{j} \one \{M^{( k )}_{j} >0\}  = p^{(k)}.
    \end{equation}

    After decomposing $\E V$, we likewise decompose the summed mass $\sum_{1\leq j\leq 2^L}\frac{\beta_j}{(1+\epsilon) \lambda}$.
    We use the property that for any $j$, the service option $M_{j}$ does not waste any capacity, i.e., $\sum\limits_{k=1}^{2^L} M^{( k )}_{j} \frac{k}{2^L} =1$. Therefore,
    \[
        \sum\limits_{j=1}^{2^L}   \Big[\frac{\beta_{j}}{(1+\epsilon )\lambda } \Big(\sum\limits_{k=1}^{2^L} M^{( k )}_{j} \frac{k}{2^L} \Big)\Big]
         \;  = \; \sum\limits_{j=1}^{2^L} \frac{\beta_{j}}{(1+\epsilon )\lambda }.
    \]
    In addition, $\frac{k-1}{2^L}\leq \E[V|V\in I_k] \leq \frac{k}{2^L}$ means the waste of capacity by up-rounding each job's requirement is controlled: $\frac{k}{2^L}-\E[V|V\in I_k]  \leq 2^{-L}$. This result leads to the following bound:
    \begin{allowdisplaybreaks}[0]
    \begin{align}
        &\sum\limits_{j=1}^{2^L} \frac{\beta_{j}}{(1+\epsilon )\lambda } -  \E V \nonumber\\
        &\quad \quad = \sum\limits_{j=1}^{2^L} \bigg[\frac{\beta_{j} }{(1+\epsilon )\lambda } \sum\limits_{k=1}^{2^L}  \Big[ M^{( k )}_{j}  \Big(\frac{k}{2^L} -\E[V|V\in I_k]\Big) \Big] \bigg]\label{equ: due to lemma where EV}\\
        & \quad \quad \leq \sum\limits_{j=1}^{2^L} \bigg[\frac{\beta_{j} }{(1+\epsilon )\lambda } \sum\limits_{k=1}^{2^L}  \Big( M^{( k )}_{j}  \frac{1}{2^L}  \Big) \bigg] \nonumber\\
        & \quad \quad = \frac{1}{2^L} \sum\limits_{j=1}^{2^L} \Big(\sum\limits_{k=1}^{2^L} \one \{M^{( k )}_{j}) >0 \} \, p^{(L-\ell(j))}_{j} \Big)  \nonumber\\
        & \quad \quad =  \frac{1}{2^L}, \label{equ: sum pjk = sum p1 =1}
    \end{align}
    \end{allowdisplaybreaks}
    where step \eqref{equ: due to lemma where EV} is due to the decomposition of $\E V$ \eqref{equ: where does EV come from?}, and step \eqref{equ: sum pjk = sum p1 =1} is again due to \eqref{equ: the summed serving rates at interval I_k is p_k}.
    
    It now remains to bound the summed mass. By letting $L = \lceil -\log_2\lpar \frac{1}{(1+\epsilon)\lambda}-\E V \rpar \rceil$ (and thus $K = 2^L = \Theta(\frac{1}{1-\lambda \E V})$), we have
    \[\sum\limits_{j=1}^{2^L} \beta_{j} \leq  (1+\epsilon )\lambda  \E V + \frac{(1+\epsilon )\lambda }{2^L} \leq 1. \]
    As $\epsilon$ can be chosen arbitrarily small, we can guarantee a valid service option distribution $\beta_K$ with $L = \lfloor -\log_2\lpar \frac{1}{\lambda}-\E V \rpar \rfloor+1$. Because we have shown that this service option distribution leads to the discrete measure dominance, we finally conclude \cref{thm: 2DK-EMW decreasing}.
\end{proof}

\subsection{Proof of \cref{thm: 2JK-EMW Unif}: Throughput-Optimality of the 2-Job $\K$-EMW and $\K$-EnMSR Families}
\label{ssec: Proof of Unif 2J Version}

We now prove \cref{thm: 2JK-EMW Unif}, the throughput-optimality of the 2J $\K$-EMW and 2J $\K$-EnMSR families when the p.d.f. $f_V$ of the resource requirement distribution $V$ is centrally symmetric with respect to $\frac{1}{2} \vec 1$ and Lipschitz continuous.

\begin{reptheorem}{thm: 2JK-EMW Unif}[2J $\K$-EMW and 2J $\K$-EnMSR]
    In the continuous MRJ setting, assume that the p.d.f. $f_V$ is Lipschitz continuous and centrally symmetric with respect to $\frac{1}{2} \vec 1$. In this case, both the 2J $\K$-EMW and 2J $\K$-EnMSR families are throughput-optimal. To be specific, for any arrival rate $\lambda <2$, there exists some $\K$ such that both 2J $\K$-EMW and 2J $\K$-EnMSR stabilize the system. 
     
    Moreover, if $V \sim Unif(( 0,1]^d)$, any $\K \succeq K \vec 1$ stabilizes the system, where $K$ is the smallest odd integer such that $K \geq \lfloor \frac{2\lambda d}{2-\lambda} \rfloor+1$. Additionally, in the single-resource setting, it suffices to choose an odd $K$ such that $K\geq \lfloor \frac{\lambda}{2-\lambda}\rfloor +1$.
\end{reptheorem}
\begin{proof}[Proof of \cref{thm: 2JK-EMW Unif}]
    By \cref{lem: MRJ stability region subset}, we note that any system that can be stabilized should satisfy $\lambda <2$.
    We next prove that for any $\lambda<2$, we can find some $\K$ such that our 2J $\K$-EMW stabilizes the system.
    Our goal is to construct a service option distribution $\beta_\K$ satisfying discrete measure dominance \eqref{equ: condition discrete dominance mu and nu} for some $\K$, and use \cref{lem: suff and necc cond for stability} to demonstrate the stability given any $\lambda <2$. 
    We start by considering the general Lipschitz and centrally symmetric case, then we prove the special case result when $V$ is uniform. 

    Our proof for the general case follows a similar breakdown into the exceptional and default sets as \cref{lem: Lipschitz condition induces discretized dominance}, but our only exceptional job types are the boundary job types in $\mathcal B^\K$.
    
    We consider $\K = K \vec 1$ for some odd $K$. Recall in \cref{sssec: Efficient discretized MaxWeight} that we label the service option that serves only one boundary-type job $M_\vi$, if the job is in the boundary set $\vi\in \mathcal B^\K$, i.e., $M_\vi = \{\vi\}$. For $\vi \notin \mathcal B^\K$, $M_\vi$ has a different meaning. Apart from the boundary set, we label the service option that serves both one type-$\vi$ job and one type-($\K - \vi$) job simultaneously $M_\vi$, for each $\vi$ such that $1\leq \vi_1 \leq \frac{K-1}{2}$ and $1\leq \vi_m \leq K-1$, where $2\leq m\leq d$. Formally speaking, we let $\mathcal H^\K: = [\frac{K-1}{2}]\times [K-1] \times \cdots \times [K-1]$ denote the set of all job types with type-1 resource requirement at most $\frac{K-1}{2K}$, and let $M_\vi = \{\vi,\K-\vi\}$ for each $\vi \in \mathcal H^\K$. 

    For simplicity, we let $\beta_\vi:=\beta_\K(M_\vi)$ for service option $M_\vi$. We fix some positive $\epsilon$ and begin by assigning the mass $\beta_\vi = (1+\epsilon) \eta^A_\K(\vi)$ to the policy $M_\vi$ for each $\vi \in \mathcal B^\K$, and assign the mass $\beta_\vi = (1+\epsilon) \max\lpar \eta^A_\K(\vi), \eta^A(\K-\vi)\rpar$ for each $\vi \in \mathcal H^\K$. Then the discrete measure dominance \eqref{equ: condition discrete dominance mu and nu} is satisfied.  It remains to show  $\beta_\K$ is a valid probability distribution, i.e., $\sum_{\vi \in \mathcal B^\K \cup \mathcal H^\K} \beta_\vi \leq 1$. We apply \cref{lem: Upper bound for the Supremum of Lipschitz Density Function} using the Lipschitz constant $C$ of $f_V$, so we have $\sup f_V \leq L:= L(C,d)$, and can bound $\eta^S_\K (\mathcal B^\K) = \sum_{\vi \in \mathcal B^\K} \beta_\vi $ such that 
    \[ \eta^S_\K (\mathcal B^\K)\leq L K^{-d} dK^{d-1} =  Ld K^{-1}.\]
    Then, for each non-boundary job type $\vi\in \mathcal H^\K$, we bound the difference in service measure between the two job types served by $M_\vi$. 
    \begin{allowdisplaybreaks}[0]
    \begin{align*}
        |\eta^A_\K(\vi) - \eta^A_\K(\K-\vi)| & = \lambda \left | \int_{I_\vi} - \int_{I_{\K-\vi}}  \right | f_V(x) dx  \\
        & = \lambda \left |  \int_{I_\vi} f_V(x) dx  -   \int_{I_{\K-\vi}} f_V(\vec 1 -x) dx \right |  \\
        & =\lambda \left | \int_{I_\vi} \lpar f_V(x) - f_V(x+K^{-1}\vec 1) \rpar  dx \right | \\
        & \leq  \lambda  \int_{I_\vi} \left | f_V(x) - f_V(x+K^{-1}\vec 1) \right |  dx\\
        & \leq \lambda  \int_{I_\vi} C\sqrt{d}K^{-1} dx \\
        & =  \lambda C\sqrt{d}K^{-d-1},
    \end{align*}
    \end{allowdisplaybreaks}
    As a result, we have $\beta_\vi \leq (1+\epsilon)[ \eta^A_\K(\vi)+ C\sqrt{d}K^{-d-1}]$. Therefore, because the number of non-boundary job types $|\mathcal H^\K| = \frac{1}{2}\lpar \frac{K-1}{K} \rpar^d$, the summed mass is bounded by 
    \begin{align}
        \sum_{\vi \in \mathcal B^\K \cup \mathcal H^\K} \beta_\vi & \leq (1+\epsilon) \lambda \lbr Ld K^{-1} + \sum_{\vi \in \mathcal H^\K} \lpar \frac{1}{\lambda} \eta^A_{\K}(\vi) + C\sqrt{d}K^{-d-1}\rpar \rbr \nonumber\\
        & \leq (1+\epsilon) \lambda \lbr \lpar Ld + \frac{C\sqrt{d}}{2} \lpar \frac{K-1}{K}\rpar^d  \rpar K^{-1} + \frac{1}{\lambda}\sum_{\vi \in \mathcal H^\K}  \eta^A_{\K}(\vi)   \rbr \nonumber\\
        & < (1+\epsilon) \lambda \lbr  (Ld+\frac{C\sqrt{d}}{2})K^{-1} + \frac{1}{2} \rbr. \label{equ: no intersection of HK and its complement}
    \end{align}
    Here, the last inequality \eqref{equ: no intersection of HK and its complement} holds because the two disjoint sets $\mathcal H^\K$ and $\K - \mathcal H^\K$ have identical arrival measure, and their union is a subset of $\Gamma_\K$, which has total arrival measure at most $\lambda$. As a result, $\sum_{\vi \in \mathcal H^\K}  \eta^A_{\K}(\vi)\leq \frac{1}{2}$.
    
    Using the upper bound \eqref{equ: no intersection of HK and its complement}, we choose $K$ to be such that 
    \begin{allowdisplaybreaks}[0]
    \begin{align*}
        & \quad (1+\epsilon) \lambda \lbr (Ld+\frac{C\sqrt{d}}{2})K^{-1} + \frac{1}{2} \rbr \leq 1 \\
        & \Leftrightarrow K \geq  \Big(\frac{1}{(1+\epsilon) \lambda } - \frac{1}{2} \Big)^{-1} \Big( Ld+\frac{C\sqrt{d}}{2}\Big).
    \end{align*}
    \end{allowdisplaybreaks}
    
    This completes the general centrally symmetric case. We now discuss a few special-case results. For the multi-resource uniform setting, $L=1$ and $C=0$. Thus it suffices to let $K \geq \lceil \frac{2d\lambda(1+\epsilon)}{2-(1+\epsilon)\lambda}\rceil$. By letting $\epsilon$ be small enough, we note that $K = \lfloor  \frac{2\lambda d}{2-\lambda}\rfloor+1$ enables discrete measure dominance \eqref{equ: condition discrete dominance mu and nu} to hold.
    
    We then show stability in the uniform single-resource (MSJ) setting. With $M_K = \{K\},M_1 =\{1,K-1\},...,M_{\frac{K-1}{2}}: = \{\frac{K-1}{2}, \frac{K+1}{2}\}$, we let $\beta_K=\beta_1=...,\beta_{\frac{K-1}{2}}= (1+\epsilon)\lambda/K$ for some $\epsilon>0$, so that discrete measure dominance \eqref{equ: condition discrete dominance mu and nu} can be established. It remains to ensure \[\beta_K+\sum\limits_{j=1}^{\frac{K-1}{2}} \beta_j  = (1+\epsilon)\lambda \frac{K+1}{2K}\leq 1 \Rightarrow K \geq \lceil \frac{(1+\epsilon)\lambda} {2-(1+\epsilon)\lambda}\rceil. \] Because $\epsilon$ can be arbitrarily small, it is enough to let $K = \lfloor\frac{\lambda}{2-\lambda} \rfloor+1$.

    Finally, by applying \cref{lem: suff and necc cond for stability,lem: Chen 2024 analyzing T-O nMSR}, the stability of the system is achieved for any $\lambda<2$, respectively under 2J $\K$-EMW and 2J $\K$-EnMSR.
\end{proof}

\section{Empirical Results}
\label{sec: empirical results}

In this section, we use simulations to validate our main results for both parametric distributions and Google Borg trace data \cite{tirmazi2020borg}.
We compare our policies with several index-based benchmarks: FCFS, First-Fit, Best-Fit, and Least-Server-First (LSF). These index-based policies can also be naturally extended to the continuous-resource setting. The detailed definitions of First-Fit, Best-Fit, and LSF are given in Appendix~\ref{sec: Definitions of Some Index-Based Policies}. In particular, we note that LSF is equivalent to naive MaxWeight in the continuous MSJ setting, and very similar in the Google Borg setting. When job resource requirements are continuous and almost surely distinct, each ``job type" contains at most one job. As a result, naive MaxWeight always selects the feasible set containing the largest number of jobs, which is equivalent to batching jobs in increasing order of their resource requirements.

In \cref{ssec: Discretized MaxWeight-typed Policies v.s. index-based Policies: Simple Continuous Resource Requirement}, we simulate MSJ systems with i.i.d. parametric continuous resource requirements, to verify the throughput-optimality of our theoretically-motivated policies. In addition, we demonstrate that with a large discretization parameter $K$, the mean response time of $K$-EMW-B policies is consistently better than all prior index-based policies.

In \cref{ssec: Discretized MaxWeight-typed Policies v.s. index-based Policies: Borg Traces}, we compare our policy with the prior index-based and class-based policies, using resource requirement demand from a Google Borg trace. In this practical setting, although the requirements are not continuous or i.i.d., the performance advantages of our policies generalize beyond our model.

All data and code in this section can be found in the following repository: \url{https://github.com/isaacg1/buckets}.

\subsection{Comparison of Mean Response Time under Parametric Requirement Distribution}
\label{ssec: Discretized MaxWeight-typed Policies v.s. index-based Policies: Simple Continuous Resource Requirement}

We first compare the theoretically-motivated family of policies with index-based policies (FCFS, First-Fit, Best-Fit, and LSF), where the job resource requirements are sampled i.i.d. from some one-dimensional parametric continuous distributions.

We simulate the system with $10^6$ job arrivals. In the simulation, if the queue length ever exceeds $10^4$, or the mean response time exceeds $10^3$, which we represent using dotted lines in our plot, we consider the system to be indistinguishable from unstable via simulation.

Given a known resource requirement distribution $V$, let $D_V$ denote the theoretical stability region, defined as the set of all arrival rates $\lambda > 0$ such that the system is stable under some policy with arrival rate $\lambda$ and distribution $V$. We further define $\lambda_V^* := \sup D_V$ as the stability boundary of $D_V$. In the following experiments, we use the mean response time $\E[T(\lambda)]$ as the performance metric, and visualize them against the system load \[\rho := \frac{\lambda}{\lambda_V^*}.\]

The resource distributions we use in our simulation are uniform (\cref{sssec: Uniform distribution}), truncated normal (\cref{sssec: truncated Normal}), bounded Lomax and triangular (\cref{sssec: Bounded Lomax and Triangle (Weakly Decreasing) Resource Requirement}), and symmetric triangular (\cref{sssec: Sym Tri Resource Requirement}), for which the p.d.f. are constant, centrally symmetric, weakly decreasing, and general Lipschitz, respectively.

\subsubsection{Uniformly Distributed Resource Requirement}
\label{sssec: Uniform distribution}

We first consider the case where job resource requirements follow a uniform distribution on $(0,1]$. Our primary focus is on the 2-Job $K$-discretized Efficient MaxWeight (2J $K$-EMW) policy, introduced in \cref{sssec: Efficient discretized MaxWeight}, which we proved to achieve throughput-optimality in \cref{thm: 2JK-EMW Unif}. In addition, we compare our theoretically-motivated policies with index-based policies, to demonstrate their empirical advantages in reducing the mean response time. Note that the stability boundary is $\lambda_V^* = 2$.

In \cref{fig: uniform-2JEMW}, we compare the mean response times under 2J $K$-EMW, with $K = 4, 8, 16, 32,$ and $64$. Because the policy may leave additional space available, \cref{fig: uniform-2JEMW Back} further compares the performance of 2J $K$-EMW with Backfilling (2J $K$-EMW-B), using the same set of discretization parameters $K$. 

In \cref{fig: uniform-2JEMW}, we observe that for each $K$, the mean response time of 2J $K$-EMW increases when the load $\rho$ grows, and jumps to indistinguishable from unstable. From now on, we call the set of loads for which the policy is demonstrated to be stable the ``empirically-stable" region, which is an empirical lower bound of the theoretical stability region. As $K$ increases, the boundary of the empirically-stable region increases towards $1$, which verifies \cref{thm: 2JK-EMW Unif}. In \cref{fig: uniform-2JEMW Back}, the empirically-stable region of 2J $K$-EMW-B also approaches to $[0,1)$, which is similar to \cref{fig: uniform-2JEMW}.  

In \cref{fig: uniform-2JEMW}, larger values of $K$ result in a worse mean response time until the arrival rate becomes close to the stability boundary. In contrast, in \cref{fig: uniform-2JEMW Back}, when Backfilling is applied, a larger discretization parameter $K$ generally leads to a smaller mean response time. Moreover, comparing the two sets of plots, it is clear that Backfilling consistently yields a smaller mean response time.
We conjecture that this behavior arises because when the load is low and $K$ is large, the system may not always have two complementary jobs. Hence, 2J $K$-EMW may be unable to pack two jobs for joint service, which wastes some capacity. In contrast, Backfilling mitigates this issue, allowing the server to pack non-complementary jobs and possibly additional small jobs. Therefore, in practice, we recommend choosing a larger enough $K$ to ensure stability, as derived in \cref{thm: 2JK-EMW Unif}, but not much larger, and applying Backfilling. As a result, the corresponding policy can achieve stability, computational efficiency, and a better mean response time.

\begin{figure}[ht]
    \centering
    \begin{subfigure}[t]{0.48\linewidth}
        \centering
        \includegraphics[width=\linewidth]{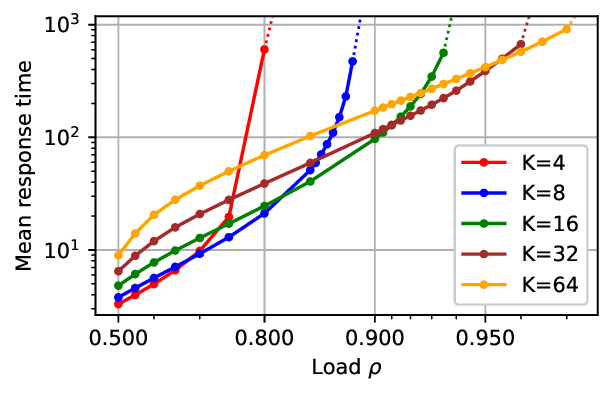}
        \caption{Mean response times under 2J $K$-EMW.}
        \label{fig: uniform-2JEMW}
    \end{subfigure}
    \hfill
    \begin{subfigure}[t]{0.48\linewidth}
        \centering
        \includegraphics[width=\linewidth]{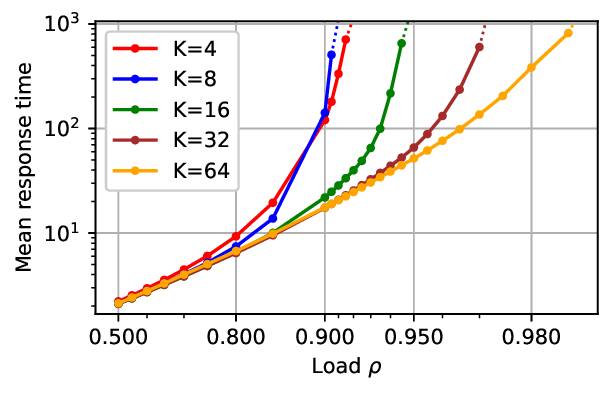}
        \caption{Mean response times under 2J $K$-EMW with Backfilling.}
        \label{fig: uniform-2JEMW Back}
    \end{subfigure}
    \caption{Log-log plots of mean response times as functions of load $\rho$, under 2J $K$-EMW with or without Backfilling, where $K=4,8,16,32,64$ and $V \sim \text{Uniform}((0,1])$. Simulation duration is $10^6$ jobs. }
    \label{fig: Uniform Discretization}
\end{figure}

In \cref{fig: uniform-Comp}, we compare 2J $64$-EMW-B, the best policy in \cref{fig: uniform-2JEMW Back}, with index-based policies (FCFS, First-Fit, Best-Fit, and LSF). The simulation results show that naive FCFS, Best-Fit, and LSF have small empirically-stable regions, whereas First-Fit achieves a stability boundary close to $\rho =1$. Among the policies compared in \cref{fig: uniform-Comp}, 2J $64$-EMW-B achieves the largest empirically-stable region and the least mean response time under high load.

\begin{figure}[ht]
    \centering
    \includegraphics[width=0.6\linewidth]{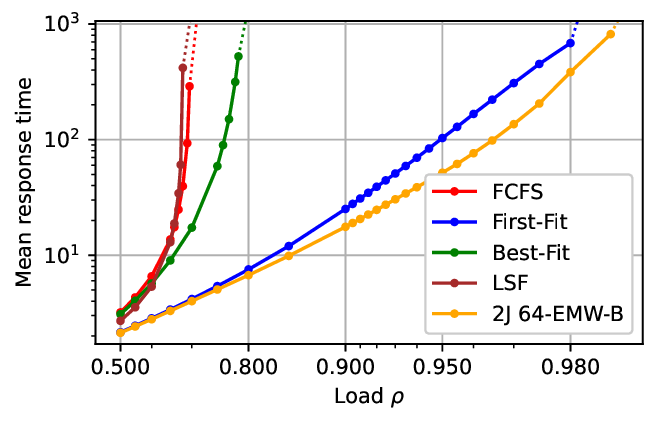}
    \caption{Log-log plot of mean response times v.s. load $\rho$, under theoretically-motivated and index-based policies when $V \sim \text{Uniform}((0,1])$. Simulation duration is $10^6$ jobs. }
    \label{fig: uniform-Comp}
\end{figure}

\subsubsection{Centrally Symmetric (Truncated Normal) Resource Requirement}
\label{sssec: truncated Normal}

We next consider a symmetric but non-uniform requirement distribution, to strengthen our empirical results in a setting of general symmetry. We further compare 2J $K$-EMW-B against $K$-MW-B, to show that choosing an appropriate efficient set helps to achieve both computational efficiency and a better mean response time.

The job resource requirement $V$ follows a truncated normal distribution $T\mathcal{N}(0.5,1)$ supported on $[0,1]$, i.e., $\mathcal{L}(V) = \mathcal{L}(Z | Z \in [0,1])$, where $Z \sim \mathcal{N}(0.5,1)$. Because this distribution is symmetric with mean $0.5$, we have $\lambda_V^* = 2$, and we know that the 2-Job $K$-Efficient MaxWeight (2J $K$-EMW) family is throughput-optimal, according to \cref{thm: 2JK-EMW Unif}.

Our observations from \cref{sssec: Uniform distribution} continue to hold in this centrally symmetric setting: In the absence of Backfilling, larger $K$ leads to worse mean response times until the arrival rate approaches the stability boundary, whereas larger $K$ combined with Backfilling results in uniformly better mean response times.

\begin{figure}[ht]
    \centering
    \begin{subfigure}[t]{0.49\linewidth}
        \centering
        \includegraphics[width=\linewidth]{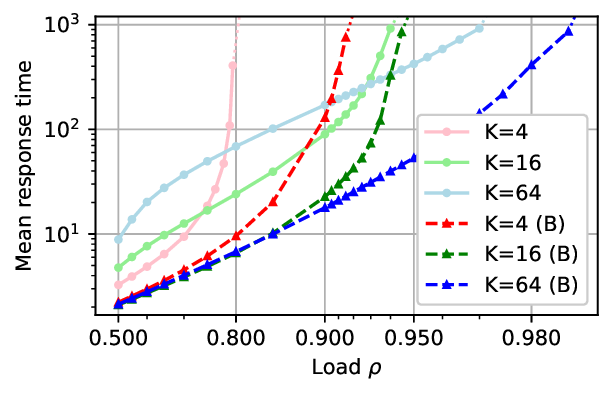}
        \caption{Mean response times under 2J $K$-EMW and 2J $K$-EMW-B. The curves for 2J $K$-EMW-B have labels ending with (B).   }
        \label{fig: TN(0.5,1)-EMW}
    \end{subfigure}
    \hfill
    \begin{subfigure}[t]{0.49\linewidth}
        \centering
        \includegraphics[width=\linewidth]{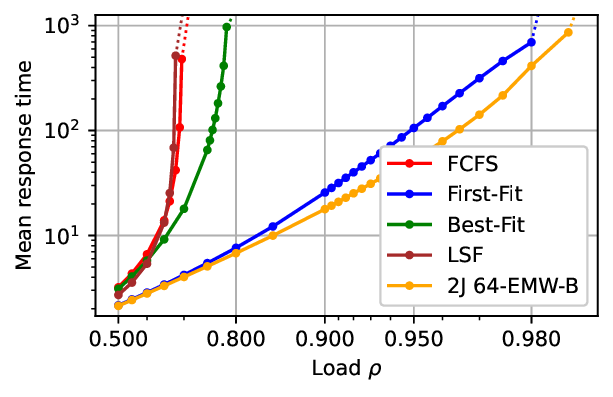}
        \caption{Mean response times under our theoretically-motivated policy and index-based policies.}
        \label{fig: TN-Comp}
    \end{subfigure}
    \caption{Log-log plots for mean response times v.s. load $\rho$ under 2J $K$-EMW with or without Backfilling, where $K=4, 16, 64$ (left), and under index-based policies and the selected theoretically-motivated policy 2J $64$-EMW-B (right).  $V \sim T\mathcal{N}(0.5,1)$. Simulation duration is $10^6$ jobs.}
    \label{fig: TN(0.5,1) Discretization}
\end{figure}

Next, in \cref{fig: TN-Comp}, we compare index-based policies with our theoretically-motivated policies. Consistent with previous results, 2J $64$-EMW-B achieves the largest empirically-stable regions, whereas FCFS, Best-Fit, LSF are indistinguishable from unstable under relatively small loads ($\rho \leq 0.8$). 

First-Fit, though leading to a worse mean response time than $K$-EMW-B, remains stable under high load, and we conjecture that it is stable in the single-resource setting when the resource requirement distribution is centrally symmetric. The reason for our conjecture is that First-Fit is proven to be throughput-optimal in \citet{coffman2001bandwidth}, in a setting close to our central symmetry setting. To be specific, \cite{coffman2001bandwidth} considers the discrete symmetric resource distribution setting in which complementary job types have equal arrival rates.

Finally, we compare the performance of our theoretically-motivated policies using the efficient set $E_K^{2J}$ and the full schedulable set $C_K$, to empirically show the improvement on mean response time when we choose an appropriate efficient set. According to \cref{rmk: E1 in E2 means E2MW also stable}, $K$-MW is stable when the stability condition for 2J $K$-EMW is satisfied, and therefore, the $K$-MW family is also throughput-optimal when $V\sim T\mathcal{N}(0.5,1)$. Hence, in \cref{fig: 8-Comp-MRT,fig: 16-Comp-MRT}, we fix $K=8$ and $K=16$, respectively, and compare the mean response times under $K$-MW, 2J $K$-EMW, $K$-MW-B and 2J $K$-EMW-B. We aim to examine, when stability is guaranteed, whether considering a larger set of service options can further reduce the mean response time.

\cref{fig: 8-Comp-MRT,fig: 16-Comp-MRT} show that for a fixed $K$, when the load is low, $K$-MW attains a better mean response time than 2J $K$-EMW. However, when the load is close to $1$, all policies have similar mean response times, and their empirically-stable regions are almost identical. 
This behavior is possibly because, in the low load case, by considering a larger service option set, we reduce the frequency that some availability is wasted, while when the load is high, there are always many jobs needing to be served, and the waste of availability becomes rare.

In contrast, when Backfilling is applied, 2J $K$-EMW-B achieves a better mean response time than $K$-MW-B. A possible explanation is that 2J $K$-EMW focuses on the most important subset of service options, while Backfilling avoids the failure to pair two jobs for joint service. 

Therefore, if given some $K$, $K$-EMW with some efficient set is sufficient to stabilize the system, we recommend using $K$-EMW-B with this efficient set, but not considering a larger set of service options, in order to achieve a larger empirically stable region and a better mean response time.

\begin{figure}[ht]
    \centering
    \begin{subfigure}[t]{0.48\linewidth}
        \centering
        \includegraphics[width=\linewidth]{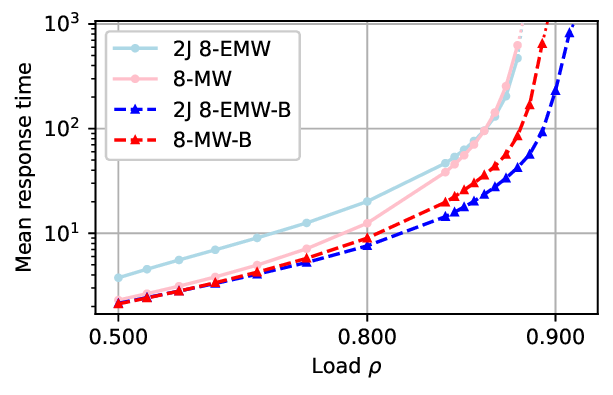}
        \caption{Mean response times v.s. load, $K=8$.}
        \label{fig: 8-Comp-MRT}
    \end{subfigure}
    \hfill
    \begin{subfigure}[t]{0.48\linewidth}
        \centering
        \includegraphics[width=\linewidth]{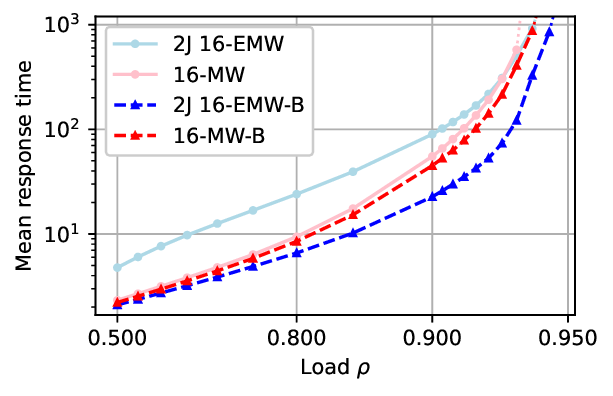}
        \caption{Mean response times v.s. load, $K=16$.}
        \label{fig: 16-Comp-MRT}
    \end{subfigure}
    \caption{Log-log plot of mean response times v.s. load $\rho$, under 2J $K$-EMW, $K$-MW, and their counterparts with Backfilling, when $V\sim T\mathcal{N}(0.5,1)$. Left: $K=8$; Right: $K=16$. Simulation duration is $10^6$ jobs.}
    \label{fig: Comp-MRT}
\end{figure}

\subsubsection{Weakly Decreasing (Bounded Lomax and Triangle) Resource Requirement}
\label{sssec: Bounded Lomax and Triangle (Weakly Decreasing) Resource Requirement}

We then study the performance of 2-Bucket $K$-discretized Efficient MaxWeight (2B $K$-EMW) and 2B $K$-EMW with Backfilling (2B $K$-EMW-B), and verify \cref{thm: 2DK-EMW decreasing}, which states the throughput-optimality of the 2B $K$-EMW family when job resource requirement distribution has decreasing density. Specifically, we consider a bounded Lomax distribution and a triangular distribution. For the bounded Lomax distribution $\text{BLomax}(2,1)$, the shape parameter and the scale parameter are set to $2$ and $1$, respectively, and the support of $V$ is $[0,1]$. The corresponding p.d.f. is
\[
f_V(v) = \frac{8}{3}(1+v)^{-3}, \quad 0 \leq v \leq 1.
\]
For the triangular distribution, we consider a linearly decreasing density over $[0,1]$, given by
\[
f_V(v) = 2 - 2v, \quad 0 \leq v \leq 1.
\]
Both distributions have mean $\E V = 1/3$, and therefore, by \cref{thm: 2DK-EMW decreasing}, $\lambda_V^* = 3$.

In \cref{fig: BLomax,fig: Tri(1)}, we mainly focus on comparing 2B $K$-EMW and 2B $K$-EMW-B with index-based policies. Similar to the observations in \cref{fig: TN-Comp}, our theoretically-motivated policies achieve the largest empirically-stable region, with a boundary close to $1$. The large empirically-stable region of 2B $64$-EMW also verifies \cref{thm: 2DK-EMW decreasing}.

In addition, we observe that First-Fit performs as stably as 2B $64$-EMW-B in each experiment, as in \cref{sssec: truncated Normal}. We therefore further conjecture that, under the assumption of a weakly decreasing resource requirement density, First-Fit is throughput-optimal, extending the results of \citet{coffman2001bandwidth} in the centrally symmetric setting.

\begin{figure}[ht]
    \centering
    \begin{subfigure}[t]{0.48\linewidth}
        \centering
        \includegraphics[width=\linewidth]{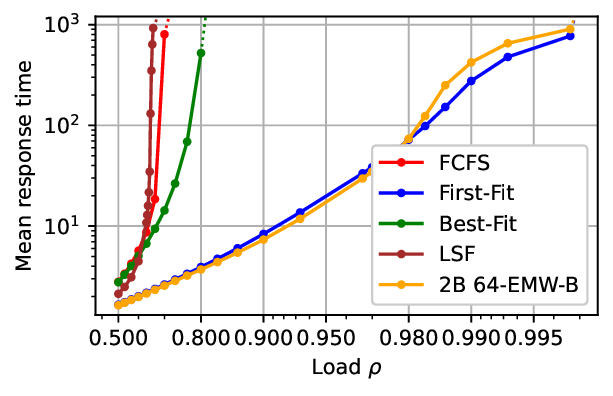}
        \caption{Mean response times when $V\sim \mathrm{BLomax}(2,1)$.}
        \label{fig: BLomax}
    \end{subfigure}
    \hfill
    \begin{subfigure}[t]{0.48\linewidth}
        \centering
        \includegraphics[width=\linewidth]{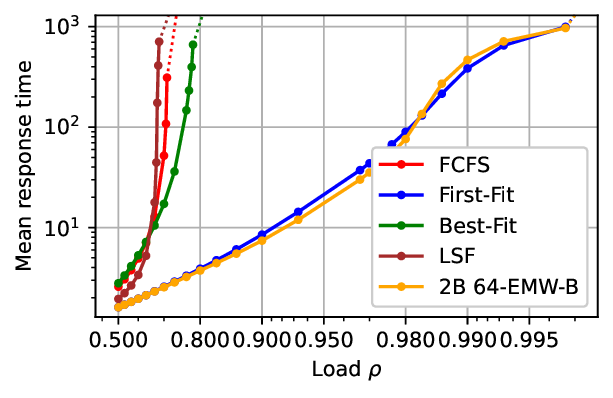}
        \caption{Mean response times when $V$ is triangularly distributed.}
        \label{fig: Tri(1)}
    \end{subfigure}
    \caption{Log-log plots of mean response times v.s. load $\rho$ under theoretically-motivated policies and index-based policies when $V \sim \text{BLomax}(2,1)$ or triangularly distributed, supported on $[0,1]$. Simulation duration is $10^6$ jobs.}
    \label{fig: Weakly Decreasing Discretization}
\end{figure}

\subsubsection{General Lipschitz: Triangular Resource Requirement}
\label{sssec: Sym Tri Resource Requirement}

We next consider a resource requirement distribution, the p.d.f. of which is Lipschitz continuous but neither centrally symmetric nor weakly decreasing, to empirically show that First-Fit is not always throughput-optimal, while our theoretically-motivated policies can achieve stability under higher load. We recall from \cref{thm: T-O of K-MW given Lipschitz condition} and \cref{cor: Backfilling Maintains Stability}, that the $K$-MW and the $K$-MW-B families are both throughput-optimal under this general Lipschitz setting.

We choose the symmetric triangular distribution $\text{SymmetricTri}(\frac{1}{4},\frac{1}{2})$ as our requirement distrbution, whose density is supported in $[\frac{1}{4},\frac{1}{2}]$ and has a shape of an isosceles triangle. Formally, we say $V\sim \text{SymmetricTri}(l,u)$ with lower limit $\ell$ and upper limit $u$ if its p.d.f. $f_V$ satisfies that
\[
f_V(v)=
\begin{cases}
\dfrac{4(v-\ell)}{(u-\ell)^2}, & \ell \leq v \leq \dfrac{\ell+u}{2},\\
\dfrac{4(u-v)}{(u-\ell)^2}, & \dfrac{\ell+u}{2} < v \leq u,\\
0, & \text{otherwise}.
\end{cases}
\]
Because the server can always serve at least two jobs simultaneously, $\lambda^*_{V}\geq 2$. According to \cref{lem: MRJ stability region subset}, we can upper bound $\lambda^*_{V}\leq \frac{1}{\E V} = \frac{8}{3}$. However, the exact stability boundary $\lambda^*_{V}$ is unknown, and hence we do not use the load here. We choose $30$-discretized MaxWeight with backfilling ($30$-MW-B), because $30$ has more different divisors. Empirically, $K$-MW and $K$-MW-B tend to achieve a better performance if $K$ has more different divisors. 

In addition, we evaluate the Pairwise Extreme-vertices $K$-Efficient MaxWeight with Backfilling policy (XP $K$-EMW-B), which serves as a tractable proxy for the Extreme-vertices policy (X $K$-EMW-B), while remaining far more efficient than full MaxWeight.  By \cref{thm: vertices KEMW and KEnMSR} and \cref{rmk: XP K-EMW and XP K-EMW are TO}, both the X $K$-EMW-B and XP $K$-EMW-B families are throughput-optimal. In addition, for $K=30$, XP $30$-EMW-B uses only 980 service options, whereas $30$-MW-B uses 5,604 service options.

\cref{fig: Iso} compares our policies with several index-based policies.  We observe that $30$-MW-B and XP $30$-EMW-B have indistinguishable mean response times for each arrival rate, even though XP $30$-EMW-B uses more than five times fewer service options. 
Moreover, both our theoretically-motivated policies achieve the largest empirically-stable region and the uniformly best mean response time, while all index-based policies become unstable under high load.  These simulation results demonstrate the robustness of $K$-MW-B and XP $K$-EMW-B, which consistently achieve the largest stability region across different resource requirement distribution settings. Moreover, replacing the full service-option set $C_K$ with the smaller efficient set $E^{\mathrm{XP}}_K$ can substantially improve computational efficiency while preserving nearly the identical empirical performance.

\begin{figure}[ht]
    \centering
    \includegraphics[width=0.95\linewidth]{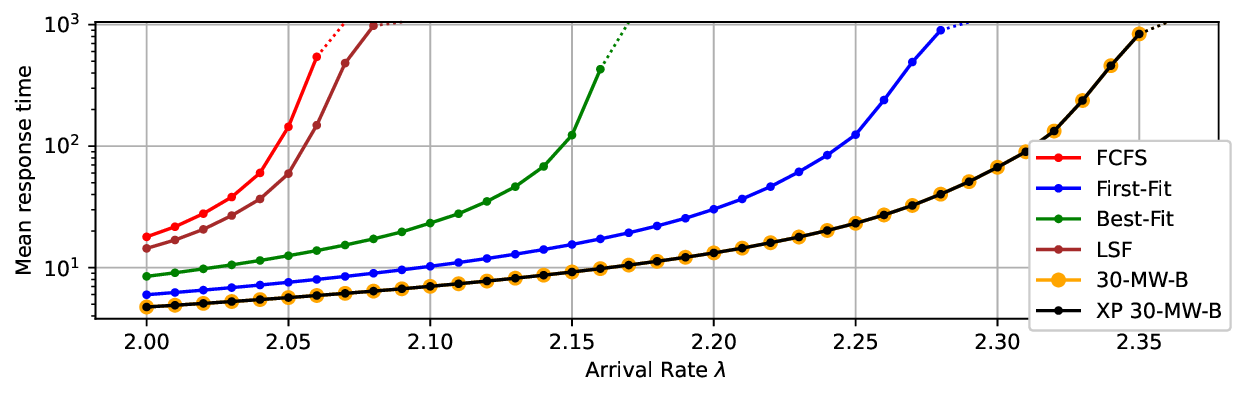}
    \caption{The plot of mean response times under $30$-MW-B, XP $30$-MW-B, and index-based policies v.s. the arrival rate, where when $V \sim \text{SymmetricTri}(\frac{1}{4},\frac{1}{2})$. Simulation duration is $10^6$ jobs.}
    \label{fig: Iso}
\end{figure}

\subsection{Our Theoretically-motivated Policies v.s. index-based Policies: Borg Traces}
\label{ssec: Discretized MaxWeight-typed Policies v.s. index-based Policies: Borg Traces}

We next evaluate the practical performance of our policy using the CPU and RAM requirements from a Google Borg trace \cite{yildiz2024data}\footnote{Data available at the following repository: \url{https://github.com/MertYILDIZ19/Google_cluster_usage_traces_v3_single_cell}}, and show the state-of-the-art performance of our theoretically-motivated policies.

In our first simulation experiment, we consider an MSJ model in which the job arrival order follows the order observed in the Google Borg traces, and the resource requirement of each job is given by its rescaled CPU demand. We rescale the CPU requirements because the system capacity is unknown and the empirical distribution is highly skewed: Only a very small fraction of jobs request relatively large amounts of CPU, whereas the majority require extremely small amounts. As a result, directly using the raw data would require an excessively fine discretization, with most buckets receiving essentially zero arrivals. To address these issues, we discard the largest $10\%$ of the observations and normalize the remaining requirements. In the second simulation experiment, we rescale the RAM requirements in the same way.

We assign each job an independent $Exp(1)$ service time. Arrival times are modeled by a Poisson process $PP(\lambda)$. However, because the resource requirements are not i.i.d. and the resulting system is non-stationary, the stability threshold $\lambda_V^*$ does not apply, and hence we do not define the load in this section.

Because the two requirement distributions are approximately weakly decreasing, the 2B $K$-EMW-B policy is a natural fit. We set $K=64$ and compare 2B $64$-EMW-B with index-based policies. In addition, the inherently discrete value of requirements in the trace motivates us to include the class-based policy in our comparison. However, in these 1 million data points, the number of unique RAM requirements and unique CPU requirements are 3,668 and 536,229, respectively. This large number of classes makes it infeasible to run naive MaxWeight in our simulation. Instead, we propose the following class-based policy to approximate naive MaxWeight, called Pseudo MaxWeight (Pseudo-MW). Recall that naive MaxWeight performs identically to LSF when the resource requirements are continuously distributed. Similar to LSF, Pseudo-MW packs the jobs in increasing order of the ratio of resource requirement and the current number of jobs of that type. In addition to using Pseudo-MW to approximate naive MaxWeight, we further examine whether this policy achieves a similar performance to LSF.

The simulation results in Figures \cref{fig: Borg CPU,fig: Borg Memory} both show that 2B $64$-EMW-B significantly outperforms the index-based policies and Pseudo-MW, achieving substantially smaller mean response times. In addition, Pseudo-MW performs almost identically to LSF, which implies that when the class size is extremely large, class-based policy is not distinguishable from the index-based policies. These results show that our continuously-motivated policies can still exhibit superior performance in more realistic settings, where the resource requirements are not i.i.d., and are discrete with large support.

\begin{figure}[ht]
    \centering
    \begin{subfigure}[t]{0.48\linewidth}
        \centering
        \includegraphics[width=\linewidth]{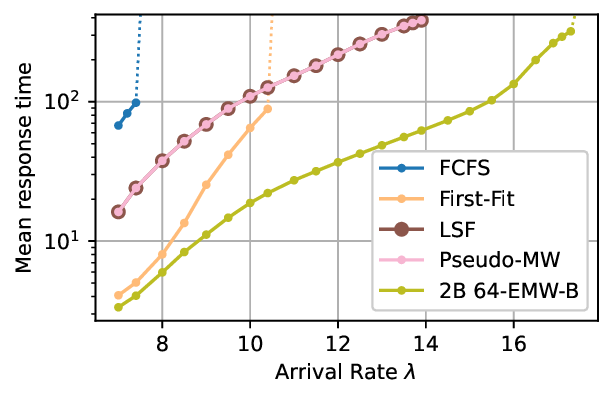}
        \caption{Mean response times (CPU data).}
        \label{fig: Borg CPU}
    \end{subfigure}
    \hfill
    \begin{subfigure}[t]{0.48\linewidth}
        \centering
        \includegraphics[width=\linewidth]{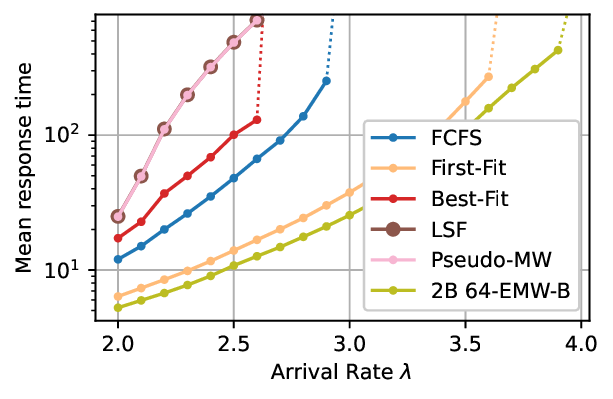}
        \caption{Mean response times (RAM data)}
        \label{fig: Borg Memory}
    \end{subfigure}
    \caption{The plot of mean response times under 2B $64$-EMW-B, Pseudo-MW, and index-based policies v.s. the load, where job resource requirements correspond to the first one million CPU requirements and the first one million RAM requirements from the public Google Borg trace \cite{yildiz2024data}, respectively. Each requirement's data is normalized by the $90\%$ quantile of the population data, and the original data greater than the $90\%$ quantile are excluded. Each simulation duration is the first $10^6$ jobs in the trace.
    In the first experiment corresponding to \cref{fig: Borg CPU}, Best-Fit is indistinguishable from unstable for any arrival rate greater than $7$, and we do not include it in the plot. In both experiments, Pseudo-MW achieves almost identical mean response times to LSF, and their curves are overlapped in both \cref{fig: Borg CPU,fig: Borg Memory}.}
    \label{fig: Borg experiment}
\end{figure}

\section{Conclusion}
\label{sec: conclusion}

We introduce the continuous-resource multiresource-job (MRJ) system.
We show the first throughput-optimality results in this setting, for the  $\K$-discretized MaxWeight family, and a related nonpreemptive policy family.
Moreover, we introduce three efficient variants, which achieve the same throughput optimality with more efficient computation, under distributional assumptions.

We demonstrate via simulation the advantages of our policies when combined with Backfilling, relative to existing index-based policies. Most importantly, we evaluate our policies using Google Borg traces as job requirements. Despite the fact that these traces are not independently distributed, our simulation shows that our 2-Bucket Efficient MaxWeight achieves the best mean response time uniformly over all arrival rates.

A promising direction for future work lies in analyzing a MaxWeight variant where the discretization parameter is chosen dynamically based on the queue length, as well as nonuniform discretization. Another future direction in the single-resource setting is to prove that the throughput-optimality of First-Fit and Best-Fit, when the resource requirement distribution has symmetric or decreasing density \cite{coffman2001bandwidth}.




\backmatter





\bmhead{Acknowledgments}
The first author would like to thank Yuxi Chen, Zhongrui Chen, Cameron Curtis, Ziyuan Wang, and Bryan Yuen for helpful discussions and comments. The authors also thank Alexander Stolyar for the discussion on the centrally symmetric resource distribution setting and for pointing out the related results in \citet{coffman2001bandwidth}.




\section*{Declarations}


\begin{itemize}
\item Funding: This work is supported by a Northwestern IEMS Startup Grant. 
\item Competing interests: The authors have no relevant financial or non-financial interests to disclose.
\item Data and code availability: Our code is available at the following repository: \url{https://github.com/isaacg1/buckets}. For the Google Borg trace data used in our simulation, we refer the reader to the following repository: \url{https://github.com/MertYILDIZ19/Google_cluster_usage_traces_v3_single_cell}.
\item Author contribution: Theoretical analysis and manuscript writing are conducted by Heyuan Yao and Izzy Grosof. Simulation software is created by Willow Kowalik and Izzy Grosof.
\end{itemize}







\bibliography{references}


\begin{appendices}

\section{List of Notation}
\label{sec: List of Notations}
General Setting:

\begin{table}[ht]
\centering
\begin{tabular}{cl}
\hline
Symbol                            & \multicolumn{1}{c}{Description}                                                                 \\ \hline
$\lambda$                         & Arrival rate                                                               \\
$d$                               & The number of resources in the MRJ setting                                    \\
$V$                               & Resource requirement (joint) distribution                                \\
$f_V$                             & Probability density  function of $V$                                      \\
$\eta^A$                           & Arrival measure with density  $\lambda f_V$  \eqref{equ: def of mu^A}                            \\ \hline
\end{tabular}
\end{table}

\newpage
Multiserver (MSJ) setting under $K$-discretization system:

\begin{table}[ht]
\centering
\begin{tabular}{cl}
\hline
Symbol                            & \multicolumn{1}{c}{Description}                                                                 \\ \hline
$K$                               & Discretization parameter              \\
$k$                               & Job type, $1\leq k \leq K$                                  \\
$I_k$                             & The $k$-th bucket $(\frac{k-1}{K},\frac{k}{K} ]$   \eqref{equ: K-partition}                      \\
$\eta^A_K$                         & $K$-discretized arrival measure   on $[K]$                                 \\
$\Lambda_K   \in \R^K$            & Vector of arrival rates of   type-$1$ to type-$K$ jobs                     \\
$C_K\subset   \N^K$               & Set of schedulable service   options \eqref{equ: C_K a-dim}                                       \\
$M\in   C_K$                      & Service option                                                             \\
$M^{(k)}$                         & The number of type-$k$ job(s)   $M$ serves                                 \\
$E_K \subseteq C_K$               & Efficient set (unspecific)                                                 \\
$E^{X}_K$                         & The Extreme-vertices efficient set                                       \\
$E^{XP}_K$                         & The Pairwise Extreme-vertices efficient set ($K$ is even)                                       \\
$E^{2B}_K$                        & The 2-Bucket efficient set                                                 \\
$E^{2J}_K$                        & The 2-Job efficient set                                                    \\
$\beta_K$                           & A probability distribution over $C_K$ (or some $E_K$)                    \\
$\eta^S_K$                         & $K$-discretized service measure on $[K]$, associated with $\beta_\K$        \\ \hline
\end{tabular}
\end{table}

Multiresource (MRJ)  setting under $\K$-discretization system:
\begin{table}[ht]
\centering
\begin{tabular}{cl}
\hline
Symbol                            & \multicolumn{1}{c}{Description}                                                                 \\ \hline
$\K \in \N^d$                     & Discretization parameter         \\
$\vi$                             & Job type                                             \\
$I_\vi$                           & The $\vi$-th bucket \eqref{equ: partition in d-dim}                                                    \\
$\Gamma_\K$                       & Job type set $\{\vec 1 \preceq \vi  \preceq \K\}$                        \\
$\eta^A_\K$                        & $\K$-discretized arrival measure   on $\Gamma_\K$ \eqref{equ： discretized arrival measure}                       \\
$\Lambda_\K  \in \R^{\Gamma_\K}$ & Vector  of arrival rates of each type of job                                \\
$C_\K\subset   \N^{\Gamma_\K}$    & Set of schedulable service options \eqref{equ: C_K d-dim}                 \\
$M\in C_\K$                     & Service option                                                             \\
$M^{(\vi)}$                       & The number of type-$\vi$ job(s)   $M$ serves                               \\
$E_K \subseteq C_K$               & Efficient set (unspecific)                                                 \\
$E^{X}_\K$                        & The Extreme-vertices efficient set                                       \\
$E^{2J}_\K$                       & The 2-Job efficient set                                                    \\
$\beta_\K$                           & A probability distribution over $C_\K$ (or some $E_\K$)                  \\
$\eta^S_\K$                        & $\K$-discretized service measure on $\Gamma_\K$, associated with $\beta_\K$ \eqref{equ: K service measure by M_k}\\ \hline
\end{tabular}
\end{table}

\section{Definitions of Some Index-Based Policies}
\label{sec: Definitions of Some Index-Based Policies}

In this section, we define First-Fit, Best-Fit, and Least-Server-First (LSF), which are the index-based policies used in \cref{sec: empirical results} as simulation benchmarks for our policies.

We first describe First-Fit in the MSJ setting. Whenever the system state changes, First-Fit scans the jobs in the queue according to their arrival order and greedily constructs a feasible set of jobs for joint service. To be specific, suppose that the policy has scanned the first $i$ jobs in this order, and let $x$ denote the total resource requirement of the jobs that have already been selected for service. Because the server has total capacity $1$, the remaining available capacity is $1-x$. If the $i+1$-th scanned job has a requirement $r \leq 1-x$, then this job is selected for service, the total requirement is updated to $x+r$, and the policy moves to scan the next job. Otherwise, the job is skipped, and the policy proceeds to scan the next job without changing the selected set. First-Fit continues this procedure until all jobs in the queue have been scanned. In addition, First-Fit can be naturally extended to the MRJ setting. In this case, each job has a resource-requirement vector, and the server has unit capacity for each resource type. A scanned job is selected for service whenever its requirement for every resource type does not exceed the corresponding remaining capacity.

Likewise, in the MSJ setting, Best-Fit and LSF follow the same greedy packing rule but use different scanning orders. Best-Fit attempts to pack larger jobs first, while LSF prioritizes smaller jobs. Formally speaking, Best-Fit scans the jobs in decreasing order of their resource requirements, whereas LSF scans the jobs in increasing order of their resource requirements. We note that LSF does not need to scan the entire queue. Once the remaining capacity is smaller than the requirement of the current scanned job, any subsequently scanned job cannot fit into the remaining capacity. Hence, LSF can terminate the scan at the first time when a job fails to fit.

\section{Upper bound for the Supremum of Lipschitz Density Function}
\label{appsec: Upper bound for the Supremum of Lipschitz Density Function}
In this section, we derive an upper bound $L(C,d)$ on the probability density function $f$ on $[0,1]^d$ where we assume the Lipschitz constant of $f$ is $C$. Formally speaking, we bound
\[ \sup \left \{ \norm{f}_\infty: f:[0,1]^d\rightarrow \R_+,\; \norm{f}_{Lip} \leq C, \int_{[0,1]^d} f = 1 \right \}.\]

\begin{lemma}[Supremum Bound of Lipschitz Continuous p.d.f. on Unit Cube]
    \label{lem: Upper bound for the Supremum of Lipschitz Density Function}
    For any probability density function $f$ defined on $[0,1]^d$, if $f$ is Lipschitz continuous with Lipschitz constant $C$, then $\norm{f}_\infty \leq L(C,d)$, where 
    \begin{equation}
        \label{equ: choice of M}
        L(C,d): = \lbr\frac{(d+1) \pi^{\frac{d}{2}}(\max \{C, C^*\})^d}{2^d \Gamma(\frac{d}{2}+1)} \rbr^{\frac{1}{d+1}}, \quad \text{ where } C^* = \frac{(d+1) \pi^{\frac{d}{2}}}{2^d \Gamma(\frac{d}{2}+1)}.
    \end{equation}
\end{lemma}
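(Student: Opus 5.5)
The plan is the standard ``tent'' argument. A Lipschitz density cannot have a tall peak without putting more than unit mass under it, so I compare the mass of a cone around the peak with the normalization $\int_{[0,1]^d} f = 1$. Write $\omega_d := \pi^{d/2}/\Gamma(\frac d2+1)$ for the volume of the unit ball. The target is then
\[
L(C,d) = \lbr \frac{(d+1)\,\omega_d}{2^d}\,(C')^d \rbr^{\frac{1}{d+1}}, \qquad C' := \max\{C, C^*\}, \qquad C^* = \frac{(d+1)\,\omega_d}{2^d}.
\]
Since $f$ is Lipschitz with constant $C$, it is also Lipschitz with the larger constant $C'$. So I run the whole argument with $C'$, and the role of $C^*$ is only to guarantee that the relevant cone is small enough to be handled inside the cube.

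First, $f$ is continuous on the compact cube $[0,1]^d$, so $\norm{f}_\infty = f(x_0) =: m$ for some $x_0$. Second, the Lipschitz property gives
\[
f(y) \ge \big(m - C'\norm{y-x_0}_2\big)_+ \quad \text{for all } y,
\]
a cone of height $m$ over the ball of radius $r := m/C'$ centred at $x_0$. Third, I restrict this cone to a region that certainly lies in the cube. Along each coordinate I choose the direction ($+$ or $-$) pointing toward the farther face, which is at distance at least $\frac12$. The resulting closed orthant at $x_0$, intersected with the ball, lies in $[0,1]^d$ as long as $r \le \frac12$. By rotational symmetry of the cone, the integral over that orthant is exactly $2^{-d}$ times the full cone volume, and the full volume is
\[
\int_{\R^d}\big(m - C'\norm{y-x_0}\big)_+\,dy = \frac{\omega_d\, m^{d+1}}{(d+1)\,(C')^d}.
\]
Fourth, comparing with $\int f = 1$ gives an inequality of the form $m^{d+1} \le \mathrm{const}\cdot (C')^d$, and taking the $(d+1)$-th root gives a bound of the shape $L(C,d)$. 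For the complementary case $r > \frac12$, i.e.\ $m > C'/2$, I cap the cone at radius $\frac12$. This truncated cone still lies in the chosen orthant inside the cube, and its height-$m$ region has mass growing linearly in $m$, which contradicts unit mass once $m$ is large relative to $C'$. The threshold $C' \ge C^*$ should be exactly what makes this case either impossible or absorbed into the same formula. I check that at $C' = C^*$ the two cases meet at $m = C^* = L(C^*,d)$.

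The main obstacle is matching the precise constant in \eqref{equ: choice of M}. The orthant computation in the third step naturally produces the ratio $2^d/\omega_d$, giving $m^{d+1} \le (d+1)\,2^d (C')^d/\omega_d$, whereas the statement has the reciprocal ratio $\omega_d/2^d$. My first attempt to reconcile this would be to replace the orthant by the inscribed cube region $x_0 + [-r,r]^d$ intersected with $[0,1]^d$, whose volume bookkeeping produces powers of $2$ in the opposite place. I would then redo the final-step comparison to see which geometric lower bound on the in-cube fraction of the cone yields exactly $\frac{(d+1)\omega_d}{2^d}(C')^d$. If no valid in-cube region produces that ratio, I would record that the argument proves the bound with $2^d/\omega_d$, which is all that is used downstream (only finiteness of $L(C,d)$ in terms of $C$ and $d$ enters \cref{lem: Lipschitz condition induces discretized dominance} and the proof of \cref{thm: 2JK-EMW Unif}), and flag the stated constant as needing correction.
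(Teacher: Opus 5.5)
Your suspicion about the constant is right, and no choice of in-cube region can fix it, because the statement as written is false in every dimension $d\ge 2$. The paper's proof runs the same cone computation you do: it puts the peak at the corner $\vec 0$, takes $2^{-d}$ of a full cone, and correctly reaches
\[
1=\frac{\omega_d\,L^{d+1}}{2^d(d+1)\,C^d},\qquad \omega_d=\pi^{d/2}/\Gamma(\tfrac d2+1).
\]
It then solves this as $L^{d+1}=(d+1)\omega_d C^d/2^d$ instead of $L^{d+1}=2^d(d+1)C^d/\omega_d$. The threshold $C^*$ comes from requiring $L/C\le 1$, so it should be $2^d(d+1)/\omega_d$; it is inverted in the same way. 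For $d=1$ the two ratios agree because $\omega_1=2$, which is why the paper's one-dimensional discussion is fine.

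Here is a concrete counterexample with $d=2$ and $C=100$. Take $f(y)=(m-C\|y\|_2)_+$ on $[0,1]^2$ with $m^3=12C^2/\pi$. Its support is a quarter disc of radius $m/C\approx 0.34$. So $f$ is a $C$-Lipschitz density with mass $\pi m^3/(12C^2)=1$. Yet $\|f\|_\infty=m\approx 33.7$, while $L(100,2)=(7500\pi)^{1/3}\approx 28.7$.

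So skip the inscribed-cube detour and record the corrected constant. As you say, only finiteness of $L$ matters for the throughput-optimality statements. The explicit choices of $K$ in the proofs of \cref{lem: Lipschitz condition induces discretized dominance} and \cref{thm: 2JK-EMW Unif} do change numerically.

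Compared with the paper, your orthant argument is the more rigorous route. The paper only computes the mass of one candidate extremizer and asserts, without proof, that the corner cone is worst. You instead lower-bound the mass around an arbitrary maximizer $x_0$. The cost is that you only control the orthant piece out to radius $\frac12$, and your large-$m$ case is left vague. It does not close with the paper's $C^*$. Also, your case split is at $m=C'/2$ rather than $m=C'$, so the identity $L(C^*,d)=C^*$ is not the relevant consistency check.

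The computation is short. If $m>C/2$, the radius-$\frac12$ orthant piece gives
\[
1\ \ge\ 2^{-d}\int_0^{1/2} d\,\omega_d\,t^{d-1}(m-Ct)\,dt=\frac{\omega_d}{4^d}\Big(m-\frac{d\,C}{2(d+1)}\Big),
\]
and combining this with $C<2m$ gives $m\le (d+1)4^d/\omega_d$. Together with your first case, this proves, for every $C$,
\[
\|f\|_\infty\le \max\Big\{\Big[\frac{2^d(d+1)}{\omega_d}\,C^d\Big]^{\frac{1}{d+1}},\ \frac{(d+1)\,4^d}{\omega_d}\Big\}.
\]
Equivalently, the case $m>C'/2$ is impossible once $C'\ge 2(d+1)4^d/\omega_d$. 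That value can therefore play the role of $C^*$ in a corrected version of \eqref{equ: choice of M}.
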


\begin{proof}
    For any dimension $d$, when $C$ is large enough, the upper bound of $\sup \norm{f}_\infty$ is achieved by a function with maximum at $\vec 0$ and directional derivative $-C$ in all directions.    
    
    In the single-resource setting, the integral of $f$ is the area of a right triangle with side lengths $L$ and $L/C$, i.e., $\frac{L^2}{2C} = 1 \Rightarrow L = (2C)^{\frac{1}{2}}$. Thus $L(C,1) \leq \sqrt{2C}$, given large enough $C$. Specifically, the above holds when $L/C \leq 1$, i.e, $C\geq 2$. Thus $L(C,1) \leq \max (\sqrt{2C},2)$.

    In $d$-dimensional case, the integral of $f$ is the same as $2^{-d}$ the volume of a $d+1$-dimensional circular cone with radius $L/C$ and height $L$, when $d \geq 2$. By applying the surface area formula in 
    $d$-dimensional space, $A(d,r) = \frac{d \pi^{\frac{d}{2}}}{\Gamma(\frac{d}{2}+1)}r^{d-1}$, where $r$ denotes the radius and $\Gamma(\cdot)$ denotes the Gamma function, we have that 
    \begin{align*}
        1 & = \int_0^{L/C} \frac{d \pi^{\frac{d}{2}}}{2^d \Gamma(\frac{d}{2}+1)}r^{d-1} (L-Cr) dr =  \frac{d \pi^{\frac{d}{2}}}{2^d \Gamma(\frac{d}{2}+1)} \frac{L^{d+1}}{d (d+1) C^d } \\
        & \Leftrightarrow L = \lbr\frac{(d+1) \pi^{\frac{d}{2}}C^d}{2^d \Gamma(\frac{d}{2}+1)} \rbr^{\frac{1}{d+1}}.
    \end{align*}
    Note that therefore the discussion above requires $C \geq \frac{(d+1) \pi^{\frac{d}{2}}}{2^d \Gamma(\frac{d}{2}+1)}=: C^*$. Note that $L(C^*,d) = C^*$.
    
    We conclude that $L(C,d)$ can be chosen as $\lbr\frac{(d+1) \pi^{\frac{d}{2}}C^d}{2^d \Gamma(\frac{d}{2}+1)} \rbr^{\frac{1}{d+1}}$, when $C\geq C^*$. When $C<C^*$, we can trivially choose the upper bound $L(C,d) = L(C^*,d) = C^*$.
\end{proof}




\end{appendices}


\end{document}